\documentclass{article}
\usepackage{graphicx} 
\usepackage{amsmath,amssymb, amsthm}
\usepackage[left=4cm, right=4cm, top=3cm, bottom=3cm]{geometry}
\RequirePackage[colorlinks,citecolor=blue,urlcolor=blue]{hyperref}

\usepackage{bbm}
\usepackage{subcaption}
\usepackage{comment}
\usepackage{algorithm2e}
\RestyleAlgo{ruled}
\usepackage{multirow}
\usepackage{natbib}

\usepackage{bookmark}
\usepackage{setspace}
\usepackage{setspace}
\usepackage{siunitx} 
\newcommand{\real}{I\kern-0.37emR}
\newcommand{\bx}{\mathbf{x}}

\newcommand{\bY}{\mathbf{Y}}

\newcommand{\bA}{\mathbf{A}}

\newcommand{\bI}{\mathbf{I}}

\newcommand{\bb}{\mathbf{b}}

\newcommand{\btheta}{\boldsymbol{\theta}}

\newcommand{\hbtheta}{\widehat{\btheta}}

\newcommand{\bSigma}{\boldsymbol{\Sigma}}

\newtheorem{theorem}{Theorem}[section]

\newtheorem{assumption}{Assumption}[section]

\newtheorem{corollary}{Corollary}[section]
\newtheorem{remark}{Remark}[section]

\title{Generalized Spectral Testing with Sample Splitting}
\author{Yuxin Tao$^{1}$ \and Feiyu Jiang$^{2}$ \and Xiaofeng Shao$^{3}$
\thanks{1. Department of Statistics and Data Science, Southern University of Science and Technology; 2. Department of Statistics and Data Science, Fudan University; 3. Department of Statistics and Data Science \& Department of Economics, Washington University in St. Louis.
}}
\date{\today}

\begin{document}

\maketitle
\begin{abstract}Residual-based goodness-of-fit tests for parametric time-series models are often complicated by parameter-estimation effects, which can alter the limiting behavior of diagnostic statistics. We propose a sample-splitting generalized spectral test (in the spirit of \cite{escanciano2006goodness}) for assessing conditional mean specification in linear and nonlinear time-series models. The procedure estimates the model parameter on a fitting subsample and constructs a generalized spectral Cramér–von Mises statistic from residuals computed on a checking/testing subsample. The statistic aggregates pairwise conditional mean restrictions over all lags and is therefore bandwidth-free and free of truncation-lag selection. Under mild regularity conditions and a score-alignment condition, the residual-based process has the same limiting null distribution as the infeasible oracle process based on the true errors.  Although the resulting limiting law is still non-pivotal, it can be consistently approximated by a simple multiplier bootstrap that does not require generating bootstrap time series or re-estimating parameters. Such an oracle-equivalence property is in sharp contrast to the original full-sample test, for which parameter estimation contributes an additional first-order term to the limiting process, and requires re-estimating parameters in each bootstrapped sample. We further establish consistency of the proposed test against fixed alternatives and nontrivial power against local alternatives. Extensive simulations and real data analyses show that the proposed test controls size well, has comparable power, and delivers substantial computational savings in models where repeated estimation is costly.
\end{abstract}
\noindent%
{\it Keywords:} Goodness-of-fit; Sample-splitting; Generalized spectral test; Bootstrap.

\section{Introduction}

Diagnostic checking is a central step in time series analysis. Once a parametric dynamic model has been fitted, any subsequent inference, forecasting exercise, or structural interpretation rests on the maintained assumption that the conditional mean has been adequately specified. If the fitted model leaves systematic predictability in the residuals, parameter estimates may still be numerically stable, but the resulting inferential and forecasting conclusions can be misleading.  This motivates formal goodness-of-fit tests that examine whether the residuals satisfy the martingale difference property implied by a correctly specified conditional mean.

We consider a strictly stationary and ergodic time series $\{(Y_{t}, \mathbf{Z}_{t-1}^{\prime})\}_{t \in \mathbb{Z}}$  defined on the probability space $(\Omega, \mathcal{F}, P)$. Here $Y_t\in\mathbb R$ denotes the response variable, and
$\mathbf Z_{t-1}=(Y_{t-1},\mathbf X_{t-1}')'\in\mathbb R^m$ collects the lagged response and other predetermined explanatory covariates. Let
$\mathbf I_{t-1}=(\mathbf Z_{t-1}',\mathbf Z_{t-2}',\ldots)'$ denote the full information set available at time $t-1$, and let
$\mathcal F_{t-1}=\sigma(\mathbf I_{t-1})=\sigma(\mathbf Z_s:s\le t-1)$  be the $\sigma$-field generated by $\mathbf{I}_{t-1}$. 
Under the integrability of $Y_{t}$, the model can be written as
\begin{align*}
    Y_{t} = m (\mathbf{I}_{t-1})+\varepsilon_{t},
\end{align*}
where $m(\mathbf{I}_{t-1})=E[Y_{t} \mid \mathbf{I}_{t-1}]$ is the conditional mean almost surely (a.s.) given the conditioning set $\mathbf{I}_{t-1}$, and $\varepsilon_{t}=Y_{t}-E[Y_{t} \mid \mathbf{I}_{t-1}]$ is a martingale difference sequence (MDS) with respect to $\mathbf{I}_{t-1}$ as the error term. 

In this paper, we ask whether there exists a parametric family of functions $\mathcal{M}=\{f(\cdot, \boldsymbol{\theta}): \boldsymbol{\theta} \in \Theta \subset \mathbb{R}^{p} \}$, such that
\begin{align} \label{H0}
    H_{0}: m (\mathbf{I}_{t-1}) =f(\mathbf{I}_{t-1}, \boldsymbol{\theta}_{0}), \quad \text{for some }  \boldsymbol{\theta}_{0} \in \Theta,
\end{align}
or, equivalently,
\[
H_0:\quad E\{e_t(\boldsymbol{\theta}_0)\mid \mathbf{I}_{t-1}\}=0\ \text{a.s.},
\]
where $e_{t}(\boldsymbol{\theta})=Y_{t}-f(\mathbf{I}_{t-1},\boldsymbol{\theta})$ denotes the parameter induced model error.

Thus, goodness-of-fit can be formulated as a martingale difference hypothesis for the model errors. This formulation is especially natural in dynamic regression settings, where the relevant null is not merely the absence of linear autocorrelation, but the absence of any remaining conditional mean predictability.

Classical diagnostic tools are often based on residual autocorrelations and portmanteau statistics. These procedures remain attractive because they are simple, familiar, and easy to implement; see, among others, \citet{BoxPierce1970}, \citet{Hosking1980}, and \citet{LiMcLeod1981}. Their scope, however, is limited. Portmanteau-type checks are primarily geared toward linear serial dependence, and their standard reference distributions are often justified under innovation assumptions that are stronger than the null one would actually like to test. In particular, when the residuals are uncorrelated but still dependent, as may happen under conditional heteroskedasticity, the usual chi-squared calibration can fail; see \citet{Romano1996}. This mismatch is important in many nonlinear and heteroskedastic time series models, where the martingale difference null is the appropriate benchmark.

These concerns have led to a rich literature on omnibus specification testing for time series models. \citet{Hong1999} introduced generalized spectral methods for detecting serial dependence through empirical characteristic functions, and \citet{HongLee2005} extended this line of work to conditional mean models with conditional heteroskedasticity of unknown form. Particularly relevant for the present paper is the bandwidth-free generalized spectral test of \citet{escanciano2006goodness}, which provides a Cram\'er-von Mises type diagnostic for linear and nonlinear conditional mean models by integrating a continuum of pairwise moment restrictions. Subsequent work has broadened the scope of these ideas to joint and marginal checks for conditional mean and variance models \citep{Escanciano2008}, as well as to specification testing for conditional distribution models \citep{ChenHong2014}. More recently, \citet{Escanciano2024} has recast model checking in a broader conditional moment restriction framework. In a related direction, \cite{WangZhuShao2022} developed MDDM (Martingale Difference Divergence Matrix) based tests for the martingale difference hypothesis in multivariate time series models, and also provided a detailed discussion about the similarity and distinction between the MDDM-based approach and the one by \cite{escanciano2006goodness}.


A recurring technical difficulty in this literature is the estimation effect. Because the residuals used for diagnostic checking are computed from the same data that are used to estimate the unknown parameter, residual-based empirical processes do not, in general, behave as if the true innovations were observed. For many non-smoothing-based tests, this first-order estimation effect is asymptotically non-negligible, which leads to nonpivotal null distributions and complicates implementation. Existing remedies typically rely either on bootstrap procedures that repeatedly generate pseudo-data and refit the model, or on martingale-transform arguments designed to remove nuisance-parameter effects; see, for example, \citet{KoulStute1999}, \citet{KhmaladzeKoul2004}, and \citet{Bai2003}. While these approaches are theoretically powerful, they may be computationally demanding and can make routine model checking less convenient in practice.

Recent work by \citet{davis2025sample} shows that sample splitting offers an appealing alternative in time series goodness-of-fit problems. They generalize the half-sample splitting device proposed by \cite{durbin1976kolmogorov} from an independent data setting to the time series setting. The idea is to estimate the model parameters on one subsample and to assess goodness-of-fit on another, while allowing for a carefully chosen asymptotic overlap between the two parts of the sample. For diagnostics based on the autocorrelation function and the auto-distance correlation function, they show that the resulting residual-based statistics can have the same asymptotic behavior as in the infeasible oracle case, where the innovations are observed if the split-ratio between the fitting sample length and testing sample length is 1/2. This insight is important, but their framework is tailored to tests of residual serial independence and, in the portmanteau setting, it works under an independently and identically distributed (i.i.d.) innovation paradigm that is stronger than the martingale difference null. From the standpoint of conditional mean adequacy, it is therefore natural to ask whether the same sample-splitting idea can be combined with a diagnostic whose null hypothesis is aligned with the martingale difference property itself.

The goal of this paper is to answer that question. We develop a sample-splitting-based generalized spectral test for the goodness-of-fit of parametric conditional mean models. More specifically, we combine the integrated generalized spectral/Cram\'er-von Mises framework of \citet{escanciano2006goodness} with the sample-splitting strategy of \citet{davis2025sample}. One part of the sample is used for estimation, while a  checking/testing subsample is used to form residuals based on the fitted parameter and then to evaluate pairwise conditional mean restrictions of the form
\[
E\{e_t(\boldsymbol{\theta}_0)w(\mathbf Z_{t-j},\mathbf x)\}=0, \qquad j\geq 1,
\]
using appropriately chosen weighting functions $w(\mathbf Z_{t-j},\mathbf x)$ indexed by lagged explanatory variables $\mathbf{Z}_{t-j}$ and a continuum of values $\mathbf{x}$. In this way, the proposed procedure preserves the omnibus and bandwidth-free nature of generalized spectral diagnostics, while using sample splitting to separate model fitting from model assessment. As in \citet{escanciano2006goodness}, the resulting test targets pairwise conditional mean dependence rather than full joint conditional mean dependence on the entire past.

The paper makes several contributions. First, it develops a general split-sample generalized spectral testing framework for linear and nonlinear time series models with parametric conditional means. Second, under suitable regularity conditions and an overlap condition linking the fitting and checking subsamples, we show that the split-sample generalized spectral process has the same asymptotic null distribution as the corresponding infeasible oracle process based on the true errors. Thus, sample splitting eliminates the first-order parameter-estimation effect from the limiting null law. Third, although the limiting null distribution remains nonpivotal, it can be consistently approximated by a simple multiplier bootstrap applied directly to the split-sample residuals. This avoids generating bootstrap time series and re-estimating the model in each bootstrap replication, and therefore leads to a markedly lighter computational procedure. This is a major difference from \citet{escanciano2006goodness}, and the computational gain can be substantial. Similar to the latter paper, our procedure is also tuning parameter-free. 
Fourth, we establish consistency against fixed alternatives, derive nontrivial power against local alternatives, and verify the required conditions for important model classes, including ARMA and GARCH models. 

The proposed methodology complements several existing strands of literature. Relative to \citet{davis2025sample}, our focus is not residual serial independence per se, but the martingale difference hypothesis implied by a correctly specified conditional mean. Relative to \citet{escanciano2006goodness}, the contribution is not a new generalized spectral metric, but rather a new inferential device that uses sample splitting to simplify the effect of parameter estimation and to facilitate bootstrap implementation.   In this sense, the paper should be viewed as complementary to both earlier MDS testing work and the recent sample-splitting literature.

The rest of the paper is organized as follows. Section \ref{Sec2} proposes the split-sample generalized spectral process and the associated Cram\'er-von Mises statistic. Section \ref{Sec3} develops the asymptotic theory under the null hypothesis, fixed alternatives, and local alternatives. Section \ref{Sec4} presents the multiplier bootstrap approximation algorithm and its theoretical justification. 
Section \ref{Sec5} reports extensive simulation studies on the finite-sample performance of the proposed tests, including linear ARMA models, nonlinear volatility models, and threshold autoregressive models.
Section \ref{Sec6} presents empirical illustrations based on S\&P 500 dynamics and annual Wolf's sunspot data, demonstrating the validity and computational advantage of the proposed tests. 
Section \ref{Sec7} concludes.

\section{Generalized Spectral Tests with Sample Splitting} \label{Sec2}

This section constructs the sample-splitting generalized spectral test for \eqref{H0}.  The population restrictions and their integrated spectral aggregation follow the generalized spectral approach of \cite{escanciano2006goodness}: the null is represented by a continuum of pairwise conditional-mean restrictions, and the test measures whether the corresponding integrated spectral distribution is flat across the frequency index. The implementation, however, is different in an important way. In \cite{escanciano2006goodness}, the parameter is estimated from the full sample, and the same full sample is then used to construct the residual-based test statistic. Here, the parameter is estimated on a fitting subsample, whereas the generalized spectral process is formed from residuals on a checking subsample. 

We first introduce the generalized spectral test in \cite{escanciano2006goodness}. 
Under $H_0$, the model error satisfies $e_t(\boldsymbol{\theta}_0)=\varepsilon_t$ and hence has zero conditional mean given the full past. In particular, for every lag $j\ge 1$,
\begin{align} \label{H0-equiv}
  \gamma_{j} (\boldsymbol{\theta}_{0}) \equiv E [e_{t} (\boldsymbol{\theta}_{0} ) \mid \mathbf{Z}_{t-j} ]=0 
  \text { a.s., } \forall j\geq 1, \text { for some } \boldsymbol{\theta}_{0} \in \Theta \subset \mathbb{R}^{p}.
\end{align}
Thus, a correctly specified conditional mean implies an infinite collection of pairwise restrictions, one for each lagged vector $\mathbf Z_{t-j}$. To make these restrictions operational, let
$\{w(\mathbf Z_{t-j},\mathbf x):\mathbf x\in\Upsilon\subset\mathbb R^s\}$ be a family of bounded weighting functions such that the conditional restriction in \eqref{H0-equiv} is equivalently characterized by
\begin{align}  \label{H0-equiv2}
  \gamma_{j, w} (\mathbf{x}, \boldsymbol{\theta}_{0}) \equiv E[e_{t}(\boldsymbol{\theta}_{0}) w(\mathbf{Z}_{t-j}, \mathbf{x})]=0, \quad 
  \forall x \in \Upsilon \subset \mathbb{R}^{s}, j \geq 1.
\end{align}
As discussed later, usual choices of weight functions $w$ include $w (\mathbf{Z}_{t-j}, \mathbf{x})=\mathbbm{1} (\mathbf{Z}_{t-j} \leq \mathbf{x} )$ with $\mathbf{x} \in[-\infty, \infty]^{m}$, where $\mathbbm{1}(A)$ denotes the indicator function of the event $A$, and $w (\mathbf{Z}_{t-j}, \mathbf{x} )=\exp  (i \mathbf{x}^{\prime} \mathbf{Z}_{t-j} )$ with $\mathbf{x} \in \mathbb{R}^{m}$, where $i=\sqrt{-1}$ is the imaginary unit.

The sequence $\{\gamma_{j,w}(\cdot,\boldsymbol{\theta}_0):j\ge1\}$ measures departures from the pairwise martingale-difference implications across all lags. To aggregate these restrictions, define
$\gamma_{-j,w}(\cdot,\boldsymbol{\theta}_0)=\gamma_{j,w}(\cdot,\boldsymbol{\theta}_0)$ for $j\ge1$ and consider the generalized spectral transformation
\begin{align}
    f_{w} (u, \mathbf{x}, \boldsymbol{\theta}_{0} )=\frac{1}{2 \pi} \sum_{j=-\infty}^{\infty} \gamma_{j, w} (\mathbf{x}, \boldsymbol{\theta}_{0} ) e^{-i j u},
    \quad \forall u \in[-\pi, \pi], ~ \mathbf{x} \in \Upsilon.
\end{align}
Therefore, the tests are based on an integrated Fourier transform of the measures $ \{\gamma_{j, w} (\cdot, \boldsymbol{\theta}_{0} ) \}_{j=1}^{\infty}$. Under $H_0$, all nonzero-lag coefficients vanish, so
$$f_w(u,\mathbf x,\boldsymbol{\theta}_0) \equiv f_{0, w} (\mathbf{x}, \boldsymbol{\theta}_{0} )=(2 \pi)^{-1} \gamma_{0, w} (\mathbf{x}, \boldsymbol{\theta}_{0} ).$$
 Equivalently, the integrated generalized spectral distribution
\begin{align} \label{H_w}
    H_w(\lambda,\mathbf x,\boldsymbol{\theta}_0)
    &\equiv
    2\int_0^{\lambda\pi}f_w(u,\mathbf x,\boldsymbol{\theta}_0)\,du \nonumber \\
    &=
    \gamma_{0,w}(\mathbf x,\boldsymbol{\theta}_0)\lambda
    +2\sum_{j=1}^{\infty}
    \gamma_{j,w}(\mathbf x,\boldsymbol{\theta}_0)
    \frac{\sin j\pi\lambda}{j\pi},
    \quad \lambda\in[0,1],
\end{align}
reduces to 
$\gamma_{0,w}(\mathbf x,\boldsymbol{\theta}_0)\lambda$ under $H_0$. Therefore, any departure from this constant function implies a violation of at least one of the pairwise restrictions.

We now describe the proposed sample-splitting implementation. Suppose the observed sample is
$\{Y_t,\widehat{\mathbf I}_{t-1}\}_{t=1}^n$, where
$\widehat{\mathbf I}_{t-1}=(\mathbf Z_{t-1}',\mathbf Z_{t-2}',\ldots,\mathbf Z_0')'$ is the observed information set and may contain initial values. Let $f_n$ and $l_n$ be two non-decreasing sequences diverging with $n$. The first $f_n$ observations are used for estimation, and the last $l_n$ observations are used for model checking:
\begin{itemize}
    \item[(I)] the first $f_n$ data points, $\{Y_{t}, \widehat{\mathbf{I}}_{t-1} \}_{t=1}^{f_n}$, are the fitting sample to estimate the parameter $\btheta_0$ of the model;
    \item[(II)] the last $l_n$ data points, $\{Y_{t}, \widehat{\mathbf{I}}_{t-1} \}_{t=n-l_n+1}^{n}$, are the checking sample used to calculate the residuals based on the estimated parameter.
\end{itemize}

Let $\hbtheta_{f_n}$ denote the estimator computed from the fitting sample. On the checking sample, define the split-sample residuals
\[
\widehat e_t\equiv \widehat e_t(\hbtheta_{f_n})
=Y_t-f(\widehat{\mathbf I}_{t-1},\hbtheta_{f_n}),
\qquad t=n-l_n+1,\ldots,n.
\]
Note that, for lag $j$, the effective sample size is
$n_j=l_n-j+1.$ 
The empirical lag-$j$ weighted moment is
\begin{align*}
    \widehat{\gamma}_{j, w} (\mathbf{x}, \hbtheta_{f_n} )=\frac{1}{n_{j}} \sum_{t=n-l_n+j}^{n} \widehat{e}_{t} w (\mathbf{Z}_{t-j}, \mathbf{x} ), \quad 1\leq j \leq l_n,\quad  n_{j}=l_n-j+1 .
\end{align*}
Hence, the sample analogue of the generalized spectral distribution function \eqref{H_w} is given by
\begin{align*}
    \widehat{H}_{w} (\lambda, \mathbf{x}, \hbtheta_{f_n} )
= \widehat{\gamma}_{0, w} (\mathbf{x}, \hbtheta_{f_n} ) \lambda+2 \sum_{j=1}^{l_n} \widehat{\gamma}_{j, w} (\mathbf{x}, \hbtheta_{f_n} ) \Big(\frac{n_{j} }{l_n} \Big)^{1 / 2} \frac{\sin j \pi \lambda}{j \pi},
\end{align*}
where $ (n_{j} / l_n )^{1 / 2}$ is a finite-sample correction factor that does not affect the asymptotic theory and delivers a better finite-sample performance of the test procedure. 

Under $H_0$, the target function is
$H_w(\lambda,\mathbf x,\boldsymbol{\theta}_0)
=\gamma_{0,w}(\mathbf x,\boldsymbol{\theta}_0)\lambda$ and thus tests can be based on the discrepancy between $\widehat{H}_{w} (\lambda, \mathbf{x}, \hbtheta_{f_n} )$ and $\widehat{H}_{0, w} (\lambda, \mathbf{x}, \hbtheta_{f_n} ) \equiv \widehat{\gamma}_{0, w} (\mathbf{x}, \hbtheta_{f_n} ) \lambda$. Define  the centered process
\begin{align*}
S_{n,w}(\lambda,\mathbf x,\hbtheta_{f_n})
&\equiv
\left(\frac{l_n}{2}\right)^{1/2}
\{\widehat H_w(\lambda,\mathbf x,\hbtheta_{f_n})
-\widehat H_{0,w}(\lambda,\mathbf x,\hbtheta_{f_n})\}\\
&=
\sum_{j=1}^{l_n}
n_j^{1/2}\widehat{\gamma}_{j,w}(\mathbf x,\hbtheta_{f_n})
\frac{\sqrt2\sin j\pi\lambda}{j\pi}.
\end{align*}
Finally, we aggregate the process over $\Pi=[0,1]\times\Upsilon$ using a Cram\'er-von Mises norm:
\begin{align}
D_{n,w}^2(\hbtheta_{f_n})
&\equiv
\int_\Pi
|S_{n,w}(\lambda,\mathbf x,\hbtheta_{f_n})|^2
W(d\mathbf x)d\lambda \nonumber\\
&=
\sum_{j=1}^{l_n}
\frac{n_j}{(j\pi)^2}
\int_\Upsilon
|\widehat{\gamma}_{j,w}(\mathbf x,\hbtheta_{f_n})|^2
W(d\mathbf x),
\end{align}
where $W$ is an integrating distribution for the index $\mathbf x$. The test rejects $H_0$ for large values of $D_{n,w}^2(\hbtheta_{f_n})$. Because all lags in the checking sample are included with deterministic spectral weights, the procedure does not require choosing a truncation lag or bandwidth.

\section{Asymptotic Theory} \label{Sec3}
This section studies the large-sample behavior of the split-sample generalized spectral process introduced in Section \ref{Sec2}. The main message is that, under a simple score-alignment condition \eqref{key condition} for the data generating process and a half split-ratio condition on  the length of the fitting and checking samples, the residual-based process has the same null limit as the infeasible oracle process that would be computed from the true innovations. We then show that the corresponding Cram\'er-von Mises statistic is consistent against fixed pairwise alternatives and has nontrivial power against local alternatives.

To begin with, we introduce the notation used throughout the asymptotic theory. Let
\[
\mathbf{g}_{t}(\boldsymbol{\theta})
\equiv
\mathbf{g}(\mathbf{I}_{t-1},\boldsymbol{\theta})
=
\frac{\partial}{\partial\boldsymbol{\theta}}
f(\mathbf{I}_{t-1},\boldsymbol{\theta}),
\qquad
w_{t-j}(\mathbf{x})\equiv w(\mathbf Z_{t-j},\mathbf x).
\]
Recall that
\[
\varepsilon_t=Y_t-E(Y_t\mid\mathbf I_{t-1}),
\qquad
e_t(\boldsymbol{\theta})=Y_t-f(\mathbf I_{t-1},\boldsymbol{\theta}),
\]
so that under $H_0$, $e_t(\boldsymbol{\theta}_0)=\varepsilon_t$ a.s. Put
$\Pi=[0,1]\times\Upsilon$ and write
$\boldsymbol{\eta}=(\lambda,\mathbf x')'\in\Pi$.

The process $S_{n,w}(\boldsymbol{\eta},\hbtheta_{f_n})$ is viewed as a random element of the Hilbert space $L_2(\Pi,\nu)$, where $\nu$ is the product of the integrating measure $W$ on $\Upsilon$ and Lebesgue measure on $[0,1]$. For complex-valued functions $f,g \in L_2(\Pi,\nu)$, the inner product is
\[
\langle f,g\rangle
=
\int_{\Pi}f(\boldsymbol{\eta})g^c(\boldsymbol{\eta})\,d\nu(\boldsymbol{\eta})
=
\int_{\Pi}f(\lambda,\mathbf x)g^c(\lambda,\mathbf x)\,W(d\mathbf x)d\lambda,
\]
and $\|f\|=\langle f,f\rangle^{1/2}$. We equip $L_2(\Pi,\nu)$ with the Borel $\sigma$-field generated by this norm and write $\Longrightarrow$ for weak convergence in this space. For a mean-zero $L_2(\Pi,\nu)$-valued random element $Z$ with $E\|Z\|^2<\infty$, its covariance operator is
$C_Z(h)=E[\langle Z,h\rangle Z]$, $h\in L_2(\Pi,\nu)$.

Let
\[
\Psi_j(\lambda)=\frac{\sqrt{2}\sin(j\pi\lambda)}{j\pi},
\qquad
\mathbf b_j(\mathbf x,\boldsymbol{\theta}_0)
=E[w_{t-j}(\mathbf x)\mathbf g_t(\boldsymbol{\theta}_0)],
\]
and define the score-loading function
\[
\mathbf G_w(\boldsymbol{\eta})
\equiv
\mathbf G_w(\boldsymbol{\eta},\boldsymbol{\theta}_0)
=
\sum_{j=1}^{\infty}
\mathbf b_j(\mathbf x,\boldsymbol{\theta}_0)\Psi_j(\lambda).
\]
The covariance of the oracle generalized spectral limit is described through the quadratic form
\begin{align}  \label{sigma_h}
    \sigma_{h}^{2}= & \sum_{j=1}^{\infty} \sum_{k=1}^{\infty} E\Big[\varepsilon_{1}^{2} \int_{\Pi \times \Pi} h(\boldsymbol{\eta}_{1}) h^{c} (\boldsymbol{\eta}_{2}) w_{1-j}^{c}(\mathbf{x}) w_{1-k}(\mathbf{y}) 
\Psi_{j}(\lambda) \Psi_{k}(\varpi) d \nu (\boldsymbol{\eta}_{1}) d \nu (\boldsymbol{\eta}_{2})\Big],
\end{align}
where $h \in L_{2}(\Pi, \nu)$,
$\boldsymbol{\eta}_{1}=(\lambda, \mathbf{x}^{\prime})^{\prime}$, and
$\boldsymbol{\eta}_{2}= (\varpi, \mathbf{y}^{\prime})^{\prime}$.

\subsection{Asymptotic Null Distribution}

The following assumptions collect the necessary regularity conditions.
\begin{assumption}  \label{ass1}
(a). $\{Y_{t}, \mathbf{Z}_{t-1}\}_{t \in \mathbb{Z}}$ is a strictly stationary and ergodic process.\\
(b). $E[\varepsilon_{1}^{2}]<\infty$. 
\end{assumption}

\begin{assumption}   \label{ass2}
    The response function $f(\mathbf{I}_{t-1}, \cdot)$ is twice continuously differentiable on $\Theta$. The score $\mathbf{g}_{t}(\boldsymbol{\theta})$ is stationary, ergodic, and $\mathcal{F}_{t-1}$-measurable. There exists an integrable function $\mathbf{M}_t\in \mathcal{F}_{t-1}$ with $|\mathbf{g}(\mathbf{I}_{t-1}, \boldsymbol{\theta})| \leq \mathbf{M}_t$ for all $\theta\in\Theta$, and that $E\mathbf{M}_t^2<\infty.$
\end{assumption}

\begin{assumption}   \label{ass3}
    (a) The parameter space $\Theta$ is compact in $\mathbb{R}^{p}$, and the true parameter $\boldsymbol{\theta}_{0}$ belongs to the interior of $\Theta$. There exists a unique pseudo-true value $\boldsymbol{\theta}_{*} \in \Theta$ such that the estimator is consistent for $\boldsymbol{\theta}_*$ under both $H_0$ and fixed alternatives.\\
(b) Under $H_{0}$,   $\boldsymbol{\theta}_*=\boldsymbol{\theta}_0$, for estimators based on $\{Y_t,\mathbf{Z}_{t-1}\}_{t=1}^{f_n}$, as $f_n \to \infty $, the estimator ${\hbtheta}_{f_n}$ satisfies the expansion
\begin{align*}
    \sqrt{f_n}(\hbtheta_{f_n}-\boldsymbol{\theta}_{0})=\frac{1}{\sqrt{f_n}} \sum_{t=1}^{f_n} \mathbf{h}(Y_{t}, \mathbf{I}_{t-1}, \boldsymbol{\theta}_{0})+o_{p}(1),
\end{align*}
where $\mathbf{h}(\cdot)$ satisfies $E[\mathbf{h}(Y_{t}, \mathbf{I}_{t-1}, \boldsymbol{\theta}_{0}) | \mathcal{F}_{t-1}]=\mathbf{0}$, $\mathbf{L}(\boldsymbol{\theta}_{0})= E[\mathbf{h}(Y_{t}, \mathbf{I}_{t-1}, \boldsymbol{\theta}_{0}) \mathbf{h}^{\prime}(Y_{t}, \mathbf{I}_{t-1}, \boldsymbol{\theta}_{0})]$ exists and is positive definite.
By the martingale central limit theorem, the above implies
\begin{align*}
    \sqrt{f_n}(\hbtheta_{f_n}-\boldsymbol{\theta}_{0}) \stackrel{d}{\longrightarrow}
    \mathcal{N}(\mathbf{0}, \mathbf{L}(\boldsymbol{\theta}_{0})).
\end{align*}
\end{assumption}
\begin{assumption}   \label{ass4}
    The integrating function $W(\cdot)$ is a probability distribution function absolutely continuous with respect to Lebesgue measure. The weight function $w(\cdot)$ is such that the equivalence between \eqref{H0-equiv} and \eqref{H0-equiv2} holds and is uniformly bounded. Also, $w(\cdot)$ satisfies the uniform law of large numbers,
    \begin{align*}
        \sup _{\mathbf{x} \in \Upsilon_{c}} \bigg|n^{-1} \sum_{t=1}^{n} \zeta_{t} w(\boldsymbol{\xi}_{t}, \mathbf{x})-E \big[\zeta_{t} w(\boldsymbol{\xi}_{t}, \mathbf{x})\big]\bigg| \rightarrow 0, \quad \text { a.s.,}
    \end{align*}
whenever $\{(\zeta_{t}, \boldsymbol{\xi}_{t}^{\prime}), t=0, \pm 1, \ldots\}$ is a strictly stationary and ergodic process with $\zeta_{t} \in \mathbb{R}, \boldsymbol{\xi}_{t} \in \mathbb{R}^{m}, E|\zeta_{1}|<\infty$, and $\Upsilon_{c}$ is any compact subset of $\Upsilon \subset \mathbb{R}^{s}$.
\end{assumption}

\begin{assumption}   \label{ass5}
    The observed information set at period $t$, $\widehat{\mathbf{I}}_{t}$, may contain some assumed initial values and satisfies $\lim _{n \rightarrow \infty} \sum_{t=1}^{n}(E \sup _{\boldsymbol{\theta} \in \Theta}(f(\mathbf{I}_{t-1}, \boldsymbol{\theta})-f(\widehat{\mathbf{I}}_{t-1}, \boldsymbol{\theta}))^{2})^{1 / 2} <\infty$.
\end{assumption}
Assumptions \ref{ass1}--\ref{ass5} are inherited from \cite{escanciano2006goodness}, and are deliberately stated at a high level. Assumption \ref{ass1}-\ref{ass2} are standard but mild conditions on the data-generating process. Assumption \ref{ass3} (a) assumes a pseudo-true value under the alternative, which is standard in the theory of misspecified likelihood and other extremum estimation; see, e.g. \citet{newey1994large}, and \cite{white1982maximum,white1994estimation}.  
Assumption \ref{ass3} (b) assumes a Bahadur representation of the estimator 
on the fitting sample, and is satisfied by a broad class of estimators,
including commonly used estimators, such as (nonlinear) least squares, maximum
likelihood, and the (generalized) method of moments, under standard regularity
conditions.
Assumption \ref{ass4} is a high-level condition on the weight function. Such functions are often called ``generically comprehensively revealing'' functions; see, for example, \citet{bierens1997asymptotic} and \citet{stinchcombe1998consistent}. The boundedness of $w(\cdot)$ is mild, and the required uniform ergodic theorem for $\zeta_t w(\boldsymbol{\xi}_t,\bx)$ holds under standard uniform integrability conditions; see, e.g., Theorem 3.1 of \citet{ling2003asymptotic}. Typical examples include $w(\boldsymbol{\xi}_t,\bx)=\sin(\bx'\boldsymbol{\xi}_t)$, $w(\boldsymbol{\xi}_t,\bx)=[1+\exp(\bx'\boldsymbol{\xi}_t)]^{-1}$, $w(\bf{\xi}_t,\bx )=\exp( i \bx' \boldsymbol{\xi}_t)$,  and $w(\boldsymbol{\xi}_t,\bx)=\mathbb{I}(\boldsymbol{\xi}_t\leq \bx)$, see Lemma 1 of \citet{escanciano2006goodness} for further details. In the numerical studies, we focus on the latter two choices.
 Assumption \ref{ass5} allows the observed information set to approximate the infinite past, see similar assumptions in  \cite{HongLee2005} and \cite{WangZhuShao2022}. 

In addition to the above regularity conditions, we assume that the sample-splitting sequences are arbitrary non-decreasing sequences $\{(f_n, l_n )\}_{n \geq 1}$ which go to infinity with $n$, and $f_n/n \rightarrow \kappa_f$ and $l_n/n \rightarrow \kappa_l$ for some constants $0 < \kappa_f, \kappa_l \leq 1$.
Further, let $\kappa_{ra}\equiv \lim_{n \rightarrow \infty} l_n/f_n$ be the limiting ratio of sample-splitting, and $\kappa_{ov}=\lim_{n\rightarrow \infty} \max(0, (f_n+l_n-n)/f_n)$ be the limiting overlap coefficient.

\begin{theorem}  \label{thm1}
    Under Assumptions \ref{ass1}--\ref{ass5} and $H_{0}$, 
    if $\kappa_{ra} = 2 \kappa_{ov}$, and the following score-alignment condition holds, that is,
    \begin{equation}  \label{key condition}
        E\big[w_{t-j}(\mathbf{x}) \mathbf{g}_{t}(\btheta_0)\big]^{\prime} \mathbf{L}(\boldsymbol{\theta}_{0}) = E \Big[ \varepsilon_t  w_{t-j}(\mathbf{x}) \mathbf{h}(Y_{t}, \mathbf{I}_{t-1}, \boldsymbol{\theta}_{0}) \Big]^{\prime}, \quad \forall j \geq 1,
    \end{equation}
    then the process $S_{n, w}$ satisfies
    \begin{equation*}
        S_{n, w} \Longrightarrow S_w^0, \quad  \text{in} ~ L_{2}(\Pi, v),
    \end{equation*}
    where $S_{w}^{0}(\cdot)$ is a Gaussian process in $L_{2}(\Pi, v)$ with mean $\mathbf{0}$ and covariance operator $C_{S_{w}^{0}}$ satisfying $\sigma_{h}^{2}=\langle C_{S_{w}^{0}}(h), h \rangle, \forall h \in L_{2}(\Pi, v)$, and $\sigma_{h}^{2}$ is defined in \eqref{sigma_h}. 
    Moreover,
    by the continuous mapping theorem, we have
    \begin{align*}
        D_{n, w}^{2}(\hbtheta_{f_n}) \stackrel{d}{\longrightarrow} D_{\infty, w}^{2}(\boldsymbol{\theta}_{0})
        :=\int_{\Pi} |S_{w}^0 (\lambda, \mathbf{x}, \boldsymbol{\theta}_{0})|^{2} W(d \mathbf{x}) d \lambda.
    \end{align*}
\end{theorem}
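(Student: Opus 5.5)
The plan is to decompose the residual-based process into the infeasible oracle process plus a parameter-estimation term. We then show, using a joint central limit theorem over the overlapping fitting and checking samples, that under $\kappa_{ra}=2\kappa_{ov}$ and \eqref{key condition} the two extra covariance contributions cancel exactly. Write
\[
S_{n,w}^0(\boldsymbol\eta)=\sum_{j=1}^{l_n-1} n_j^{1/2}\Big(\frac1{n_j}\sum_{t=n-l_n+j}^{n}\varepsilon_t w_{t-j}(\mathbf x)\Big)\Psi_j(\lambda)
\]
for the oracle process on the checking sample.

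\textbf{Step 1: reduction to an oracle-plus-linear form.} First, replace $\widehat{\mathbf I}_{t-1}$ by $\mathbf I_{t-1}$ at cost $o_p(1)$ in $L_2(\Pi,\nu)$, using Assumption \ref{ass5} and boundedness of $w$ exactly as in \cite{escanciano2006goodness}. Next, apply a mean-value expansion $e_t(\hbtheta_{f_n})=\varepsilon_t-\mathbf g_t(\tilde{\btheta})'(\hbtheta_{f_n}-\btheta_0)$. Using Assumption \ref{ass2} (the dominating $\mathbf M_t$), the uniform LLN of Assumption \ref{ass4}, consistency, and $\sum_j\Psi_j^2<\infty$, this yields
\[
S_{n,w}(\boldsymbol\eta,\hbtheta_{f_n})=S_{n,w}^0(\boldsymbol\eta)-\mathbf G_w(\boldsymbol\eta)'\sqrt{l_n}(\hbtheta_{f_n}-\btheta_0)+o_p(1)\quad\text{in }L_2(\Pi,\nu).
\]
The tail lags $j$ large are controlled by the $1/j$ decay of $\Psi_j$ together with $E\|\cdot\|^2$ bounds; the replacement $n_j^{-1}\sum \mathbf g_t w_{t-j}\to \mathbf b_j$ uses ergodicity lag-by-lag plus dominated convergence over $j$. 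By Assumption \ref{ass3}(b),
\[
\sqrt{l_n}(\hbtheta_{f_n}-\btheta_0)=\sqrt{l_n}\,f_n^{-1}\sum_{t=1}^{f_n}\mathbf h_t+o_p(1),\qquad \mathbf h_t=\mathbf h(Y_t,\mathbf I_{t-1},\btheta_0).
\]

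\textbf{Step 2: joint weak convergence.} We show that the pair $\big(S^0_{n,w},\ \sqrt{l_n}f_n^{-1}\sum_{t\le f_n}\mathbf h_t\big)$ converges jointly in $L_2(\Pi,\nu)\times\mathbb R^p$ to a Gaussian limit. Both components are sums of martingale differences with respect to $\mathcal F_t$: for $S^0_{n,w}$ this holds because $\varepsilon_t w_{t-j}$ is an MDS, and for the second component by $E[\mathbf h_t\mid\mathcal F_{t-1}]=\mathbf 0$. Hence fidi convergence follows from the martingale CLT via Cramér--Wold applied to $\langle S^0_{n,w},h\rangle+\mathbf a'(\cdot)$. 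Tightness in $L_2$ of the first component is as in \cite{escanciano2006goodness}, e.g.\ through a Hilbert-space martingale CLT or a truncation in $j$ plus $\sup_n E\|\sum_{j>J}\cdots\|^2\to0$. The key covariance computation is the cross term: only the overlap indices $t\in\{n-l_n+1,\dots,f_n\}$ contribute, so
\[
\mathrm{Cov}\big(S^0_{n,w}(\boldsymbol\eta),\,\sqrt{l_n}f_n^{-1}\textstyle\sum_{t\le f_n}\mathbf h_t\big)\approx \frac{\max(0,f_n+l_n-n)}{f_n}\sum_j E[\varepsilon_t w_{t-j}(\mathbf x)\mathbf h_t]\Psi_j(\lambda)\to\kappa_{ov}\sum_j E[\varepsilon_t w_{t-j}\mathbf h_t]\Psi_j .
\]
The variance of the estimation component tends to $\kappa_{ra}\mathbf L(\btheta_0)$.

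\textbf{Step 3: cancellation.} The limit of $S_{n,w}$ is therefore the Gaussian element $S^0_w-\mathbf G_w'\mathbf N$, where $\mathbf N$ is jointly normal with $S^0_w$. For any $h$, its quadratic form is
\[
\sigma_h^2+\kappa_{ra}\langle \mathbf G_w,h\rangle'\mathbf L\langle\mathbf G_w,h\rangle^c-2\kappa_{ov}\,\mathrm{Re}\,\langle \mathbf G_w,h\rangle'\Big\langle\sum_jE[\varepsilon_t w_{t-j}\mathbf h_t]\Psi_j,h\Big\rangle^c .
\]
By \eqref{key condition}, $\sum_j E[\varepsilon_tw_{t-j}\mathbf h_t]'\Psi_j=\mathbf G_w'\mathbf L$, so the last two terms equal $(\kappa_{ra}-2\kappa_{ov})\langle\mathbf G_w,h\rangle'\mathbf L\langle\mathbf G_w,h\rangle^c=0$. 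Since a mean-zero Gaussian element of $L_2$ is determined by its covariance operator, $S_{n,w}\Longrightarrow S_w^0$. The statement for $D^2_{n,w}$ then follows from the continuous mapping theorem applied to $\|\cdot\|^2$.

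The main obstacle I expect is Step 1 together with tightness in Step 2: controlling the estimation-effect remainder uniformly over all $l_n$ lags (the $j$ near $l_n$, where $n_j$ is small) in $L_2(\Pi,\nu)$, and handling complex-valued weights in the cross-covariance. I would first truncate at a fixed lag $J$, prove everything for $j\le J$, and then bound the contribution of $j>J$ by $C\sum_{j>J}j^{-2}$ uniformly in $n$, using $E\varepsilon_1^2<\infty$, $E\mathbf M_t^2<\infty$ and boundedness of $w$.
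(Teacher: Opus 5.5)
Your proposal is correct and is essentially the paper's proof. It uses the same reduction to $S_{n,w}(\cdot,\btheta_0)-\mathbf G_w'\sqrt{l_n/f_n}\,f_n^{-1/2}\sum_{t\le f_n}\mathbf h_t$ via a mean-value expansion with lag truncation and the uniform ergodic theorem, and the same joint martingale CLT in which only the overlap indices produce the cross-covariance $\kappa_{ov}\sum_j E[\varepsilon_t w_{t-j}(\mathbf x)\mathbf h_t]\Psi_j(\lambda)$. It also ends with the same cancellation $(\kappa_{ra}-2\kappa_{ov})\,\mathbf a'\mathbf L(\btheta_0)\mathbf a^c=0$ under the score-alignment condition.

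Your per-$t$ mean-value expansion of $e_t$ is slightly cleaner than the paper's single mean value for the function-valued $S_{n,w}$. For complex weights, the same algebra applied to the pseudo-covariance $E[\langle Z,g_1\rangle\langle Z,g_2\rangle]$ gives the identical cancellation, which settles the complex-valued concern you raised.
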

Theorem \ref{thm1} states that the split-sample statistic has the oracle null limit, despite using estimated residuals. This is the main theoretical difference from the full-sample generalized spectral test of \citet{escanciano2006goodness}, where the same observations are used both to estimate the unknown parameter and to construct the residual marked process. In that full-sample construction, the first-order estimation effect is part of the limiting null distribution and must be accounted for in the approximation of critical values. By contrast, the present split-sample construction separates parameter fitting from model checking in a way that makes the residual-based process have the same limiting distribution as the oracle process based on the true errors.  
 
Although the limiting null distribution in Theorem \ref{thm1} is still nonstandard and requires approximation by bootstrap, the oracle-equivalence result has an important practical implication. The bootstrap procedure in Section \ref{Sec4} can apply multipliers directly to the split-sample residuals while keeping $\hbtheta_{f_n}$ fixed. Hence, unlike full-sample residual-based procedures that require re-generating data and re-estimating the model in each bootstrap replication, the proposed method avoids repeated refitting and can be substantially cheaper for nonlinear or numerically intensive models.

     \begin{remark}
          The equality $\kappa_{ra}=2\kappa_{ov}$ balances the amount of checking-sample information against the overlap between fitting and checking samples. A simple choice is $f_n=n/2$ and $l_n=n$. Condition \eqref{key condition} is a score-alignment condition: it matches the covariance between the residual moment process and the estimator influence function with the covariance generated by the model score.   A similar requirement appears in \citet{davis2025sample} for portmanteau statistics under the i.i.d. condition for the error process.  This paper extends it to the martingale difference setting.  More importantly, in Appendix \ref{secA.1} and \ref{secA.2},  we verify condition  \eqref{key condition}  for the ARMA Gaussian MLE and GARCH Quasi-MLE examples, respectively. 
     \end{remark}


\subsection{Consistency and Local Alternatives}

We proceed to prove the consistency of the proposed test under the fixed and local alternatives, respectively.
First, consider the fixed alternative:
\begin{align} \label{global H1}
    H_{a}: \quad Y_{t}=f (\mathbf{I}_{t-1}, \boldsymbol{\theta}_{*})+a_{t}+\varepsilon_{t},
\end{align}
where $\{a_{t}\}$ is strictly stationary and ergodic, with $Ea_{1}^2<\infty$, and importantly, for each $t \in \mathbb{Z}$, $a_{t}$ is $\mathcal{F}_{t-1}$-measurable and $P(a_t=0)<1$.
Theorem \ref{thm-global H1} shows the asymptotic behavior of $S_{n, w}$ under the global alternative $H_{a}$.
\begin{theorem}  \label{thm-global H1}
    Under Assumptions \ref{ass1}--\ref{ass5} and $H_{a}$ \eqref{global H1},
    \begin{align*}
        l_n^{-1/2}S_{n,w}(\cdot,\hbtheta_{f_n})
        \stackrel{p}{\longrightarrow}
        L_w(\cdot),
        \qquad \text{in } L_2(\Pi,\nu),
    \end{align*}
    where
    \begin{align*}
        L_w(\boldsymbol{\eta}):=\sum_{j=1}^\infty \varsigma_j(\mathbf{x})\Psi_j(\lambda),
        \quad
        \varsigma_j(\mathbf{x})=E[a_t w_{t-j}(\mathbf{x})].
    \end{align*}
\end{theorem}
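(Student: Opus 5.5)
The plan is to decompose the residuals on the checking sample into an oracle part, a misspecification part, and an estimation part, and to show that only the misspecification part survives after scaling by $l_n^{-1/2}$. Under $H_a$ and with $\widehat{\mathbf I}$ replaced by $\mathbf I$, we have
\[
\widehat e_t=\varepsilon_t+a_t-\{f(\mathbf I_{t-1},\hbtheta_{f_n})-f(\mathbf I_{t-1},\btheta_*)\}
=\varepsilon_t+a_t-\mathbf g_t(\bar\btheta_t)'(\hbtheta_{f_n}-\btheta_*)
\]
by the mean value theorem (Assumption \ref{ass2}). The error from using $\widehat{\mathbf I}_{t-1}$ instead of $\mathbf I_{t-1}$ is summable in $L_2$ by Assumption \ref{ass5}. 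Because $w$ is bounded, this error contributes $O_p(l_n^{-1})$ to each $n_j^{-1}\sum$ uniformly in $j$, which is negligible after weighting by $\Psi_j$. Accordingly,
\[
l_n^{-1/2}S_{n,w}(\boldsymbol\eta)=\sum_{j=1}^{l_n}(n_j/l_n)^{1/2}\{\widehat\gamma^{\varepsilon}_{j}(\mathbf x)+\widehat\gamma^{a}_{j}(\mathbf x)-\widehat{\mathbf b}_j(\mathbf x)'(\hbtheta_{f_n}-\btheta_*)\}\Psi_j(\lambda)+o_p(1),
\]
where $\widehat\gamma^{\varepsilon}_j$, $\widehat\gamma^{a}_j$, and $\widehat{\mathbf b}_j$ are the sample averages over the lag-$j$ checking window of $\varepsilon_tw_{t-j}$, $a_tw_{t-j}$, and $\mathbf g_t(\bar\btheta_t)w_{t-j}$, respectively.

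Next I bound each term in $L_2(\Pi,\nu)$-norm. Two facts drive everything: $\int_0^1\Psi_j\Psi_k\,d\lambda=\delta_{jk}/(j\pi)^2$, so $\|\sum_j c_j(\mathbf x)\Psi_j\|^2=\sum_j (j\pi)^{-2}\int|c_j|^2dW$, and $w$ is bounded.

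\emph{Estimation term.} Since $|\widehat{\mathbf b}_j|\le C n_j^{-1}\sum \mathbf M_t$ and $\hbtheta_{f_n}\to\btheta_*$ in probability (Assumption \ref{ass3}(a)), a uniform crude bound gives a norm of at most $C\,(\sum_j j^{-2})^{1/2}\max_j n_j^{-1}\sum\mathbf M_t\cdot|\hbtheta_{f_n}-\btheta_*|$. The averages $n_j^{-1}\sum \mathbf M_t$ are not uniformly controlled for $j$ close to $l_n$. To handle this, I split the lags at a fixed $J$. For the large-$j$ tail I use $n_j^{1/2}$-weighting in the form $(n_j/l_n)^{1/2}\cdot n_j^{-1}\sum_t\mathbf M_t\le n_j^{-1/2}l_n^{-1/2}\sum_{t}\mathbf M_t$. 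Its square is at most $l_n^{-1}(\sum_{t}\mathbf M_t)^2/n_j$, whose expectation is $O(l_n/n_j)$ by Cauchy–Schwarz. Multiplying by $j^{-2}$ and summing gives $\sum_{j>J} j^{-2}\,l_n/(l_n-j+1)$, which is $O(J^{-1}+l_n^{-1}\log l_n)$ and is uniformly small. The same tail device applies to the other two terms.

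\emph{Oracle term.} $E|\widehat\gamma^\varepsilon_j(\mathbf x)|^2\le C E\varepsilon_1^2/n_j$ by the MDS property, so $E\|\cdot\|^2\le C\sum_j j^{-2}/l_n\to0$.

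\emph{Misspecification term.} For each fixed $j\le J$ and each $\mathbf x$, the ergodic theorem gives $\widehat\gamma^a_j(\mathbf x)\to\varsigma_j(\mathbf x)$ a.s. Together with $n_j/l_n\to1$ and dominated convergence in $W$ (boundedness plus $E|a_1|<\infty$), this yields $\int|\widehat\gamma^a_j-\varsigma_j|^2dW\to0$ in probability. The uniform law in Assumption \ref{ass4} can be used on compacts to be safe. The tail $j>J$ of both $\widehat\gamma^a$ and $\varsigma$ is $O(J^{-1/2})$ in norm by the same bounds. Finally, $L_w\in L_2$ because $|\varsigma_j|\le C E|a_1|$.

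Combining the pieces, letting $n\to\infty$ and then $J\to\infty$ gives $l_n^{-1/2}S_{n,w}\to L_w$ in probability in $L_2(\Pi,\nu)$. I expect the main obstacle to be the uniform control over lags $j$ near $l_n$, where $n_j$ is small. The $(n_j/l_n)^{1/2}$ factor combined with the $j^{-2}$ spectral weights, as in the tail estimate above, is what I would rely on. Note that no overlap or score-alignment condition is needed here, because only consistency of $\hbtheta_{f_n}$ (not a rate) is used.
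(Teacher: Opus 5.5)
Your proof is correct and follows the same overall plan as the paper: replace $\widehat{\mathbf I}_{t-1}$ by $\mathbf I_{t-1}$, expand $f$ around $\btheta_*$ by the mean value theorem, show the $a_t$-part converges to $\varsigma_j$ lag by lag, remove the estimation part by consistency of $\hbtheta_{f_n}$, and control large lags. The execution differs, though.

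\emph{The paper's route.} It keeps $\varepsilon_t$ inside $A_{1n,j}$ together with $a_t$. For each fixed $j$, it gets uniform-on-compacts convergence in $\mathbf x$ from the ULLN of Assumption \ref{ass4}. It then hands the passage to $L_2(\Pi,\nu)$ to Lemma 1 of \cite{escanciano2006generalized}. The large-lag condition (i) of that lemma is only asserted to follow from the moment conditions, not verified.

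\emph{Your route.} You compute norms directly via $\int_0^1\Psi_j\Psi_k\,d\lambda=\delta_{jk}/(j\pi)^2$. The MDS variance bound removes the entire $\varepsilon$-part at once, since it is $O_p(l_n^{-1/2})$ in norm over all lags. Pointwise ergodic convergence plus dominated convergence over $W$ replaces the ULLN. Your bound $\sum_{j>J}j^{-2}l_n/n_j=O(J^{-1}+l_n^{-1}\log l_n)$ is exactly the tail verification the paper leaves implicit.

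The paper's argument is shorter. Yours is self-contained and shows that only pointwise ergodicity and second moments are really used.

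Three small corrections.

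First, the initial-value error in $n_j^{-1}\sum_t$ is $O_p(n_j^{-1})$, not $O_p(l_n^{-1})$ uniformly in $j$. It is still negligible: after the factor $(n_j/l_n)^{1/2}$ it is $O_p(l_n^{-1/2})$ uniformly in $j$, and the weights $j^{-2}$ are summable.

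Second, for the moving window $t=n-l_n+j,\ldots,n$, appealing to the a.s. ergodic theorem is not quite direct. The clean justification is stationarity plus the mean ($L^2$) ergodic theorem. This gives $E|\widehat\gamma^a_j(\mathbf x)-\varsigma_j(\mathbf x)|^2\to0$ for each $\mathbf x$, hence $E\int|\widehat\gamma^a_j-\varsigma_j|^2dW\to0$ by bounded convergence. Almost-sure convergence is not needed.

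Third, lags near $l_n$ cause no difficulty in the estimation term. Take the sum over the lag-$j$ window; then $E(n_j^{-1}\sum_t\mathbf M_t)^2\le E\mathbf M_1^2$ for every $j$. So $\sum_j j^{-2}(n_j^{-1}\sum_t\mathbf M_t)^2=O_p(1)$, and multiplying by $|\hbtheta_{f_n}-\btheta_*|^2=o_p(1)$ disposes of the whole estimation term without splitting at $J$.
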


\begin{corollary} \label{corollary1}
    Let $\Xi$ denote the class of alternatives $\{a_t\}$ such that, under $H_a$ \eqref{global H1}, there exists some $j\ge 1$ for which $\varsigma_j(\mathbf{x})\neq 0$ on a subset of $\Upsilon$ with positive W-measure.
    Then we have 
    \begin{align*}
        D_{n,w}^2(\hbtheta_{f_n}) \stackrel{p}{\longrightarrow}\infty,
    \end{align*}
    which means $D_{n,w}^2$ is consistent against all alternatives in $\Xi$.
\end{corollary}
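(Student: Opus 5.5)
We need to prove Corollary \ref{corollary1}: $D^2_{n,w}\to\infty$ in probability. The plan is to derive it directly from Theorem \ref{thm-global H1} through the continuous mapping theorem, together with an injectivity argument showing that $L_w$ is a nonzero element of $L_2(\Pi,\nu)$.

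\emph{Step 1: reduce to $\|L_w\|>0$.} Since $D_{n,w}^2(\hbtheta_{f_n})=\|S_{n,w}(\cdot,\hbtheta_{f_n})\|^2$, we can write $l_n^{-1}D_{n,w}^2(\hbtheta_{f_n})=\|l_n^{-1/2}S_{n,w}\|^2$. The norm is continuous on $L_2(\Pi,\nu)$, so Theorem \ref{thm-global H1} and the continuous mapping theorem give $l_n^{-1}D_{n,w}^2\stackrel{p}{\longrightarrow}\|L_w\|^2$. If $\|L_w\|^2=c>0$, then for any $M>0$,
\[
P(D_{n,w}^2\le M)\le P\big(l_n^{-1}D_{n,w}^2\le M/l_n\big)\to 0 ,
\]
because $M/l_n\to0<c$ and $l_n\to\infty$. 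This yields the claimed divergence.

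\emph{Step 2: compute $\|L_w\|^2$.} The functions $\{\sqrt2\sin(j\pi\lambda)\}_{j\ge1}$ are orthonormal in $L_2[0,1]$. Hence, for fixed $\mathbf x$,
\[
\int_0^1|L_w(\lambda,\mathbf x)|^2\,d\lambda=\sum_{j\ge1}\frac{|\varsigma_j(\mathbf x)|^2}{(j\pi)^2}.
\]
The right-hand side is finite because $|\varsigma_j(\mathbf x)|\le C\,E|a_1|$, using the uniform boundedness of $w$ from Assumption \ref{ass4} and $Ea_1^2<\infty$. This also justifies that the series defining $L_w$ converges in $L_2$. Applying Fubini/Tonelli over $W$ then gives
\[
\|L_w\|^2=\sum_{j\ge1}(j\pi)^{-2}\int_\Upsilon|\varsigma_j(\mathbf x)|^2\,W(d\mathbf x).
\]

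\emph{Step 3: positivity.} By the definition of $\Xi$, some $j_0$ has $\varsigma_{j_0}\ne0$ on a set of positive $W$-measure. Then $\int_\Upsilon|\varsigma_{j_0}|^2\,dW>0$, and since every term in the series is nonnegative, $\|L_w\|^2\ge (j_0\pi)^{-2}\int|\varsigma_{j_0}|^2\,dW>0$.

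The only mildly delicate point is Step 2: interchanging the sum and the integrals and identifying $\|L_w\|$ through Parseval. Both steps are routine once uniform boundedness of $\varsigma_j$ and square-summability of $1/j$ are in hand. The substantive work, namely showing that the estimation effect is negligible at the $l_n^{1/2}$ scale, has already been absorbed into Theorem \ref{thm-global H1}.
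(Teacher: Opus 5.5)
Your proof is correct and follows the same route as the paper. Both arguments use Theorem~\ref{thm-global H1} and the continuous mapping theorem to obtain $l_n^{-1}D_{n,w}^2(\hbtheta_{f_n})\stackrel{p}{\longrightarrow}\sum_{j\ge1}(j\pi)^{-2}\int|\varsigma_j(\mathbf x)|^2\,W(d\mathbf x)>0$ for alternatives in $\Xi$, and then conclude divergence since $l_n\to\infty$. Your version is also slightly cleaner: positivity follows directly from the definition of $\Xi$, without the paper's detour through Lemma~1 of Escanciano, the absolute continuity of $W$, or the restriction of the integral to a compact set $\Upsilon_c$.
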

Corollary \ref{corollary1} indicates that the test is consistent against alternatives that violate at least one of the pairwise conditional mean restrictions, namely $P(E[a_t|\boldsymbol{Z}_{t-j}]=0)<1$ for some $j\geq 1$. As in generalized spectral tests based on pairwise conditional moments, this is a large but not exhaustive class. It is indeed possible that $P(E[a_t| \mathbf I_{t-1}]=0)<1$ while all pairwise projections satisfy $E[a_t | \boldsymbol{Z}_{t-j}]=0$ a.s. for every fixed $j\geq 1$. Such alternatives involve higher-order or genuinely joint dependence on the past and are therefore not captured by a statistic built only from pairwise restrictions.

To further study the consistency properties of the proposed tests, we consider the local alternatives with a sequence of alternative hypotheses tending to the null at the parametric rate $l_n^{-1 / 2}$ as in \cite{escanciano2006goodness}:
\begin{equation} \label{local H1}
H_{a, n}: \quad Y_{t, n}=f(\mathbf{I}_{t-1}, \boldsymbol{\theta}_{0})+\frac{a_{t}}{\sqrt{l_n}}+\varepsilon_{t},
\end{equation}
where $\{a_{t}\}$ is the same as in $H_{a}$ \eqref{global H1}. Note that the rate $l_n$ is of the same order as $n$ in the half-splitting scheme.

An additional assumption is needed regarding the behavior of the estimator under these local alternatives.
\begin{assumption} \label{H1-ass1}
    The estimator $\hbtheta_{n}$ satisfies the asymptotic expansion under $H_{a, n}$ \eqref{local H1},
    $$
    \sqrt{f_n} (\hbtheta_{f_n}-\boldsymbol{\theta}_{0})= \boldsymbol{\xi}_{a}+\frac{1}{\sqrt{f_n}} \sum_{t=1}^{f_n} \mathbf{h} (Y_{t}, \mathbf{I}_{t-1}, \boldsymbol{\theta}_{0} )+o_{p}(1)
    $$
    where the function $\mathbf{h}(\cdot)$ is as in Assumption \ref{ass3} and $\boldsymbol{\xi}_{a} \in \mathbb{R}^{p}$.
\end{assumption} 

\begin{theorem} \label{thm-local H1}
     Under the sequence of alternative hypotheses \eqref{local H1} and Assumptions \ref{ass1}--\ref{H1-ass1}, if $\kappa_{ra} = 2 \kappa_{ov}$ and condition \eqref{key condition} holds, then
     \begin{align*}
         S_{n, w} \Longrightarrow S_{w}^0 + L_{w}(\cdot)-\sqrt{\kappa_{ra}} \mathbf{G}_{w}^{\prime}(\cdot) \boldsymbol{\xi}_{a},
     \end{align*}
     where $S_{w}^0$ and $L_{w}$ are the processes defined in Theorems \ref{thm1} and \ref{thm-global H1}, and $\kappa_{ra}\equiv \lim_{n \rightarrow \infty} l_n/f_n$. Moreover,
     \begin{align*}
         D_{n, w}^{2}(\hbtheta_{f_n}) 
         \stackrel{d}{\longrightarrow} \int_{\Pi} |S_{w}^0(\boldsymbol{\eta}, \boldsymbol{\theta}_{0})+L_{w}(\boldsymbol{\eta})-\sqrt{\kappa_{ra}} \mathbf{G}_{w}^{\prime}(\boldsymbol{\eta}) \boldsymbol{\xi}_{a}| ^{2} d \nu(\boldsymbol{\eta}).
     \end{align*}
\end{theorem}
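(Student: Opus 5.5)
We prove Theorem \ref{thm-local H1} by reducing it to the null analysis behind Theorem \ref{thm1}, plus two deterministic drift terms. Under $H_{a,n}$ we have $e_t(\boldsymbol{\theta}_0)=\varepsilon_t+a_t/\sqrt{l_n}$. A mean-value expansion of $f(\widehat{\mathbf I}_{t-1},\hbtheta_{f_n})$ around $\boldsymbol{\theta}_0$ then gives, for $t$ in the checking sample,
\[
\widehat e_t=\varepsilon_t+\frac{a_t}{\sqrt{l_n}}-\mathbf g_t(\bar{\boldsymbol\theta}_t)'(\hbtheta_{f_n}-\boldsymbol{\theta}_0)+r_t ,
\]
where $r_t$ is the initial-value error controlled by Assumption \ref{ass5}. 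Substituting into $S_{n,w}$ yields four pieces:
\begin{itemize}
\item[(i)] the oracle process $S^{o}_{n,w}(\boldsymbol\eta)=\sum_j n_j^{-1/2}\sum_t \varepsilon_t w_{t-j}(\mathbf x)\Psi_j(\lambda)$;
\item[(ii)] the drift $\sum_j (n_j/l_n)^{1/2}\, n_j^{-1}\sum_t a_t w_{t-j}(\mathbf x)\Psi_j(\lambda)$;
\item[(iii)] the estimation term $-\sum_j n_j^{1/2}\, n_j^{-1}\sum_t w_{t-j}(\mathbf x)\mathbf g_t(\bar{\boldsymbol\theta}_t)'\Psi_j(\lambda)\,(\hbtheta_{f_n}-\boldsymbol{\theta}_0)$;
\item[(iv)] the initial-value remainder, which is $o_p(1)$ in $L_2(\Pi,\nu)$ exactly as in the null case.
\end{itemize}

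\textbf{Drift term (ii).} By the uniform ergodic theorem in Assumption \ref{ass4}, applied with $\zeta_t=a_t$ (so $E|a_1|<\infty$), the inner average converges to $\varsigma_j(\mathbf x)$ uniformly on compacts. For each fixed $j$ we also have $n_j/l_n\to1$. The tail over $j>J$ is bounded in $L_2$ by $C\sum_{j>J}j^{-2}E|a_1|^2\to0$, using boundedness of $w$ and $\|\Psi_j\|^2=1/(j\pi)^2$. Hence (ii) converges in probability to $L_w$ in $L_2(\Pi,\nu)$; this repeats the argument of Theorem \ref{thm-global H1}, now with the $l_n^{-1/2}$ scaling already built in. Since this limit is deterministic, Slutsky's lemma in the Hilbert space lets us add it at the end.

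\textbf{Estimation term (iii).} We first show that
\[
n_j^{-1}\sum_t w_{t-j}(\mathbf x)\mathbf g_t(\bar{\boldsymbol\theta}_t)\to \mathbf b_j(\mathbf x,\boldsymbol{\theta}_0)
\]
uniformly. This uses Assumption \ref{ass2} (the domination by $\mathbf M_t$ and continuity of $\mathbf g$), the consistency $\hbtheta_{f_n}\to\boldsymbol{\theta}_0$ (which holds because the perturbation vanishes), and the same $j$-truncation as in (ii). It follows that
\[
\text{(iii)}=-\Big(\tfrac{l_n}{f_n}\Big)^{1/2}\mathbf G_w(\boldsymbol\eta)'\sqrt{f_n}(\hbtheta_{f_n}-\boldsymbol\theta_0)+o_p(1).
\]
By Assumption \ref{H1-ass1} this equals
\[
-\sqrt{\kappa_{ra}}\,\mathbf G_w'\boldsymbol\xi_a-\sqrt{\kappa_{ra}}\,\mathbf G_w' f_n^{-1/2}\sum_{t\le f_n}\mathbf h_t+o_p(1).
\]
The first piece is the deterministic shift in the stated limit. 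The second piece is precisely the estimation-effect process that appears under $H_0$.

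\textbf{Combining (i) with the $\mathbf h$-part of (iii).} Here we invoke the proof of Theorem \ref{thm1}. There it is shown that $S^o_{n,w}-\sqrt{\kappa_{ra}}\,\mathbf G_w' f_n^{-1/2}\sum_{t\le f_n}\mathbf h_t\Rightarrow S_w^0$ whenever $\kappa_{ra}=2\kappa_{ov}$ and \eqref{key condition} holds. The mechanism is as follows. By the martingale CLT, the two pieces are jointly Gaussian. Their cross-covariance comes only from the overlap window, and by \eqref{key condition} it equals $\kappa_{ov}\sqrt{\kappa_{ra}}^{-1}$ times the $\mathbf G_w'\mathbf L\mathbf G_w$-type term. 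The variance of the estimation piece is $\kappa_{ra}\mathbf G_w'\mathbf L\mathbf G_w$. With $\kappa_{ra}=2\kappa_{ov}$, twice the cross-covariance cancels this variance exactly.

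The main obstacle is to confirm that this joint convergence is unaffected by the local perturbation. The $\varepsilon_t$ and $\mathbf h$ summands still form martingale differences with respect to $\mathcal F_{t-1}$. However, $\mathbf h(Y_{t,n},\dots)$ is evaluated at the perturbed data, so its discrepancy from the null version must be absorbed. I would handle this by noting that Assumption \ref{H1-ass1} is stated with $\mathbf h(Y_t,\mathbf I_{t-1},\boldsymbol\theta_0)$ in terms of the null innovations, so the CLT and tightness arguments of Theorem \ref{thm1} carry over verbatim. Tightness in $L_2(\Pi,\nu)$ follows from the uniform bound $\sum_{j>J}E\|\cdot\|^2\le C\sum_{j>J}j^{-2}$.

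Putting the pieces together gives $S_{n,w}\Rightarrow S_w^0+L_w-\sqrt{\kappa_{ra}}\,\mathbf G_w'\boldsymbol\xi_a$. The stated limit of $D^2_{n,w}(\hbtheta_{f_n})$ then follows from the continuous mapping theorem, since $\|\cdot\|^2$ is continuous on $L_2(\Pi,\nu)$.
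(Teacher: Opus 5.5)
Your proposal is correct and follows essentially the same route as the paper. You expand around $\boldsymbol{\theta}_0$ and split off the oracle part, the drift $L_{n,w}\to L_w$ and the estimation term. You then use Assumption \ref{H1-ass1} to extract the deterministic shift $-\sqrt{\kappa_{ra}}\,\mathbf G_w'\boldsymbol{\xi}_a$, and reuse the joint-convergence and cancellation argument of Theorem \ref{thm1} for the oracle-minus-$\mathbf h$ part; your per-$t$ mean-value expansion is, if anything, slightly more careful than the paper's single mean value for the whole process.

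One bookkeeping point in your cancellation sketch: the covariance of $S_w^0$ with $\sqrt{\kappa_{ra}}\,\mathbf G_w'\boldsymbol{V}$ is $\kappa_{ov}\,\mathbf G_w'\mathbf L\mathbf G_w$, whereas the factor $\kappa_{ov}/\sqrt{\kappa_{ra}}$ belongs to the covariance with $\boldsymbol{V}$ itself. It is the former that cancels $\kappa_{ra}\mathbf G_w'\mathbf L\mathbf G_w$ when $\kappa_{ra}=2\kappa_{ov}$.
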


Theorem \ref{thm-local H1} decomposes the local limit into three parts: the oracle Gaussian limit $S_w^0$, the direct local-alternative drift $L_w$, and the indirect drift caused by estimating $\boldsymbol{\theta}_0$ on the fitting sample. If $\boldsymbol{\xi}_{a} \neq \mathbf{0}$, local power may be reduced in directions that are aligned with the score $\mathbf{g}_{t}\left(\boldsymbol{\theta}_{0}\right)$. If $\boldsymbol{\xi}_{a}=\mathbf{0}$, the estimator has no first-order local drift, and the test has nontrivial power against all local alternatives in $\Xi$.

\begin{remark}
Compared with \cite{escanciano2006goodness}, the local power has two changes in the limiting process. 
Escanciano's statistic has the local limit $S_w+L_w(\cdot) -  \mathbf{G}_{w}^{\prime}(\cdot) \boldsymbol{\xi}_{a}$,
where the Gaussian component $S_w$ contains the first-order estimation effect. 
In contrast, the split-sample statistic has the limit $S_{w}^0 + L_{w}(\cdot)-\sqrt{\kappa_{ra}} \mathbf{G}_{w}^{\prime}(\cdot) \boldsymbol{\xi}_{a}$. 
Thus, sample splitting removes the estimation effect from the stochastic part of the limit, and changes the deterministic score-induced drift from $\mathbf{G}_{w}^{\prime} \boldsymbol{\xi}_{a}$ to $\sqrt{\kappa_{ra}} \mathbf{G}_{w}^{\prime} \boldsymbol{\xi}_{a}$.

Consequently, there is no uniform local power dominance. 
If the local alternative is nearly orthogonal to the score direction, so that $\mathbf{G}_{w}^{\prime} \boldsymbol{\xi}_{a}$ is small, then the two tests have essentially the same local drift $L_w$, while the split-sample test has the oracle stochastic component. 
In this case, sample splitting can yield higher local power. 
If $\mathbf{G}_{w}^{\prime} \boldsymbol{\xi}_{a}$ is not negligible, the power gain or loss depends on the sign and magnitude of $L_w(\cdot) - \sqrt{\kappa_{ra}} \mathbf{G}_{w}^{\prime}(\cdot) \boldsymbol{\xi}_{a}$ relative to $L_w(\cdot) - \mathbf{G}_{w}^{\prime}(\cdot) \boldsymbol{\xi}_{a}$.
\end{remark}

We will show in Section \ref{Sec5} that, in finite samples, the proposed tests exhibit empirical power comparable to that of the full-sample generalized spectral tests in most linear and nonlinear examples.

\section{Bootstrap Approximation} \label{Sec4}

The limiting null distribution in Theorem \ref{thm1} is nonstandard because the covariance operator of the limiting Gaussian process depends on the underlying data-generating process
(DGP) in a very complicated way. Hence, direct tabulation of the critical value is generally infeasible. However, compared with \cite{escanciano2006goodness}, a key observation is that the asymptotic distribution of $S_{n, w}(\boldsymbol{\eta}, \hbtheta_{f_n})$ is the same as  $S_w^0(\boldsymbol{\eta})$, which  does not involve any estimation effect of $\hbtheta_{f_n}$. We elaborate more on this point. 

In the full-sample generalized spectral test of \citet{escanciano2006goodness}, because the asymptotic distribution of the test therein involves the estimation effect, the bootstrap must reproduce this effect. Hence, \cite{escanciano2006goodness} applies the fixed-design wild bootstrap: one first generates bootstrap residuals, then constructs bootstrap responses, then { re-estimates} the model on the bootstrap sample to obtain $\boldsymbol{\theta}_n^*$, and finally recomputes bootstrap residuals using $\boldsymbol{\theta}_n^*$. This is why the procedure requires additional high-level assumptions on the bootstrap estimator. 

The split-sample construction in this paper changes the bootstrap problem. By Theorem \ref{thm1}, under the split-overlap and score-alignment conditions \eqref{key condition}, the residual-based process $S_{n,w}(\cdot,\hbtheta_{f_n})$ has the same limiting null distribution as the oracle process based on the true errors. Hence, the bootstrap only needs to mimic the oracle martingale fluctuation on the checking sample. This means that the fitting-sample estimator $\hbtheta_{f_n}$ can be kept fixed throughout the bootstrap, and no bootstrap responses are generated. This is the main computational advantage of the proposed procedure, especially for nonlinear models or models whose estimation requires cumbersome numerical optimization.

Specifically, let $\{V_t\}$ be a sequence of i.i.d. random variables, independent of the data, with zero mean, unit variance, and bounded support. Conditional on the observed sample, define
\begin{align}  \label{S_nw}
    S_{n,w}^*(\lambda, \bx, \hbtheta_{f_n} )
    =
    \sum_{j=1}^{l_n}
    n_{j}^{1 / 2}
    \widehat{\gamma}_{j,w}^*(\bx, \hbtheta_{f_n})
    \frac{\sqrt{2} \sin j \pi \lambda}{j \pi},
\end{align}
where
\begin{align} \label{gamma_jw}
    \widehat{\gamma}_{j,w}^*(\bx, \hbtheta_{f_n})
    =
    \frac{1}{n_j}
    \sum_{t=n-l_n+j}^n
    \widehat{e}_{t}(\hbtheta_{f_n})w_{t-j}(\bx)V_t,
    \qquad j=1,\ldots,l_n .
\end{align}
This is a multiplier version of the checking-sample moment process. It is related to the wild bootstrap of \citet{Wu1986bootstrap}, \citet{liu1988bootstrap}, and \citet{mammen1993bootstrap}, but unlike the full-sample fixed-design wild bootstrap used in \citet{escanciano2006goodness}, it does not require constructing $Y_t^*$ or re-estimating the parameter.

Common choices of $\{V_t\}$ include Mammen's two-point distribution,
\begin{align}\label{mammen}
    P\{V_t=(1-\sqrt{5})/2\}=\frac{1+\sqrt{5}}{2\sqrt{5}}, \quad 
    P\{V_t= (1+\sqrt{5})/2\}=1-\frac{1+\sqrt{5}}{2\sqrt{5}},
\end{align}
and the Rademacher distribution $P(V_t=1)=P(V_t=-1)=1/2$; see also \citet{stute1998bootstrap}, \citet{de1996bierens}, and \citet{mammen1993bootstrap}.


The complete bootstrap algorithm is summarized in Algorithm \ref{alg1}. 
\begin{algorithm}[t]
\caption{Bootstrap Algorithm for Generalized Spectral Tests with Sample Splitting.} \label{alg1}
\begin{enumerate}
    \item Estimate the original model using the first $f_n$ data points, $\{Y_{t}, \widehat{\mathbf{I}}_{t-1} \}_{t=1}^{f_n}$, and derive estimate $\hbtheta_{f_n}$.
    \item Use the last $l_n$ data points, $\{Y_{t}, \widehat{\mathbf{I}}_{t-1} \}_{t=n-l_n+1}^{n}$, to obtain the residuals $\widehat{e}_{t} (\hbtheta_{f_n} )$ based on the estimated parameter, and then calculate the test statistic $D_{n,w}^2$.
    \item Generate $\{V_t\}_{t=n-l_n+1}^n$, a sequence of i.i.d. random variables with zero mean, unit variance, and bounded support, and independent of the sample $\{Y_t, \widehat{\bI}_{t-1}\}_{t=1}^n$.
    \item Compute $\widehat{\gamma}_{j,w}^*(\bx, \hbtheta_{f_n})$ for $j=1, \ldots, l_n$ as in \eqref{gamma_jw}, with generated $\{V_t\}_{t=n-l_n+1}^n$. Then, derive $S_{n,w}^*(\boldsymbol{\eta}, \hbtheta_{f_n} )$ and $D_{n,w}^{* 2}$.
    \item Repeat Steps $3-4$ for $B$ times, and compute the empirical $(1-\alpha)$-th sample quantile of $B$ values of $D_{n,w}^{* 2}$, denoted as $D_{n,w,\alpha}^{* 2}$. The proposed test rejects the null hypothesis at the significance level $\alpha$ if $D_{n,w}^2>D_{n,w, \alpha}^{* 2}$.
\end{enumerate}
\end{algorithm}
The algorithm keeps $\hbtheta_{f_n}$ fixed in every bootstrap replication. Therefore, the computational cost of the bootstrap is dominated by recomputing the weighted residual moments, rather than by repeated model estimation. This distinction is minor for very simple linear models, but it can be substantial for nonlinear volatility models, and threshold models, see Section \ref{Sec5} for numerical evidence.

We then provide the validity of the bootstrap procedure. 
\begin{theorem} \label{thm2}
    Assume that Assumptions \ref{ass1}--\ref{ass5} and the conditions in Theorem \ref{thm1} hold. Under the null hypothesis $H_0$, under any fixed alternative hypothesis or under the local alternatives,
    \begin{align*}
        S_{n,w}^* \underset{*}{\Longrightarrow} \widetilde{S}_w \quad a.s.,
    \end{align*}
    where $\widetilde{S}_w$ is the same Gaussian process of Theorem \ref{thm1} but with $\theta_*$ replacing $\btheta_0$ and $\underset{*}{\Longrightarrow}$ denoting weak convergence a.s. under the bootstrap law (see \cite{gine1990bootstrapping}).
\end{theorem}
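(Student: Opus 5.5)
The plan is to split $S^*_{n,w}$ into an oracle multiplier process plus remainders, prove the oracle part converges conditionally, and show the remainders vanish in probability under the bootstrap law.

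\textbf{Step 1 (decomposition).} Write
\[
\widehat e_t = e_t(\boldsymbol{\theta}_*) - \{f(\widehat{\mathbf I}_{t-1},\hbtheta_{f_n})-f(\mathbf I_{t-1},\hbtheta_{f_n})\} - \{f(\mathbf I_{t-1},\hbtheta_{f_n})-f(\mathbf I_{t-1},\boldsymbol{\theta}_*)\}.
\]
Under $H_0$, $e_t(\boldsymbol{\theta}_*)=\varepsilon_t$. Under $H_a$ it equals $a_t+\varepsilon_t$. Under $H_{a,n}$ it equals $\varepsilon_t+a_t/\sqrt{l_n}$. This gives
\[
S^*_{n,w}=S^{*,0}_{n,w}+R^*_{1n}+R^*_{2n},
\]
where $S^{*,0}_{n,w}$ uses $e_t(\boldsymbol{\theta}_*)V_t$. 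The term $R^*_{1n}$ collects the initial-value error and $R^*_{2n}$ the parameter-estimation error.

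\textbf{Step 2 (remainders).} Because the $V_t$ are i.i.d., mean zero, unit variance and independent of the data, each remainder has conditional mean zero. Its conditional second moment is
\[
E^*\|R^*\|^2=\sum_j (j\pi)^{-2}\, n_j^{-1}\sum_t d_t^2 \int |w_{t-j}|^2\,dW,
\]
where $d_t$ is the relevant residual difference. Boundedness of $w$ and $\sum_j j^{-2}<\infty$ reduce this to a bound by $C\,l_n^{-1}\sum_t d_t^2$ uniformly in $j$, since $n_j^{-1}\sum_{t}$ can be bounded by $(l_n-j+1)^{-1}$ times a sum over the whole checking sample. The $j$ near $l_n$ are handled by the $j^{-2}$ weight together with $n_j\ge1$.

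For $R^*_{1n}$, Assumption \ref{ass5} makes $\sum_t d_t^2=O_p(1)$.

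For $R^*_{2n}$, the mean-value theorem and Assumption \ref{ass2} give $|d_t|\le M_t|\hbtheta_{f_n}-\boldsymbol{\theta}_*|$. The ergodic theorem gives $l_n^{-1}\sum M_t^2=O(1)$ a.s., and consistency (Assumption \ref{ass3}(a)) gives $\hbtheta_{f_n}\to\boldsymbol{\theta}_*$. So both remainders are $o(1)$ conditionally, in probability. A subsequence argument upgrades this to the a.s. statement in the sense of Gin\'e--Zinn.

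Unlike Theorem \ref{thm1}, no $\sqrt{l_n}$ drift appears here. The multipliers kill the first-order term $\mathbf G_w'\sqrt{l_n}(\hbtheta-\boldsymbol{\theta}_*)$, because it is multiplied by mean-zero $V_t$ with variance of order $l_n^{-1}\cdot l_n\,|\hbtheta-\boldsymbol{\theta}_*|^2\to0$. For the same reason $a_t/\sqrt{l_n}$ is negligible under local alternatives.

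\textbf{Step 3 (oracle multiplier process).} Conditionally on the data, $S^{*,0}_{n,w}$ is a sum of independent, mean-zero $L_2(\Pi,\nu)$-valued elements $V_t\,U_t$, where
\[
U_t(\boldsymbol{\eta})=e_t(\boldsymbol{\theta}_*)\sum_{j\le t-(n-l_n)} n_j^{-1/2} w_{t-j}(\mathbf x)\Psi_j(\lambda).
\]
Three facts are needed.

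\emph{(i) Covariance convergence.} For each $h$, $E^*\langle S^{*,0},h\rangle^2\to\sigma_h^2$ a.s., with $\boldsymbol{\theta}_*$ in place of $\boldsymbol{\theta}_0$. The left side equals $\sum_t |\langle U_t,h\rangle|^2$. Replacing $n_j$ by $l_n$ costs $o(1)$ after truncating at $j\le K$ and using the tail bound $\sum_{j>K} j^{-2}$. The ergodic theorem, applied to the stationary sequence $e_t^2\,\langle w_{t-j}\Psi_j,h\rangle\overline{\langle w_{t-k}\Psi_k,h\rangle}$ for finitely many $j,k$, uses Assumption \ref{ass1}(b) and bounded $w$. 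This yields the limit $\sigma_h^2$ from \eqref{sigma_h}.

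\emph{(ii) Lindeberg condition.} We need $\sum_t E^*[V_t^2\|U_t\|^2\,\mathbb 1\{|V_t|\,\|U_t\|>\epsilon\}]\to0$. This follows from bounded $V_t$ together with $\max_t \|U_t\|^2\le C\max_t e_t^2/l_n\to0$ a.s., which is a consequence of stationarity and $Ee_1^2<\infty$.

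\emph{(iii) Tightness.} This holds via a Hilbert-space CLT, for instance the Lindeberg CLT for triangular arrays in separable Hilbert spaces. It suffices that the trace $E^*\|S^{*,0}\|^2$ converges to $E\|\widetilde S_w\|^2$, which follows from (i) applied to an orthonormal basis plus a dominated tail bound.

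Steps 1--3 together give $S^*_{n,w}\Rightarrow_*\widetilde S_w$ a.s.

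\textbf{Main obstacle.} The main obstacle is the uniform-in-$j$ control of the non-stationary weights $n_j^{-1/2}$ for lags $j$ comparable to $l_n$. This enters both the covariance limit and the trace convergence. I would handle it by truncating at a fixed $K$ and bounding the tail uniformly by $C\,(l_n^{-1}\sum_t e_t^2)\sum_{j>K}j^{-2}$, using $\sum_{j} n_j^{-1}\,j^{-2}\cdot(\text{sum over }n_j\text{ terms})$, and then letting $K\to\infty$.
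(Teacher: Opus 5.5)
Your argument is correct in substance and takes a different route from the paper at the key step. The paper applies a mean-value expansion to the whole bootstrap process around $\boldsymbol{\theta}_*$ and writes the estimation term as $\sqrt{l_n}\,R_n^*(\widetilde{\boldsymbol{\theta}}_{f_n})'(\hbtheta_{f_n}-\boldsymbol{\theta}_*)$. It then shows $\sup_{\boldsymbol{\theta}}\|R_n^*(\boldsymbol{\theta})\|=o_p(1)$ through a uniform ergodic theorem for the multiplier-weighted scores plus the $K$-truncation of Theorem \ref{thm1}, and appeals to Assumption \ref{ass3} for the $\sqrt{l_n}$-rate. The oracle part is handled by a conditional Lindeberg CLT, Cram\'er--Wold and a tightness argument.

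You never expand the process. Because $d_t$ depends only on the data and the $V_t$ are independent, centered and of unit variance, $E^*\|R^*\|^2$ is computed exactly. The estimate $\sum_{j\le l_n}j^{-2}n_j^{-1}=O(l_n^{-1})$, which is correct, then gives the bound $C|\hbtheta_{f_n}-\boldsymbol{\theta}_*|^2\,l_n^{-1}\sum_t\mathbf M_t^2$. This buys several things:
\begin{itemize}
\item It needs only consistency of $\hbtheta_{f_n}$. The paper's write-up leans on a $\sqrt{l_n}$-rate that Assumption \ref{ass3} supplies only under $H_0$ (and Assumption \ref{H1-ass1} under local alternatives), not under fixed alternatives.
\item It avoids the uniform law of large numbers over $\Theta$ for $\mathbf b_{j,n}^*$.
\item It avoids needing a single mean-value point valid for a function-valued map.
\item It disposes of the initial-value error and the $a_t/\sqrt{l_n}$ drift with the same one-line bound.
\end{itemize}
Your Hilbert-space Lindeberg CLT with covariance and trace convergence is an equally valid replacement for the paper's Cram\'er--Wold-plus-tightness step.

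One claim is wrong as stated: a subsequence argument does not upgrade $E^*\|R^*\|^2=o_p(1)$ to the almost-sure mode. It only shows that every subsequence has a further subsequence along which the conditional convergence holds a.s. That is exactly bootstrap consistency in probability, not a.s. convergence along the full sequence.

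To get the a.s. conclusion you need $\hbtheta_{f_n}\to\boldsymbol{\theta}_*$ a.s. Your bound then vanishes a.s., since $l_n^{-1}\sum_t\mathbf M_t^2$ is a.s. bounded by the ergodic theorem and the initial-value sum is a.s. finite by Assumption \ref{ass5} and monotone convergence. The rest of your Step 3 is already almost sure.

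The paper's proof has the same looseness, since it passes from an $o_p(1)$ remainder straight to an a.s. statement, so this is not specific to your route. Still, you should either assume strong consistency or state the result in probability.
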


\section{Simulation Studies} \label{Sec5}

To examine the finite-sample performance of the proposed tests, we carry out simulation studies with different DGPs under the nulls and under the alternatives. 
Here we consider the univariate case, that is, $m=1$ with $Z_{t}=Y_{t}, t \in \mathbb{Z}$. 
Denote $D_{n, I}^{2}$ and $D_{n, C}^{2}$ to be the proposed CvM tests corresponding to $w(Y_{t}, x)=\mathbbm{1}(Y_{t} \leq x)$ and $w (Y_{t}, x)=\exp (i Y_{t} x)$, where $W(\cdot)$ is chosen to be the empirical CDF $F_{n}(\cdot)$ based on $\{Y_{t-1}\}_{t=n-l_n+1}^{n}$ and the CDF $\Phi(\cdot)$ of a standard normal random variable, respectively.
Then,
\begin{align*}
    D_{n, I}^{2}=\sum_{j=1}^{l_n} \frac{n_{j}}{l_n (j\pi)^{2}} \sum_{t=n-l_n+1}^{n} \widehat{\gamma}_{j, I}^{2} (Y_{t-1}, \hbtheta_{f_n}),
\end{align*}
with $\widehat{\gamma}_{j, I}(x, \hbtheta_{f_n})= (\widehat{\sigma}_{e} n_{j})^{-1} \sum_{k=n-l_n+j}^{n} \widehat{e}_{k} (\hbtheta_{f_n}) \mathbbm{1}(Y_{k-j} \leq x )$, $\widehat{\sigma}_{e}^{2}=l_n^{-1} \sum_{s=n-l_n+1}^{n} \widehat{e}_{s}^{2}(\hbtheta_{f_n})$, and
\begin{align*}
D_{n, C}^{2}=\sum_{j=1}^{l_n} \frac{1}{\widehat{\sigma}_{e}^{2} n_{j}(j \pi)^{2}} \sum_{t=n-l_n+j}^{n} \sum_{s=n-l_n+j}^{n} \widehat{e}_{t} (\hbtheta_{f_n}) \widehat{e}_{s} (\hbtheta_{f_n}) \exp \Big\{-\frac{1}{2}(Y_{t-j}-Y_{s-j})^{2}\Big\}.
\end{align*}
Other choices of integrating functions $W(\cdot)$ in $D_{n, C}^{2}$ can also be applied. 
Besides, we also compare the finite-sample performance with that of CvM tests in \cite{escanciano2006goodness} without sample splitting, which are denoted as $\widetilde{D}_{n,I}^2$ and $\widetilde{D}_{n,C}^2$ using FDWB approximation for the bootstrapped statistics.

For the simulation setup, 
we consider sample sizes  $n=200$ and $n=500$, and the nominal level of $5\%$ for each test.
The Monte Carlo experiments are repeated $1,000$ times, and in each replication, the initial $200$ burn-in observations of the generated processes are discarded to ensure stationarity.
The number of bootstrap replications is $B=500$, and the random weighting 
$\{V_{t}\}$ is generated by i.i.d. Mammen's distribution \eqref{mammen}.

\subsection{Linear ARMA Model}\label{sec:5.1}

First we consider the linear case, where the null model is the AR$(1)$ model as follows:
\begin{align}  \label{H0-AR(1)}
    H_0: Y_{t}=b Y_{t-1}+\varepsilon_{t}, \quad \text{where} \quad \varepsilon_{t} \overset{\text{i.i.d.}}{\sim} N(0,1).
\end{align}
We examine the goodness-of-fit of this model under the following DGPs:
\begin{enumerate}
  \item AR(1) model: $Y_{t}= 0.6 Y_{t-1}+\varepsilon_{t}$;
  \item AR(1) model with exponential centered noise (AR-EXP): $Y_{t}=0.6 Y_{t-1}+\bar{\epsilon}_{t}$, $\bar{\epsilon}_{t}=\epsilon_{t}-1, \epsilon_{t} \sim \exp (1)$;
  \item AR(1) model with heteroscedasticity (AR-HET): 
  $Y_{t}=0.6 Y_{t-1}+h_{t} \varepsilon_{t}$, $ h_{t}^{2}=0.1+ 0.1 Y_{t-1}^{2}$;
  \item AR(1) model plus a bilinear term (AR-BIL): 
  $Y_{t}=0.6 Y_{t-1}+ 0.1 Y_{t-1} \varepsilon_{t}+\varepsilon_{t}$;
  \item AR(2) model: $Y_{t} = 0.6 Y_{t-1}- 0.5 Y_{t-2}+\varepsilon_{t}$;
  \item ARMA(1,1) model: 
  $Y_{t}= 0.6 Y_{t-1}+0.9 \varepsilon_{t-1}+\varepsilon_{t}$;
  \item Bilinear model (BIL): $Y_{t}=0.6 Y_{t-1} + 0.7 \varepsilon_{t-1} Y_{t-2}+\varepsilon_{t}$;
  \item Nonlinear moving average model (NLMA): $Y_{t}=0.6 Y_{t-1}+0.7 \varepsilon_{t-1} \varepsilon_{t-2}+\varepsilon_{t}$;
  \item Threshold autoregressive model (TAR): $Y_{t}=0.6 Y_{t-1} + \varepsilon_{t}$ if $Y_{t-1}<1$ and $Y_{t}=-0.5 Y_{t-1} + \varepsilon_{t}$ if $Y_{t-1} \geq 1$;
  \item Sign autoregressive model (SIGN): $Y_{t}=\operatorname{sign} (Y_{t-1})+ 0.43 \varepsilon_{t}$, where $\operatorname{sign}(x)=\mathbbm{1}(x>0)-\mathbbm{1}(x<0)$;
  \item Temp map model (TEM MAP): $Y_{t}=\alpha^{-1} Y_{t-1}$ if $0 \leq Y_{t-1}<\alpha$ and $Y_{t}=(1-\alpha)^{-1} (1-Y_{t-1})$ if $\alpha \leq Y_{t-1} \leq 1$, where $\alpha=0.49999$ and $Y_{0}$ is generated from the uniform distribution on $[0,1]$;
  \item Nonlinear autoregressive model (NAR): $Y_{t}=0.6 Y_{t-1} + 0.7 \sin (0.3 \pi Y_{t-2})+\varepsilon_{t}$.
\end{enumerate}
DGPs 1--4 are introduced to examine the empirical size of the tests under the nulls, and DGPs 5--12 are used to examine the power under the alternatives.

\begin{table}[t]
\begin{center}
\caption{Empirical Size (\%) of Tests at $5\%$ with sample size $n=200$ and $n=500$, without intercept in linear $H_0$ \eqref{H0-AR(1)}.} \label{table1}
\begin{tabular}{l*{4}{S}}
\hline \hline
$n=200$ & {AR(1)} & {AR(1)-EXP} &  {AR(1)-HET} &  {AR(1)-BIL} \\
\hline
${D}_{n, I}^{2}$   & 4.5 & 6.4 & 6.9 & 8.7 \\
$D_{n, C}^{2}$   & 5.1 & 6.1 & 5.7 & 6.9 \\
$\widetilde{D}_{n, I}^{2}$   & 4.6 & 6.2 & 5.8 & 6.9 \\
$\widetilde{D}_{n, C}^{2}$   & 4.0 & 6.0 & 5.9 & 6.7 \\
\hline  \hline
$n=500$ & {AR(1)} & {AR(1)-EXP} & {AR(1)-HET} &  {AR(1)-BIL} \\
\hline 
$D_{n, I}^{2}$  & 5.7 & 4.0   & 4.4  & 7.4 \\
$D_{n, C}^{2}$   & 4.5 & 4.6  & 4.9  &5.2 \\
$\widetilde{D}_{n, I}^{2}$   & 6.0 & 5.0  & 5.0  &4.8 \\
$\widetilde{D}_{n, C}^{2}$   & 5.3 & 4.9  & 4.4  &3.7 \\
\hline  \hline
\end{tabular}
\end{center}
\end{table}

\begin{table}[t]
\centering
\caption{Empirical Power (\%) of Tests at $5 \%$ with sample size $n=200$ and $n=500$, without intercept in linear $H_0$ \eqref{H0-AR(1)}.}  \label{table2}
\begin{tabular}{l*{8}{S}}
\hline \hline
$n=200$ & {AR$(2)$} & {ARMA} & {BIL} & {NLMA} & {TAR} & {SIGN} & {TEM MAP} & {NAR} \\
\hline
${D}_{n, I}^{2}$   & 5.9 & 3.8 & 10.3 & 6.5 & 85.6 & 66.0 & 100.0 & 17.9 \\
$D_{n, C}^{2}$   & 22.2 & 3.2 & 37.3 & 15.8 & 98.4 & 93.9 & 100.0 & 44.5 \\
$\widetilde{D}_{n, I}^{2}$    & 7.9 & 3.4 & 9.5 & 7.5 & 97.5 & 71.0 & 100.0 & 23.3 \\
$\widetilde{D}_{n, C}^{2}$   & 29.4 & 4.0 & 27.1 & 13.0 & 98.8 & 98.6 & 100.0 & 47.9 \\
\hline \hline
$n=500$ & {AR$(2)$} &  {ARMA} & {BIL} & {NLMA} & {TAR} & {SIGN} & {TEM MAP} & {NAR} \\
\hline
${D}_{n, I}^{2}$   & 84.8 &  9.6 & 18.4 & 9.1 & 99.9 & 94.8 & 100.0 & 25.8 \\
$D_{n, C}^{2}$   & 98.7 &  19.7 & 82.1 & 35.9 & 100.0 & 100.0 & 100.0 & 88.0 \\
$\widetilde{D}_{n, I}^{2}$    & 92.5 &  10.3 & 12.8 & 7.5 & 100.0 & 99.2 & 100.0 & 30.4 \\
$\widetilde{D}_{n, C}^{2}$   & 99.3 &  23.8 & 69.3 & 31.6 & 100.0 & 100.0 & 100.0 & 95.6 \\
\hline  \hline
\end{tabular}
\end{table}

For the computation of all test statistics, we apply the usual least squares residuals under $H_0$, that is, $\widehat{e}_{t} (\hbtheta_{f_n})=Y_{t}-\widehat{b}_{f_n} Y_{t-1}$, for $t=n-l_n+1,\ldots,n$. 
Table \ref{table1} reports the empirical rejection probabilities (RPs) associated with DGPs 1--4 to examine the empirical size of the tests,
and Table \ref{table2} reports the empirical RP associated with DGPs 5-12 to examine the empirical power of the tests in the cases of $n=200$ and $n=500$.

Table \ref{table1} shows that the proposed split-sample generalized spectral testing framework attains empirical sizes close to the nominal 5\% level in finite samples under the linear AR(1) null models, and that these sizes are comparable to those of the generalized spectral tests in \cite{escanciano2006goodness}. 
This finding supports Theorem \ref{thm1}, which states that the proposed test has the same null limiting distribution as the oracle process.
Table \ref{table2} indicates that the proposed tests exhibit empirical power comparable to that of the full-sample generalized spectral tests, and that the power increases as the sample size grows from $n=200$ to $n=500$, which validates Theorem \ref{thm-global H1} for the consistency of the proposed tests.

\subsection{Nonlinear Volatility Model}

In the nonlinear case, we are interested in analyzing the size and power against misspecifications in the conditional variance.
We consider two common conditional heteroscedasticity models:

1. 
ARCH-type null model, e.g., ARCH$(1)$ model:
\begin{align}  \label{H0-ARCH}
    &H_0: y_t=\sigma_t \eta_t, ~
		\sigma_{t}^{2}= \omega + \phi y_{t-1}^{2}, 
        \quad \text{where} \quad \eta_{t} \overset{\text{i.i.d.}}{\sim} N(0,1).
\end{align}
and ARCH$(4)$ model:
\begin{align}  \label{H0-ARCH4}
    &H_0: y_t=\sigma_t \eta_t, ~
		\sigma_{t}^{2}= \omega + \sum_{i=1}^4 \phi_i y_{t-i}^{2}, 
        \quad \text{where} \quad \eta_{t} \overset{\text{i.i.d.}}{\sim} N(0,1).
\end{align}
2. 
GARCH-type null model, e.g., GARCH$(2,2)$ model:
\begin{align}  \label{H0-GARCH22}
    &H_0: y_t=\sigma_t \eta_t, ~
		\sigma_{t}^{2}= \omega + 
        \sum_{i=1}^2 \phi_i y_{t-i}^{2} + 
        \sum_{j=1}^2 \psi_j \sigma_{t-j}^{2}, 
        ~ \text{where} ~ \eta_{t} \overset{\text{i.i.d.}}{\sim} N(0,1).
\end{align}
Recall that under the null, $Y_t$ corresponds to $y_t^2$, $E[Y_t | \mathbf{I}_{t-1}] = \sigma_t^2$, and the MDS $\varepsilon_t = Y_t - E[Y_t | \mathbf{I}_{t-1}] = \sigma_t^2 (\eta_t^2 -1)$.
We examine the goodness-of-fit of this model under the following DGPs, where for all cases $\eta_{t} \overset{\text{i.i.d.}}{\sim} N(0,1)$:
\begin{enumerate}
  \item ARCH(1) model: $y_t=\sigma_t \eta_t, ~
		\sigma_{t}^{2}=0.9 + 0.1  y_{t-1}^{2}$;
  \item ARCH(2) model: $y_t=\sigma_t \eta_t, ~
		\sigma_{t}^{2}=0.9 + 0.1 y_{t-1}^{2} + 0.8 y_{t-2}^{2}$;
  \item ARCH(4) model: $y_t=\sigma_t \eta_t, ~
		\sigma_{t}^{2}=0.9 + 0.1 y_{t-1}^{2} + 0.2 y_{t-2}^{2} + 0.2 y_{t-3}^{2} + 0.1 y_{t-4}^{2}$;
  \item ARCH(5) model: $y_t=\sigma_t \eta_t, ~
		\sigma_{t}^{2}=0.9 + 0.1 y_{t-1}^{2} + 0.2 y_{t-2}^{2} + 0.2 y_{t-3}^{2} + 0.1 y_{t-4}^{2} + 0.3 y_{t-5}^{2}$;
 \item GARCH(1,1) model: $y_t=\sigma_t \eta_t, ~
		\sigma_{t}^{2}=0.01 + 0.29  y_{t-1}^{2} + 0.7 \sigma_{t-1}^{2}$;
  \item GARCH(2,2) model: $y_t=\sigma_t \eta_t, ~
		\sigma_{t}^{2}=0.1 + 0.2 y_{t-1}^{2} + 0.2 y_{t-2}^{2} + 0.3 \sigma_{t-1}^{2} + 0.1 \sigma_{t-2}^{2}$;
  \item EGARCH(1,1) model: 
  $y_t=\sigma_t \eta_t, ~
		\log \sigma_{t}^{2}=0.01 + 0.9 \log \sigma_{t-1}^{2} + 0.3(|\eta_{t-1}| - (2/\pi)^{1/2}) - 0.8 \eta_{t-1}$;
  \item Stochastic volatility (SV) model: $y_t=\sigma_t \eta_t, ~
		\sigma_{t}^{2}= 0.1 y_{t-1}^2 + \exp (0.98 \log \sigma_{t-1}^2 + v_t)$;
  \item Bilinear model (BIL): $y_{t}=0.8 \eta_{t-1} y_{t-1}+\eta_{t}$;
  \item Logistic map (LM): $y_{t}= 4 y_{t-1} (1 - y_{t-1})$, where $y_{0}$ is generated from the uniform distribution on $[0,1]$;
  \item Nonlinear moving average model (NLMA): $y_{t}=0.8 \eta_{t-1}^2+\eta_{t}$.
\end{enumerate}
The DGPs for comparisons are the same as those in \cite{Escanciano2008}.

To compute the test statistics, we apply the quasi-MLE for estimation of unknown parameters $\btheta= (\omega, \phi_1, \ldots, \phi_p, \psi_1, \ldots, \psi_q)' \in \mathbb{R}^{p+q+1}$, and define the residuals to be $\widehat{e}_{t} (\hbtheta_{f_n})=Y_{t}-\sigma_t^2(\hbtheta_{f_n})$, for $t=n-l_n+1,\ldots,n$, where $\hbtheta_{f_n}$ is the quasi-MLE for $\btheta_0$ defined in Section \ref{secA.2} using the first $f_n$ samples.

\begin{table}[t]
\centering
\caption{Empirical Size and Power (\%) of Tests at $5\%$ with sample size $n=200$ and $n=500$ for nonlinear $H_0$ \eqref{H0-ARCH} under ARCH$(1)$ model with DGP 1.}  \label{table12}
\begin{tabular}{l *{8}{S}}
\hline \hline
$200$ & {ARCH$(1)$} & {ARCH$(2)$} & {GARCH} & {EGARCH} & {SV} & {BIL} & {LM} & {NLMA} \\
\hline
${D}_{n, I}^{2}$   & 3.3 & 77.6 & 55.5 & 41.4 & 100.0 & 4.2 & 100.0 & 13.5 \\
$D_{n, C}^{2}$   & 3.5 & 82.6 & 51.6 & 42.8 & 100.0 & 9.4 & 4.2 & 28.1 \\
$\widetilde{D}_{n, I}^{2}$    & 2.5 & 74.0 & 56.9 & 32.9 & 100.0 & 13.8 & 100.0 & 50.9 \\
$\widetilde{D}_{n, C}^{2}$   & 2.5 & 72.3 & 24.9 & 29.2 & 100.0 & 16.9 & 100.0 & 50.1 \\
\hline \hline
$500$ & {ARCH$(1)$} & {ARCH$(2)$} & {GARCH} & {EGARCH} & {SV} & {BIL} & {LM} & {NLMA} \\
\hline
${D}_{n, I}^{2}$   & 3.1 & 96.7 & 71.2 & 59.2 & 100.0 & 5.6 & 100.0 & 39.6 \\
$D_{n, C}^{2}$   & 3.9 & 96.7 & 64.8 & 60.4 & 100.0 & 22.9 & 40.3 & 68.2 \\
$\widetilde{D}_{n, I}^{2}$    & 1.8 & 86.7 & 91.3 & 56.2 & 100.0 & 31.6 & 100.0 & 93.3 \\
$\widetilde{D}_{n, C}^{2}$   & 3.1 & 86.4 & 53.2 & 54.9 & 100.0 & 36.6 & 100.0 & 89.7 \\
\hline \hline
\end{tabular}
\end{table}

For the ARCH$(1)$ null \eqref{H0-ARCH}, Table \ref{table12} reports the empirical rejection probabilities (RPs) associated with DGPs 1--2, 5, and 7--11 to examine the empirical size and power of tests at 5\% with sample size $n=200$ and $n=500$.
It shows that the proposed split-sample tests achieve empirical sizes close to the nominal $5\%$ level, with size performance comparable to that of the original full-sample generalized spectral tests. Their empirical power is also broadly comparable across the nonlinear alternatives considered, and it increases as the sample size rises from $n=200$ to $n=500$.

For the ARCH$(4)$ null \eqref{H0-ARCH4}, Table \ref{table14} reports the empirical rejection probabilities (RPs) under the DGPs 3--5 and 7--11.
It indicates that the proposed split-sample tests again maintain empirical sizes close to the nominal level and have power comparable to that of the corresponding full-sample tests. Moreover, in several cases, the split-sample procedures outperform the original tests in finite samples, both in terms of size accuracy and empirical power. The power improvement with larger sample sizes is also evident when changing the sample size from $n=200$ to $n=500$.

\begin{table}[t]
\centering
\caption{Empirical Size and Power (\%) of Tests at $5 \%$ with sample size $n=200$ and $n=500$ for nonlinear $H_0$ \eqref{H0-ARCH4} under ARCH$(4)$ model with DGP 3.}  \label{table14}
\begin{tabular}{l *{8}{S}}
\hline \hline
$200$ & {ARCH$(4)$} & {ARCH$(5)$} & {GARCH} & {EGARCH} & {SV} & {BIL} & {LM} & {NLMA} \\
\hline
${D}_{n, I}^{2}$   & 6.6 & 13.5 & 19.7 & 24.9 & 100.0 & 3.0 & 100.0 & 8.7 \\
$D_{n, C}^{2}$   & 6.2 & 12.2 & 18.3 & 23.3 & 100.0 & 9.8 & 15.5 & 21.2 \\
$\widetilde{D}_{n, I}^{2}$    & 1.5 & 3.1 & 4.7 & 19.0 & 100.0 & 10.3 & 82.3 & 31.0 \\
$\widetilde{D}_{n, C}^{2}$   & 1.3 & 2.9 & 8.3 & 16.0 & 100.0 & 13.1 & 74.4 & 29.9 \\
\hline \hline
$500$ & {ARCH$(4)$} & {ARCH$(5)$} & {GARCH} & {EGARCH} & {SV} & {BIL} & {LM} & {NLMA} \\
\hline
${D}_{n, I}^{2}$   & 4.8 & 16.5 & 18.1 & 54.2 & 100.0 & 6.4 & 100.0 & 19.7 \\
$D_{n, C}^{2}$   & 5.2 & 14.7 & 17.9 & 50.6 & 100.0 & 25.5 & 43.7 & 52.3 \\
$\widetilde{D}_{n, I}^{2}$    & 3.2 & 7.6 & 8.2 & 57.3 & 100.0 & 30.5 & 99.3 & 85.6 \\
$\widetilde{D}_{n, C}^{2}$   & 4.0 & 5.8 & 10.8 & 51.4 & 100.0 & 37.5 & 96.4 & 82.1 \\
\hline \hline
\end{tabular}
\end{table}

For the GARCH$(2,2)$ null \eqref{H0-GARCH22}, Table \ref{table10} reports the empirical rejection probabilities (RPs) under the DGPs 4 and 6--11.
It shows a similar pattern that the proposed split-sample tests deliver empirical sizes close to the nominal $5\%$ level and exhibit empirical power broadly comparable to that of the original full-sample tests. 
In several scenarios, they even improve upon the original procedures in terms of finite-sample size control or rejection power. 
In the other settings, the power generally increases with the sample size.

\begin{table}[t]
\centering
\caption{Empirical Size and Power (\%) of Tests at $5 \%$ with sample size $n=200$ and $n=500$ for nonlinear $H_0$ \eqref{H0-GARCH22} under GARCH$(2,2)$ model with DGP 6.}  \label{table10}
\begin{tabular}{l*{8}{S}}
\hline \hline
$n=200$ & {GARCH$(2,2)$} & {ARCH$(5)$} & {EGARCH} & {SV} & {BIL} & {LM} & {NLMA} \\
\hline
${D}_{n, I}^{2}$   & 6.8 & 14.4 &  30.8 & 100.0  & 3.6 & 100.0 & 8.4 \\
$D_{n, C}^{2}$   & 7.6 & 13.7  & 30.0  & 100.0 & 11.6 & 4.4 & 21.2   \\
$\widetilde{D}_{n, I}^{2}$   & 3.6 & 4.3 & 21.6 & 100.0 & 12.0 & 98.8 & 36.0   \\
$\widetilde{D}_{n, C}^{2}$   & 2.0  & 5.4 & 20.8 & 100.0 & 16.0 & 93.6 & 34.8 \\
\hline \hline
$n=500$ & {GARCH$(2,2)$} & {ARCH$(5)$} & {EGARCH} & {SV} & {BIL} & {LM} & {NLMA} \\
\hline
${D}_{n, I}^{2}$   & 5.7 & 7.0 & 50.6 & 100.0 & 6.4 & 100.0 & 27.1  \\
$D_{n, C}^{2}$   & 5.9 & 6.9 & 49.1 & 100.0 & 25.2 & 36.5 & 58.7   \\
$\widetilde{D}_{n, I}^{2}$   & 2.4 & 6.1 & 60.6 & 100.0 & 29.2 & 99.9 & 84.2    \\
$\widetilde{D}_{n, C}^{2}$   & 4.2 & 4.4 & 54.2 & 100.0 & 35.2 & 99.1 & 80.3   \\
\hline \hline
\end{tabular}
\end{table}

\subsection{Threshold Autoregressive Model}\label{sec:5.3}

We proceed to study more complex nonlinear cases, for example, the threshold autoregressive (TAR) model.
Simulation studies show that our proposed test still has consistently good size and power against model misspecifications.
We consider the following three-regime self-exciting TAR(1,1,1; $d=1$) process as null model, and set the threshold variable to be of lag 1:
\begin{align}  \label{H0-TAR(1)}
    H_0: Y_{t}=\begin{cases}
\phi_1 Y_{t-1}+\varepsilon_t, & \text{if } Y_{t-1} < r_1, \\
\phi_2 Y_{t-1}+\varepsilon_t, & \text{if } r_1 \leq Y_{t-1} < r_2, \\
\phi_3 Y_{t-1}+\varepsilon_t, & \text{if } Y_{t-1} \geq r_2,
\end{cases}
\quad \text{where} \quad \varepsilon_{t} \overset{\text{i.i.d.}}{\sim} N(0,1).
\end{align}
We examine the goodness-of-fit of this model under the following DGPs:
\begin{enumerate}
  \item Self-exciting TAR(1,1,1; $d=1$) model: 
  $Y_{t}=0.6 Y_{t-1} + \varepsilon_{t}$ if $Y_{t-1}<-1$; 
  $Y_{t}= -0.5 Y_{t-1} + \varepsilon_{t}$ if $-1 \leq Y_{t-1}<1$;
  and $Y_{t}=0.3 Y_{t-1} + \varepsilon_{t}$ if $Y_{t-1} \geq 1$;
  \item AR(2) model: $Y_{t} = 0.6 Y_{t-1}- 0.5 Y_{t-2}+\varepsilon_{t}$;
  \item ARMA(1,1) model: 
  $Y_{t}= 0.6 Y_{t-1}+0.9 \varepsilon_{t-1}+\varepsilon_{t}$;
  \item Bilinear model (BIL): $Y_{t}=0.6 Y_{t-1} + 0.7 \varepsilon_{t-1} Y_{t-2}+\varepsilon_{t}$;
  \item Nonlinear moving average model (NLMA): $Y_{t}=0.6 Y_{t-1}+0.7 \varepsilon_{t-1} \varepsilon_{t-2}+\varepsilon_{t}$;
  \item Sign autoregressive model (SIGN): $Y_{t}=\operatorname{sign} (Y_{t-1})+ 0.43 \varepsilon_{t}$, where $\operatorname{sign}(x)=\mathbbm{1}(x>0)-\mathbbm{1}(x<0)$;
  \item Temp map model (TEM MAP): $Y_{t}=\alpha^{-1} Y_{t-1}$ if $0 \leq Y_{t-1}<\alpha$ and $Y_{t}=(1-\alpha)^{-1} (1-Y_{t-1})$ if $\alpha \leq Y_{t-1} \leq 1$, where $\alpha=0.49999$ and $Y_{0}$ is generated from the uniform distribution on $[0,1]$;
  \item Nonlinear autoregressive model (NAR): $Y_{t}=0.6 Y_{t-1} + 0.7 \sin (0.3 \pi Y_{t-2})+\varepsilon_{t}$.
\end{enumerate}

To compute the test statistics, for each fixed pair $(r_1,r_2)'$, we first estimate $(\phi_1,\phi_2,\phi_3)'$ by least squares. 
We then select the optimal threshold pair 
$(r_1,r_2)'$ over a grid by minimizing the least squares criterion. 
The residuals $\widehat e_t(\widehat{\boldsymbol\theta}_{f_n})$ are then obtained by plugging in the estimator $\hbtheta_{f_n} = (\widehat\phi_1,\widehat\phi_2,\widehat\phi_3,\widehat r_1,\widehat r_2)'$ using the first $f_n$ samples, for $t=n-l_n+1,\ldots,n$.

Table \ref{table16} reports the empirical sizes and powers (\%) of the tests at the nominal $5\%$ level for the null hypothesis $H_0$ of the three-regime TAR$(1,1,1;d=1)$ model in \eqref{H0-TAR(1)}, under DGPs 1--8 and sample sizes $n=200$. 
We omit the case $n=500$ for now because the full-sample generalized spectral tests are quite time-consuming for the three-regime TAR model, and $n=200$ is sufficient to illustrate the validity of the proposed tests.
The results show that the proposed split-sample tests again maintain empirical sizes close to the nominal level and exhibit power comparable to that of the corresponding full-sample tests.

\begin{table}[t]
\centering
\caption{Empirical size and power (\%) of $5\%$ tests for the null hypothesis $H_0$ of three-regime TAR$(1,1,1;d=1)$ model in \eqref{H0-TAR(1)}, under the DGPs 1-8 and sample sizes $n=200$.
}  \label{table16}
\begin{tabular}{l*{8}{S}}
\hline \hline
$n=200$ & {TAR} & {AR$(2)$} & {ARMA$(1,1)$} & {BIL} & {NLMA} & {SIGN} & {TEM MAP} & {NAR} \\
\hline
${D}_{n, I}^{2}$               & 6.8 & 14.4 &  8.2   & 10.0  & 12.1 & 40.2 & 63.2 & 21.4 \\
$D_{n, C}^{2}$               & 6.6 & 35.2  &  15.4  & 26.1 & 17.1 & 47.0 & 14.2 & 36.4  \\
$\widetilde{D}_{n, I}^{2}$   & 6.0 & 65.0  &   8.0  & 11.4 & 8.0 & 54.2 & 100.0 & 8.5  \\
$\widetilde{D}_{n, C}^{2}$   & 6.4 & 75.6  & 9.8  & 15.0 & 9.1 & 64.9 & 100.0 & 25.6 \\
\hline \hline
\end{tabular}
\end{table}

\subsection{Computational Time}

We now collect the computational time comparisons across the simulation designs. The reported times are averaged over five replications for one Monte Carlo experiment.  
Table \ref{table:time_summary} reports the corresponding results for the null models in Sections \ref{sec:5.1}-\ref{sec:5.3}, namely the AR(1), ARCH(1), ARCH(4), GARCH(2,2) and three-regime TAR(1,1,1) models.

\begin{table}[t]
\begin{center}
\caption{Computational time (s) for one Monte Carlo experiment for each test with five replications under different null models.}
\label{table:time_summary}
\setlength{\tabcolsep}{4pt}
\begin{tabular}{lccccc}
\hline \hline
Model & {AR$(1)$} & {ARCH$(1)$} & {ARCH$(4)$} & {GARCH$(2,2)$} & {\shortstack{TAR$(1,1,1)$}} \\
\hline
\multicolumn{6}{c}{$n=200$} \\
\hline
$D_{n,I}^{2}$             & 3.60 & 4.49 & 3.19 & 3.69 & 6.16 \\
$D_{n,C}^{2}$             & 4.15 & 6.71 & 3.14 & 4.30 & 8.13 \\
$\widetilde{D}_{n,I}^{2}$ & 3.60 & 7.23 & 40.18 & 98.90 & 3649.35 \\
$\widetilde{D}_{n,C}^{2}$ & 4.15 & 9.69 & 40.12 & 103.55 & 3609.34 \\
\hline
\multicolumn{6}{c}{$n=500$} \\
\hline
$D_{n,I}^{2}$             & 69.17 & 66.04 & 59.89 & 70.97 & 341.95 \\
$D_{n,C}^{2}$             & 61.58 & 97.96 & 44.54 & 59.57 & 409.39 \\
$\widetilde{D}_{n,I}^{2}$ & 70.27 & 71.83 & 127.86 & 133.56 & 334211.33 \\
$\widetilde{D}_{n,C}^{2}$ & 61.72 & 103.86 & 112.50 & 119.15 & 332321.66 \\
\hline \hline
\end{tabular}
\end{center}
\end{table}

The timing results highlight where sample splitting is most useful computationally. In the linear AR case, the proposed and full-sample generalized spectral tests have similar computational costs, because least squares estimation is inexpensive even when repeated inside the bootstrap. A similar pattern appears for the ARCH$(1)$ null, where repeated estimation is still relatively light. The advantage becomes much clearer for higher-order volatility models. Under the ARCH$(4)$ and GARCH$(2,2)$ nulls, the full-sample procedures require repeated bootstrap re-estimation of a larger nonlinear model, whereas the proposed tests estimate the model only once and then apply multipliers directly to the split-sample residuals.

The computational gain is most pronounced for the three-regime TAR model. There, each full-sample bootstrap replication requires another threshold search and model refit, which is costly even for moderate sample sizes. By contrast, the split-sample procedure keeps the fitted parameter fixed throughout the bootstrap. As a result, the proposed tests can be much faster while maintaining accurate size and comparable power performance.

\section{Empirical Application} \label{Sec6}

\subsection{S\&P 500 Dynamics}

The dynamic behavior of stock returns has been a central topic in financial econometrics, since the specification of the conditional mean and conditional variance directly affects risk measurement, portfolio allocation, and asset pricing. 
In particular, it is important to distinguish whether the serial dependence in financial returns is mainly driven by the conditional mean, the conditional variance, or both.

We use the same dataset as in the empirical study of  \cite{Escanciano2008}, namely the log differences of the S\&P 500 daily stock index in a sample period from January 1, 1988 to May 28, 1993. Following the literature, we delete the last 10\% of the observations, leaving 1138 observations. We examine the following three model specifications:
\begin{enumerate}
    \item an AR(1) model with conditional homoskedastic errors;
    \item a GARCH(1,1) model without the AR(1) component in the conditional mean;
    \item an AR(1)-GARCH(1,1) model, where both the conditional mean and conditional variance are modeled.
\end{enumerate}
The bootstrap $p$-values are reported in Table \ref{table:SP500}. 
The first four columns $D_{n, I}^{2}$ and $D_{n, C}^{2}$ give the results of the proposed sample-splitting tests in \eqref{S_nw}, while the last four columns $\widetilde{D}_{n,I}^2$ and $\widetilde{D}_{n,C}^2$ report the corresponding full-sample tests in \cite{escanciano2006goodness, Escanciano2008}. 
For each fitted model, we consider both marginal and joint tests for the conditional mean and conditional variance, denoted by the subscripts $m$ and $v$, respectively.
The number of bootstrap replications is $B=500$.

Table \ref{table:SP500} shows that the sample-splitting tests yield conclusions very similar to those from the full-sample tests. 
For the AR(1) model with conditional homoskedastic errors, the $p$-values are generally large for both the conditional mean and conditional variance tests. This indicates that the linear AR(1) specification captures the conditional mean adequately, and there is no strong evidence against conditional homoskedasticity. 

When the AR(1) component of the mean is neglected and only a GARCH(1,1) model is fitted, the conditional mean tests strongly reject the null hypothesis.
This suggests that ignoring the linear dependence in the conditional mean leads to misspecification. 
In contrast, for the AR(1)-GARCH(1,1) model, the conditional mean appears to be reasonably specified, but the conditional variance tests produce zero $p$-values. 
Hence, the GARCH specification for the conditional variance is rejected. 
These findings are consistent with the conclusions in the existing study of \cite{hong2003diagnostic} and \cite{Escanciano2008}: the S\&P 500 returns over this period are better described by a linear AR(1) mean structure with conditional homoskedastic errors, rather than by an AR(1)-GARCH(1,1) model.

\begin{table}[]
\begin{center}
\caption{Bootstrap $p$-values for the sample-splitting and full-sample tests of conditional mean and conditional variance specifications based on the S\&P 500 daily stock index.}
\label{table:SP500}
\begin{tabular}{c|*{4}{c}|*{4}{c}}
\hline \hline
\rule{0pt}{2.8ex} \multirow{2}{*}{Models} & \multicolumn{4}{c|}{Sample-splitting Tests}
 & \multicolumn{4}{c}{Full-sample Tests} \\ [2pt]
\cline{2-5} \cline{6-9}
 & {$\rule{0pt}{2.8ex} D_{n,I,m}^{2}$} 
 & {$D_{n,C,m}^{2}$} 
 & {$D_{n,I,v}^{2}$} 
 & {$D_{n,C,v}^{2}$} 
 & {$\widetilde{D}_{n,I,m}^{2}$} 
 & {$\widetilde{D}_{n,C,m}^{2}$} 
 & {$\widetilde{D}_{n,I,v}^{2}$} 
 & {$\widetilde{D}_{n,C,v}^{2}$} \\ [2pt]
\hline
1 & 0.278 & 0.100 & 0.680 & 0.868 & 0.193 & 0.130 & 0.468 & 0.123 \\
2 & 0.000 & 0.000 & 0.000 & 0.000 & 0.000 & 0.000 & 0.000 & 0.000 \\
3 & 0.280 & 0.108 & 0.000 & 0.000 & 0.145 & 0.153 & 0.000 & 0.000 \\
\hline
\end{tabular}
\end{center}
\end{table}

Overall, this empirical application confirms the usefulness of the sample-splitting tests. 
Although the sample-splitting procedure uses only part of the sample for estimation and the remaining part for testing, it delivers conclusions comparable to those from the full-sample tests.

\subsection{Sunspot Data for the TAR Model}

To further illustrate the robustness of the proposed test and its computational advantage, we consider a classical dataset in the TAR literature, that is, the annual Wolf's sunspot data from 1700 to 1979.
This dataset has been studied in \cite{tong1978threshold}, \cite{tsay1989testing} and the references therein.
The series consists of 280 observations and is well known for its asymmetric cyclical behavior.
Let $\{Y_t\}_{t=1}^{280}$ denote the annual sunspot series.

Following the model specification in \cite{tsay1989testing}, we fit $\{Y_t\}_{t=1}^{280}$ using the following three-regime TAR model:
\begin{align} \label{tar-sunspot}
Y_t =
\begin{cases}
\phi_{10} + \sum_{i=1}^{11}\phi_{1i}Y_{t-i}+\varepsilon_t,
& \text{if } Y_{t-2}<r_1, \\[1.2ex]
\phi_{20} +\sum_{i=1}^{10}\phi_{2i}Y_{t-i}+\varepsilon_t,
& \text{if } r_1\leq Y_{t-2}<r_2, \\[1.2ex]
\phi_{30} +\sum_{i=1}^{10}\phi_{3i}Y_{t-i}+\varepsilon_t,
& \text{if } Y_{t-2}\geq r_2,
\end{cases}
\end{align}
where $Y_{t-2}$ is selected as the threshold variable, and the AR orders are 11, 10, and 10, respectively.
The unknown autoregressive coefficients are estimated by least squares for each fixed pair of threshold values, and the thresholds are selected by minimizing the least squares criterion.
In \cite{tsay1989testing}, the martingale difference assumption that $E(\varepsilon_t|\mathbf{I}_{t-1}) = 0$ is imposed but not formally tested. It motivates us to apply the generalized spectral test to examine the MDS assumption.

Table \ref{table:sunspot} reports the bootstrap $p$-values for different models. 
The first four columns $D_{n, I}^{2}$ and $D_{n, C}^{2}$ give the results of the proposed sample-splitting tests in \eqref{S_nw}, while the last four columns $\widetilde{D}_{n,I}^2$ and $\widetilde{D}_{n,C}^2$ report the corresponding full-sample tests of \cite{escanciano2006goodness, Escanciano2008}. 
The number of bootstrap replications is $B=500$.

The results show strong evidence in favor of the three-regime TAR(11,10,10) specification for the annual sunspot data.
The large bootstrap $p$-values indicate that the residuals from the fitted TAR model do not exhibit significant departures from the martingale difference property. Moreover, the sample-splitting tests lead to conclusions very similar to those obtained from the full-sample tests.
Meanwhile, the computational gain is substantial. The proposed sample-splitting tests are more than 20 times faster than the full-sample tests.
This gain comes from avoiding repeated model re-estimation and threshold search in the bootstrap procedure, which is computationally expensive for TAR models with multiple regimes and threshold parameters.

For comparison, we also fit the dataset $\{Y_t\}_{t=1}^{280}$ using a constant mean model:  
\begin{align*}
    Y_t = \phi_0 + \varepsilon_t,
\end{align*}
or a linear AR(p) model:  
\begin{align*}
    Y_t = \phi_{0} + \sum_{i=1}^{p}\phi_{i}Y_{t-i}+\varepsilon_t,
\end{align*}
with $p=5,10$, and 11. 
The testing results are summarized in Table \ref{table:sunspot}.
From this table, we can find that the constant model and AR(5) model are strongly rejected by all tests. 
Meanwhile, the AR(10) model is rejected by most tests at the 10\% significance level, and the bootstrap $p$-values for the AR(11) model remain relatively small compared with those for the three-regime TAR model. 
Overall, the proposed sample-splitting tests produce conclusions consistent with the full-sample tests and suggest that the constant and linear AR specifications are inadequate for the annual sunspot series.

\begin{table}[]
\begin{center}
\caption{Bootstrap $p$-values for the sample-splitting and full-sample tests based on the annual sunspot data from 1700 to 1979 under different models, as well as computational time (s) for testing the three-regime TAR model.}
\label{table:sunspot}
\begin{tabular}{c|*{2}{c}|*{2}{c}}
\hline \hline
\rule{0pt}{2.8ex} \multirow{2}{*}{Models} 
& \multicolumn{2}{c|}{Sample-splitting Tests}
& \multicolumn{2}{c}{Full-sample Tests} \\ [2pt]
\cline{2-3} \cline{4-5}
& {$\rule{0pt}{2.8ex} D_{n,I}^{2}$} 
& {$D_{n,C}^{2}$} 
& {$\widetilde{D}_{n,I}^{2}$} 
& {$\widetilde{D}_{n,C}^{2}$} \\ [2pt]
\hline
\rule{0pt}{2.5ex} TAR(11,10,10) & 0.706 & 0.838 & 1.000 & 1.000 \\
Constant & 0.000 & 0.000 & 0.000 & 0.000 \\
AR(5) & 0.000 & 0.000 & 0.000 & 0.004 \\
AR(10) & 0.082 & 0.100 & 0.096 & 0.930 \\
AR(11) & 0.136 & 0.176 & 0.156 & 0.958 \\ [1pt]
\hline
\rule{0pt}{2.5ex} Time for TAR model & 11.11 & 10.73 & 265.28 & 265.02 \\ [1pt]
\hline
\end{tabular}
\end{center}
\end{table}

\section{Conclusion and Discussion} \label{Sec7}

This paper develops a sample-splitting generalized spectral test for diagnostic checking of parametric time-series models. The proposed procedure combines the omnibus, bandwidth-free structure of generalized spectral tests with a split-sample estimation strategy that removes the first-order effect of parameter estimation under suitable split-overlap and score-alignment conditions. The resulting statistic targets pairwise conditional mean dependence in residuals and therefore provides a diagnostic checking for violations of the martingale-difference implication of a correctly specified conditional mean model.

A practical advantage of the method is that critical values can be obtained from a multiplier bootstrap applied directly to the split-sample residuals. The bootstrap avoids generating artificial time series and avoids re-estimating the model in each replication, which can lead to substantial computational savings for nonlinear or numerically intensive models. The simulation results suggest that the test has a reliable size and competitive power across linear, heteroskedastic, and nonlinear dynamic specifications.

Several extensions are worth pursuing. First, the same sample-splitting principle may be useful for broader conditional moment restriction tests, including recent Gaussian-process-based model checks \citep{Escanciano2024}. Second, the choice of weight function and integrating measure may affect finite-sample power, and a systematic comparison of indicator, characteristic-function, and data-adaptive weights would be valuable. Third, while the present paper focuses on pairwise conditional mean restrictions, extensions toward diagnostics for conditional quantile restrictions, higher-order or joint conditional mean and variance dependence remain important directions for future research.

\newpage

\clearpage
\appendix

\section{Verification of Main Assumptions}

\subsection{ARMA Model} \label{secA.1}

We show in this subsection that the ARMA model satisfies Assumptions \ref{ass1}--\ref{ass5} and condition \eqref{key condition} with Gaussian maximum likelihood estimation (MLE). 
Consider a causal and invertible ARMA($p, q$) process:
\begin{align}  \label{ARMA}
    Y_t=\sum_{k=1}^p \phi_{k} Y_{t-k}+\varepsilon_t+\sum_{\ell=1}^q \psi_{\ell} \varepsilon_{t-\ell}, \quad t \in \mathbb{Z},
\end{align}
where $\btheta= (\phi_1, \ldots, \phi_p, \psi_1, \ldots, \psi_q) \in \mathbb{R}^{p+q}$. 
Here, $\{\varepsilon_j\}$ are a series of independent and identically distributed (i.i.d.) random variables with mean zero and variance $\sigma^2<\infty$.
Suppose the observations $\{Y_t\}$ are from model \eqref{ARMA} with true parameter $\btheta_0$.
Define the backshift operator as $B$ such that $B Y_t=Y_{t-1}$.
The above process can be rewritten as
\begin{align*}
    \phi(B) Y_t = \psi(B) \varepsilon_t,
\end{align*}
where $\phi(z)=1-\sum_{k=1}^p \phi_k z^k$ is the autoregressive (AR) polynomial, $\psi(z)=1+\sum_{\ell=1}^q \psi_{\ell} z^{\ell}$ is the moving average (MA) polynomial.
Under invertibility, $\phi(z) / \psi(z)$ has a power series expansion, that is,
\begin{align*}
    \frac{\phi(z)}{\psi(z)}=\sum_{k=0}^{\infty} \pi_k(\btheta) z^k = 1 + \sum_{k=1}^{\infty} \pi_k(\btheta) z^k.
\end{align*}
Then the parametric error at time $t$ for an estimate $\hbtheta$ can be expressed as
\begin{align}  \label{arma-error}
    e_t(\hbtheta) = Y_t + \sum_{k=1}^{\infty} \pi_k(\hbtheta) Y_{t-k}, \quad t \in \mathbb{Z},
\end{align}
and the fitted residuals given the observed sample $\{Y_{t}, \widehat{\mathbf{I}}_{t-1} \}_{t=1}^{n}$ are
\begin{align*}
    \widehat{e}_t(\hbtheta) = Y_t + \sum_{k=1}^{t-1} \pi_k(\hbtheta) Y_{t-k}, \quad 1 \leq t \leq n.
\end{align*}

For the ARMA$(p,q)$ model, we consider the conventional pseudo maximum likelihood estimator $\widehat{\btheta}_n$ based on the Gaussian likelihood, defined as
\begin{align*}
    \widehat{\btheta}_n := \arg \max_{\btheta \in \Theta}  L (\btheta, \sigma^2 )
    =  \arg \max_{\btheta \in \Theta}  \sigma^{-n}(\operatorname{det} \bSigma)^{-1 / 2} \exp \Big(\frac{-1}{2 \sigma^2} \bY_n^{\mathrm{T}} \bSigma^{-1} \bY_n\Big),
\end{align*}
where $\bY_n=(Y_1, \ldots, Y_n)^{\prime}$ and the covariance $\bSigma(\btheta):=\operatorname{Var}(\bY_n) / \sigma^2$ is independent of $\sigma^2$. 
\cite{brockwell1991time} shows that this Gaussian MLE $\hbtheta_n$ is consistent and asymptotically normal even for non-Gaussian $\varepsilon_t$.

From \eqref{arma-error}, we obtain that under $H_0$, 
$Y_t = - \sum_{k=1}^{\infty} \pi_k(\btheta_0) Y_{t-k} + \varepsilon_t$, $t \in \mathbb{Z}$. 
Then
\begin{align*}
    &f(\mathbf{I}_{t-1}, \btheta_0) = E[Y_{t} | \mathbf{I}_{t-1}] = - \sum_{k=1}^{\infty} \pi_k(\btheta_0) Y_{t-k}, \\
    &\mathbf{g}_{t}(\boldsymbol{\theta}_0)= \frac{ \partial f(\mathbf{I}_{t-1}, \boldsymbol{\theta}_0) }{\partial \boldsymbol{\theta} }
    = - \sum_{k=1}^{\infty} \frac{  \partial \pi_k(\btheta_0)}{\partial \boldsymbol{\theta}} Y_{t-k}.
\end{align*}
The next corollary shows that the conditions of Theorems \ref{thm1}--\ref{thm-local H1} hold.

\begin{corollary} \label{corollary2}
  The causal and invertible ARMA($p, q$) process \eqref{ARMA} satisfies Assumptions \ref{ass1}--\ref{ass5} and condition \eqref{key condition} with Gaussian maximum likelihood estimation (MLE). Specifically,
  \begin{align*}
	E \Big[ \varepsilon_t  w_{t-j}(\mathbf{x}) \mathbf{h}(Y_{t}, \mathbf{I}_{t-1}, \boldsymbol{\theta}_{0}) \Big]^{\prime} 
	=
	E\big[w_{t-j}(\mathbf{x}) \mathbf{g}_{t}(\boldsymbol{\theta}_0)\big]^{\prime} \mathbf{L}(\boldsymbol{\theta}_{0}), \quad \forall j \geq 1.
\end{align*}
Hence, if $\kappa_{ra} = 2 \kappa_{ov}$, by Theorems \ref{thm1}--\ref{thm-local H1}, the process $S_{n, w}$ satisfies
    \begin{equation*}
        S_{n, w} \Longrightarrow S_w^0, \quad  \text{in} ~ L_{2}(\Pi, v),
    \end{equation*}
    and the proposed test is consistent under the alternatives.
\end{corollary}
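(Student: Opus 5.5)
The plan is to verify Assumptions \ref{ass1}--\ref{ass5} one at a time using the classical structure of causal, invertible ARMA processes. The score-alignment condition \eqref{key condition} is then checked by computing both sides explicitly from the influence function of the Gaussian MLE. The final conclusion of Corollary \ref{corollary2} follows directly from Theorems \ref{thm1}--\ref{thm-local H1}.

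\emph{Regularity conditions.} Throughout, $\Theta$ is taken to be a compact subset of the causal/invertible region containing $\btheta_0$ in its interior.
\begin{itemize}
\item For Assumption \ref{ass1}, causality gives $Y_t=\sum_{k\ge0}\chi_k\varepsilon_{t-k}$ with geometrically decaying $\chi_k$. Hence $\{Y_t\}$ is a measurable function of an i.i.d. sequence, so it is strictly stationary and ergodic, and $E\varepsilon_1^2=\sigma^2<\infty$.
\item For Assumption \ref{ass2}, the coefficients $\pi_k(\btheta)$ are analytic in $\btheta$. On the compact set $\Theta$ the roots of $\psi$ stay uniformly outside the unit circle, so there exist $C<\infty$ and $\rho\in(0,1)$ with
\[
\sup_{\btheta\in\Theta}\Big(|\pi_k(\btheta)|+\Big|\frac{\partial\pi_k(\btheta)}{\partial\btheta}\Big|+\Big|\frac{\partial^2\pi_k(\btheta)}{\partial\btheta\,\partial\btheta'}\Big|\Big)\le C\rho^k .
\]
Termwise differentiation is therefore justified, $f$ is twice continuously differentiable, and $\mathbf g_t$ is $\mathcal F_{t-1}$-measurable, stationary and ergodic. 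It is dominated by $\mathbf M_t=C\sum_{k\ge1}\rho^k|Y_{t-k}|$, and Minkowski's inequality gives $E\mathbf M_t^2<\infty$.
\item For Assumption \ref{ass5}, the truncation error is $f(\mathbf I_{t-1},\btheta)-f(\widehat{\mathbf I}_{t-1},\btheta)=-\sum_{k\ge t}\pi_k(\btheta)Y_{t-k}$. Its $L_2$ norm is uniformly of order $\rho^t$, which is summable in $t$.
\item Assumption \ref{ass4} concerns only the choice of $w$ and $W$. The examples listed after it satisfy it by Lemma 1 of \citet{escanciano2006goodness}, using ergodicity from Assumption \ref{ass1}.
\end{itemize}

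\emph{Bahadur representation (the main obstacle).} Assumption \ref{ass3}(b) requires an explicit influence function for the Gaussian MLE, and this is the step I expect to need the most care. I would invoke \cite{brockwell1991time} (Section 10.8): the Gaussian MLE is asymptotically equivalent, to $o_p(f_n^{-1/2})$, to the least-squares estimator minimizing $\sum_t e_t(\btheta)^2$. A first-order Taylor expansion of the normal equations $\sum_t e_t(\btheta)\mathbf g_t(\btheta)=\mathbf 0$ around $\btheta_0$ then gives
\[
\sqrt{f_n}(\hbtheta_{f_n}-\btheta_0)=\boldsymbol\Gamma^{-1}\frac{1}{\sqrt{f_n}}\sum_{t=1}^{f_n}\varepsilon_t\mathbf g_t(\btheta_0)+o_p(1),
\qquad \boldsymbol\Gamma=E[\mathbf g_t(\btheta_0)\mathbf g_t(\btheta_0)'].
\]
The remainder is controlled by the second-derivative bound above together with the ergodic theorem, and the initial-value effect is negligible by Assumption \ref{ass5}. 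Positive definiteness of $\boldsymbol\Gamma$ follows from identifiability, meaning $\phi$ and $\psi$ have no common roots. Hence
\[
\mathbf h(Y_t,\mathbf I_{t-1},\btheta_0)=\boldsymbol\Gamma^{-1}\mathbf g_t(\btheta_0)\varepsilon_t ,
\]
which satisfies $E[\mathbf h\mid\mathcal F_{t-1}]=\mathbf 0$. Because $\varepsilon_t$ is independent of $\mathcal F_{t-1}$,
\[
\mathbf L(\btheta_0)=\boldsymbol\Gamma^{-1}E[\varepsilon_t^2\mathbf g_t\mathbf g_t']\boldsymbol\Gamma^{-1}=\sigma^2\boldsymbol\Gamma^{-1},
\]
which is positive definite. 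Consistency, and hence Assumption \ref{ass3}(a), is standard for this estimator.

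\emph{Score alignment.} Recall $\mathbf b_j(\mathbf x,\btheta_0)=E[w_{t-j}(\mathbf x)\mathbf g_t(\btheta_0)]$. Since $w_{t-j}(\mathbf x)\mathbf g_t(\btheta_0)$ is $\mathcal F_{t-1}$-measurable and $\varepsilon_t$ is independent of $\mathcal F_{t-1}$ with variance $\sigma^2$,
\[
E[\varepsilon_t w_{t-j}(\mathbf x)\mathbf h_t]=\boldsymbol\Gamma^{-1}E[\varepsilon_t^2 w_{t-j}(\mathbf x)\mathbf g_t(\btheta_0)]=\sigma^2\boldsymbol\Gamma^{-1}\mathbf b_j(\mathbf x,\btheta_0).
\]
The right-hand side of \eqref{key condition} is the transpose of this, $\sigma^2\mathbf b_j'\boldsymbol\Gamma^{-1}$. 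The left-hand side is $\mathbf b_j'\mathbf L(\btheta_0)=\sigma^2\mathbf b_j'\boldsymbol\Gamma^{-1}$. These agree for every $j\ge1$ because $\boldsymbol\Gamma$ is symmetric; this holds verbatim for complex-valued $w$, since no conjugation enters. With Assumptions \ref{ass1}--\ref{ass5} and \eqref{key condition} established, Theorem \ref{thm1} gives the stated weak convergence when $\kappa_{ra}=2\kappa_{ov}$, and Theorems \ref{thm-global H1}--\ref{thm-local H1} together with Corollary \ref{corollary1} give consistency under the alternatives.
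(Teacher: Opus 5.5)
Your proposal is correct and follows essentially the same route as the paper. Both arguments identify the Gaussian-MLE influence function as $\mathbf h=\boldsymbol\Gamma^{-1}\mathbf g_t(\btheta_0)\varepsilon_t=\sigma^{-2}\mathbf L(\btheta_0)\mathbf g_t(\btheta_0)\varepsilon_t$ and then use $E(\varepsilon_t^2\mid\mathcal F_{t-1})=\sigma^2$ to get the score-alignment identity; the only difference is that you verify Assumptions \ref{ass1}, \ref{ass2}, \ref{ass5} and the Bahadur expansion yourself (geometric decay of $\pi_k(\btheta)$ and the least-squares normal equations), whereas the paper cites the supplement of \citet{davis2025sample} and Theorem 8.11.1 of \citet{brockwell1991time} for these steps.
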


\begin{proof}

Now we verify the hypotheses for the sample-splitting-based generalized spectral tests for the causal and invertible ARMA($p, q$) process \eqref{ARMA}.
Assumptions \ref{ass1}, \ref{ass4} are basic assumptions, and Assumptions \ref{ass2}--\ref{ass3} and \ref{ass5} are verified in the Section A.2 of Supplementary Material of \cite{davis2025sample}, invoking the uniform exponential decay of the coefficients in the AR$(\infty)$ representation of ARMA($p, q$) process implied by causality and invertibility.
It remains to verify condition \eqref{key condition}.

Recall that from \eqref{arma-error}, under $H_0$, 
$Y_t = - \sum_{k=1}^{\infty} \pi_k(\btheta_0) Y_{t-k} + \varepsilon_t$, $t \in \mathbb{Z}$. 
Then
\begin{align*}
    &f(\mathbf{I}_{t-1}, \btheta_0) = E[Y_{t} | \mathbf{I}_{t-1}] = - \sum_{k=1}^{\infty} \pi_k(\btheta_0) Y_{t-k}, \\
    &\mathbf{g}_{t}(\boldsymbol{\theta}_0)= \frac{ \partial f(\mathbf{I}_{t-1}, \boldsymbol{\theta}_0) }{\partial \boldsymbol{\theta} }
    = - \sum_{k=1}^{\infty} \frac{  \partial \pi_k(\btheta_0)}{\partial \boldsymbol{\theta}} Y_{t-k}.
\end{align*}
By Theorem 8.11.1 and (8.11.5)--(8.11.8) of \cite{brockwell1991time}, we have
\begin{align*}
   \sqrt{n} (\hbtheta_n-\btheta_0 ) 
   &= \frac{1}{\sigma^{2}} \mathbf{L}(\boldsymbol{\theta}_{0}) \frac{1}{\sqrt{n}}  \sum_{t=1}^n \mathbf{g}_t (\btheta_0 ) \varepsilon_t + o_P(1) \\
   & \stackrel{d}{\longrightarrow} \mathcal{N}(0,  \mathbf{L}(\boldsymbol{\theta}_{0})).
\end{align*}
It implies that
\begin{align*}
    \mathbf{h}(Y_{t}, \mathbf{I}_{t-1}, \boldsymbol{\theta}_{0}) = \frac{1}{\sigma^{2}} \mathbf{L}(\boldsymbol{\theta}_{0}) \mathbf{g}_t (\btheta_0 ) \varepsilon_t.
\end{align*}
Thus, it can be shown that
\begin{align*}
	E \Big[ \varepsilon_t  w_{t-j}(\mathbf{x}) \mathbf{h}(Y_{t}, \mathbf{I}_{t-1}, \boldsymbol{\theta}_{0}) \Big]^{\prime} 
	&= E \Big[ \varepsilon_t  w_{t-j}(\mathbf{x}) \frac{1}{\sigma^{2}} \mathbf{L}(\boldsymbol{\theta}_{0}) \mathbf{g}_t (\boldsymbol{\theta}_0 ) \varepsilon_t \Big]^{\prime} \\
	&= \frac{1}{\sigma^{2}} E\Big[ E \Big( \varepsilon_t  w_{t-j}(\mathbf{x})  \mathbf{g}_t (\boldsymbol{\theta}_0 ) \varepsilon_t \Big| \mathcal{F}_{t-1} \Big) \Big]^{\prime} \mathbf{L}(\boldsymbol{\theta}_{0}) \\
	&= \frac{1}{\sigma^{2}}  E(\varepsilon_t^2)  E\big[ w_{t-j}(\mathbf{x})  \mathbf{g}_t (\boldsymbol{\theta}_0 ) \big]^{\prime} \mathbf{L}(\boldsymbol{\theta}_{0})\\
	&=
	E\big[w_{t-j}(\mathbf{x}) \mathbf{g}_{t}(\boldsymbol{\theta}_0)\big]^{\prime} \mathbf{L}(\boldsymbol{\theta}_{0}), \quad \forall j \geq 1,
\end{align*}
which verifies the condition \eqref{key condition}.
   
\end{proof}

\subsection{GARCH Model} \label{secA.2}

In this section, we show that the GARCH model also satisfies \eqref{key condition} with quasi maximum likelihood estimation. Consider a stationary GARCH$(p,q)$ process:
\begin{align}\label{garch}
    \bigg\{
	\begin{array}{l}
		y_t=\sigma_t \eta_t, \\
		\sigma_{t}^{2}=\omega+ \sum_{i=1}^p \phi_i  y_{t-i}^{2} + \sum_{j=1}^q \psi_j \sigma_{t-j}^{2},
	\end{array} \bigg.
	~ t \in \mathbb{Z},
\end{align}
where $\omega>0$, $\phi_i \geq 0 ~ (i=1, \dots, p)$,  $\psi_j \geq 0 ~ (i=1, \dots, q)$, and
$\{\eta_t\}$ is a sequence of i.i.d. random variables with mean $0$ and variance $1$.
Then the unknown parameters are $\btheta= (\omega, \phi_1, \ldots, \phi_p, \psi_1, \ldots, \psi_q)' \in \mathbb{R}^{p+q+1}$. 
Suppose the observations $\{y_t\}$ are from model \eqref{garch} with true parameter $\btheta_0 = (\omega_0, \phi_{01}, \ldots, \phi_{0p}, \psi_{01}, \ldots, \psi_{0q})'$.
From \cite{bougerol1992stationarity}, model \eqref{garch} is strictly stationary \textit{if and only if} the top Lyapunov exponent is strictly negative, i.e., $\gamma(\bA_0)<0$, where
\begin{align*}
    &\gamma(\bA_0) := \lim_{t \rightarrow \infty} \frac{1}{t} E(\log \| A_{0,t} A_{0,t-1} \dots A_{0,1}\| ), \\
    &A_{0 ,t}=\left(\begin{array}{cccccccccc}
\phi_{01} \eta_t^2 & & \cdots & &  \phi_{0 p} \eta_t^2 & \psi_{01} \eta_t^2 & & \cdots & &  \psi_{0 q} \eta_t^2 \\
1 & 0 & \cdots & & 0 & 0 & & \cdots & & 0 \\
0 & 1 & \cdots & & 0 & 0 & & \cdots & & 0 \\
\vdots & \ddots & \ddots & & \vdots & \vdots & \ddots & \ddots & & \vdots \\
0 & & \cdots & 1 & 0 & 0 &  & \cdots  & 0 & 0 \\
\phi_{01} & & \cdots & & \phi_{0 p} & \psi_{01} & & \cdots & & \psi_{0 q} \\
0 & & \cdots & & 0 & 1 & 0 & \cdots & & 0 \\
0 & & \cdots & & 0 & 0 & 1 & \cdots & & 0 \\
\vdots & \ddots & \ddots & & \vdots & \vdots & \ddots & \ddots & & \vdots \\
0 & & \cdots & 0 & 0 & 0 & & \cdots & 1 & 0
\end{array}\right) \in \mathbb{R}^{(p+q)\times (p+q)}.
\end{align*}
Define the parameter space $\Theta \in \mathbb{R}_{+}^{1+p+q}$.
By iterations and under the stationarity condition, we can write
\begin{align*}
    \sigma_t^2(\btheta) 
    = \omega + \sum_{i=1}^p \phi_i  y_{t-i}^{2} + \sum_{j=1}^q \psi_j \sigma_{t-j}^{2}(\btheta) 
    = c_0(\btheta) + \sum_{i=1}^{\infty} c_i(\btheta) y_{t-i}^2, \quad t \in \mathbb{Z},
\end{align*}
for some suitably defined functions $c_i(\cdot)$; see \cite{berkes2003garch} for details. 
Given the finite sample $\{Y_{t}, \widehat{\mathbf{I}}_{t-1} \}_{t=1}^{n}$, the conditional variance can be approximated by
\begin{align*}
    \widetilde{\sigma}_t^2(\btheta) = c_0 (\btheta) + \sum_{i=1}^{t-1} c_i (\btheta) y_{t-i}^2.
\end{align*}
By \cite{berkes2003garch}, the quasi-MLE for $\btheta_0$ is defined as
\begin{align*}
    \hbtheta_n:= \arg \max_{\btheta \in \Theta} \sum_{t=1}^n  \widetilde{\ell}_t(\btheta), \quad 
    \widetilde{\ell}_t(\btheta) = -\log \widetilde{\sigma}_t(\btheta) - \frac{y_t^2}{2 \widetilde{\sigma}_t^2(\btheta)}.
\end{align*}
We further define $\ell_t(\btheta) = -\log \sigma_t(\btheta) - y_t^2/ (2 \sigma_t^2(\btheta))$.
To analyze the asymptotic properties of $\hbtheta_n$, consider the following assumptions:
\begin{assumption} \label{ass7}
    $\Theta$ is compact, and $\btheta_0$ lies in the interior of $\Theta$. 
\end{assumption}
\begin{assumption} \label{ass8}
    $\gamma(\bA_0) <0$, and $\forall \btheta \in \Theta$, $\sum_{j=1}^{q} \psi_j <1 $.
\end{assumption}
\begin{assumption} \label{ass9}
    $\{\eta_t^2\}$ is i.i.d. with $E \eta_t^2=1$, and $\kappa_\eta := E \eta_t^4 < \infty$.
\end{assumption}
\begin{assumption} \label{ass10}
    The GARCH$(p, q)$ representation is minimal, i.e., the polynomials $A(z) = \sum_{i=1}^p \phi_{0i} z^i$ and $B(z) = 1-  \sum_{j=1}^q \psi_{0j} z^j$ do not have common roots.
\end{assumption}

Next, we verify the hypotheses in Theorem \ref{thm1}.
For GARCH model, $E[y_t^2 | \mathbf{I}_{t-1}] = \sigma_t^2(\btheta_0)$, so model \eqref{garch} can be rewritten in the conditional-mean form:
\begin{align*}
    \bigg\{
	\begin{array}{l}
		y_t^2 = \sigma_t^2(\btheta_0) + \sigma_t^2(\btheta_0) (\eta_t^2 - 1), \\
		\sigma_{t}^{2}(\btheta_0)=\omega_0 + \sum_{i=1}^p \phi_{0i}  y_{t-i}^{2} + \sum_{j=1}^{q} \psi_{0j} \sigma_{t-j}^{2}(\btheta_0),
	\end{array} \bigg.
	~ t \in \mathbb{Z}.
\end{align*}
Denote $Y_t$ as the square of volatility, i.e., $Y_t = y_t^2$. Then the error term
\begin{align*}
    \varepsilon_t = Y_{t} - E[Y_{t} | \mathbf{I}_{t-1}] = \sigma_t^2(\btheta_0) (\eta_t^2 - 1),
\end{align*}
which is a martingale difference sequence with respect to $\mathcal{F}_{t-1}$, and
\begin{align*}
    f(\mathbf{I}_{t-1}, \btheta_0) = E[Y_{t} | \mathbf{I}_{t-1}] = \sigma_t^2(\btheta_0), \quad
    \mathbf{g}_{t}(\boldsymbol{\theta}_0)= \frac{ \partial \sigma_t^2(\btheta_0) }{\partial \boldsymbol{\theta} }.
\end{align*}

\begin{corollary} \label{corollary3}
  Under Assumptions \ref{ass7}--\ref{ass10}, the stationary GARCH($p, q$) process \eqref{garch} satisfies Assumptions \ref{ass1}--\ref{ass5} and condition \eqref{key condition} with quasi maximum likelihood estimation (MLE), which is
  \begin{align*}
	E \Big[ \varepsilon_t  w_{t-j}(\mathbf{x}) \mathbf{h}(Y_{t}, \mathbf{I}_{t-1}, \boldsymbol{\theta}_{0}) \Big]^{\prime} 
	=
	E\big[w_{t-j}(\mathbf{x}) \mathbf{g}_{t}(\boldsymbol{\theta}_0)\big]^{\prime} \mathbf{L}(\boldsymbol{\theta}_{0}), \quad \forall j \geq 1.
\end{align*}
Hence, if $\kappa_{ra} = 2 \kappa_{ov}$, by Theorems \ref{thm1}--\ref{thm-local H1}, the process $S_{n, w}$ satisfies
    \begin{equation*}
        S_{n, w} \Longrightarrow S_w^0, \quad  \text{in} ~ L_{2}(\Pi, v),
    \end{equation*}
    and the proposed test is consistent under the alternatives.
\end{corollary}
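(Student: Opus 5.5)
Our plan follows the ARMA case in Corollary~\ref{corollary2}. We check Assumptions~\ref{ass1}--\ref{ass5} by citing the GARCH literature, and we verify the score-alignment condition \eqref{key condition} through an explicit Bahadur representation of the QMLE.

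\textbf{Regularity assumptions.} Under Assumptions~\ref{ass7}--\ref{ass10}, the results of \cite{bougerol1992stationarity} and \cite{berkes2003garch} give the following.
\begin{itemize}
\item[(i)] $\{y_t\}$ is strictly stationary and ergodic, so Assumption~\ref{ass1}(a) holds. For Assumption~\ref{ass1}(b), note $E\varepsilon_1^2=(\kappa_\eta-1)E\sigma_t^4$, so we need a fourth-moment condition $E y_t^4<\infty$. We take this as implicit in the setting; \cite{davis2025sample} verify the analogous assumptions for GARCH under the same conditions.
\item[(ii)] The coefficients $c_i(\btheta)$ and their first two derivatives decay geometrically, uniformly on the compact $\Theta$. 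This gives twice continuous differentiability of $\sigma_t^2(\btheta)$, and the domination $\sup_\btheta|\partial\sigma_t^2/\partial\btheta|\le \mathbf M_t=C\sum_i \rho^i(1+y_{t-i}^2)$ with $E\mathbf M_t^2<\infty$. This is Assumption~\ref{ass2}.
\item[(iii)] The same geometric decay gives $E\sup_\btheta|\sigma_t^2(\btheta)-\widetilde\sigma_t^2(\btheta)|^2\le C\rho^{t}$. Hence the series in Assumption~\ref{ass5} is summable.
\item[(iv)] Assumption~\ref{ass4} concerns only $w$ and $W$, so there is nothing to check.
\end{itemize}

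\textbf{Bahadur representation.} Next we obtain Assumption~\ref{ass3}(b) by a Taylor expansion of the QMLE score, as in \cite{berkes2003garch}. We have $\partial\ell_t/\partial\btheta=\frac{1}{2\sigma_t^4}\mathbf g_t(\eta_t^2-1)\sigma_t^2=\frac{\varepsilon_t\mathbf g_t}{2\sigma_t^4}$ and $J:=-E[\partial^2\ell_t]=\frac12E[\mathbf g_t\mathbf g_t'/\sigma_t^4]$. This yields
\[
\mathbf h(Y_t,\mathbf I_{t-1},\btheta_0)=J^{-1}\frac{\varepsilon_t\,\mathbf g_t(\btheta_0)}{2\sigma_t^4(\btheta_0)},
\]
a martingale difference. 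Its variance is $\mathbf L=(\kappa_\eta-1)J^{-1}E[\mathbf g_t\mathbf g_t'/(4\sigma_t^4)]J^{-1}=\frac{\kappa_\eta-1}{2}J^{-1}$, which is positive definite by minimality (Assumption~\ref{ass10}).

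\textbf{Score alignment.} Condition on $\mathcal F_{t-1}$ and use $E[\varepsilon_t^2\mid\mathcal F_{t-1}]=(\kappa_\eta-1)\sigma_t^4$. Then
\[
E[\varepsilon_tw_{t-j}(\mathbf x)\mathbf h_t]=J^{-1}\frac{\kappa_\eta-1}{2}E[w_{t-j}(\mathbf x)\mathbf g_t]=\mathbf L\,E[w_{t-j}(\mathbf x)\mathbf g_t].
\]
Transposing and using the symmetry of $\mathbf L$ gives exactly \eqref{key condition}. The key point is that the factor $\sigma_t^4$ produced by the conditional variance of $\varepsilon_t$ cancels the QMLE weight $1/\sigma_t^4$. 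This plays the role that $E\varepsilon_t^2=\sigma^2$ played in the ARMA proof. The conclusions then follow from Theorems~\ref{thm1}--\ref{thm-local H1}.

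\textbf{Main obstacle.} The delicate step is the regularity part: controlling moments of $\varepsilon_t$ and $\mathbf M_t$, since square-integrability requires finite fourth moments of $y_t$. It also requires showing the initial-value effect is summable in $L_2$ rather than only almost surely. For this we would rely on the moment bounds and the geometric decay of $c_i(\btheta)$ from \cite{berkes2003garch}.
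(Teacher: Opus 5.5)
Your proposal is correct and follows essentially the same route as the paper. The paper also defers Assumptions~\ref{ass1}--\ref{ass5} to the GARCH literature, citing Section A.4 of \cite{davis2025sample}, reads off $\mathbf h$ and $\mathbf L$ from the QMLE Taylor expansion, and conditions on $\mathcal F_{t-1}$ so that $E[\varepsilon_t^2\mid\mathcal F_{t-1}]=(\kappa_\eta-1)\sigma_t^4$ cancels the $\sigma_t^{-4}$ weight in the score. Your log-likelihood normalization ($J=\tfrac12E[\mathbf g_t\mathbf g_t'/\sigma_t^4]$) is self-consistent, whereas the paper's displayed formulas silently switch to the Francq--Zakoian convention. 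Both give the same $\mathbf h$ and $\mathbf L=(\kappa_\eta-1)\{E[\mathbf g_t\mathbf g_t'/\sigma_t^4]\}^{-1}$. The fourth-moment requirement $Ey_t^4<\infty$ you flag is also needed by the paper but never stated, since it treats Assumption~\ref{ass1} as basic and Assumptions~\ref{ass7}--\ref{ass10} do not imply it.
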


\begin{proof}

Here we verify the hypotheses for the sample-splitting-based generalized spectral tests for the stationary GARCH($p, q$) process \eqref{garch}.
Assumptions \ref{ass1}, \ref{ass4} are basic assumptions, and Assumptions \ref{ass2}--\ref{ass3} and \ref{ass5} are verified in the Section A.4 of Supplementary Material of \cite{davis2025sample}.
It remains to verify condition \eqref{key condition}.

Recall that $Y_t$ as the square of volatility, i.e., $Y_t = y_t^2$, and the error term
\begin{align*}
    \varepsilon_t = Y_{t} - E[Y_{t} | \mathbf{I}_{t-1}] = \sigma_t^2(\btheta_0) (\eta_t^2 - 1),
\end{align*}
is a martingale difference sequence with respect to $\mathcal{F}_{t-1}$, and
\begin{align*}
    f(\mathbf{I}_{t-1}, \btheta_0) = E[Y_{t} | \mathbf{I}_{t-1}] = \sigma_t^2(\btheta_0), \quad
    \mathbf{g}_{t}(\boldsymbol{\theta}_0)= \frac{ \partial \sigma_t^2(\btheta_0) }{\partial \boldsymbol{\theta} }.
\end{align*}
By Taylor expansion, Theorem 2.2 and (4.10)-(4.11) in \cite{francq2004maximum}, 
it can be shown that
\begin{align*}
    \sqrt{n} (\hbtheta_n - \btheta_0) 
    &= - \bigg( \frac{1}{n} \sum_{t=1}^n \frac{\partial^2}{\partial \btheta \partial \btheta' } \widetilde{\ell}_t(\btheta^*) \bigg)^{-1}
    \frac{1}{\sqrt{n}} \bigg( \sum_{t=1}^n \frac{\partial}{\partial \btheta} \widetilde{\ell}_t(\btheta_0) \bigg) \\
    &= - J^{-1} \frac{1}{\sqrt{n}} \sum_{t=1}^n \frac{\partial}{\partial \btheta} \ell_t(\btheta_0) + o_p(1)\\
    &\stackrel{d}{\longrightarrow} \mathcal{N}(0, (\kappa_\eta-1)J^{-1}),
\end{align*}
where $\btheta^*$ are between $\hbtheta_n$ and $\btheta_0$, and
\begin{align*}
    J := E_{\btheta_0}\bigg( \frac{\partial^2 \ell_t(\btheta_0)}{\partial \btheta \partial \btheta' } \bigg) = E_{\btheta_0}\bigg( \frac{1}{\sigma_t^4(\btheta_0)} \frac{\partial \sigma_t^2(\btheta_0)}{\partial \btheta } \frac{\partial \sigma_t^2(\btheta_0)}{\partial \btheta'} \bigg).
\end{align*}
It implies that
\begin{align*}
    &\mathbf{L}(\boldsymbol{\theta}_{0}) = (\kappa_\eta-1)J^{-1}, \\
    &\mathbf{h}(Y_{t}, \mathbf{I}_{t-1}, \boldsymbol{\theta}_{0}) = - J^{-1} \frac{\partial}{\partial \btheta} \ell_t(\btheta_0) 
    = - J^{-1} (1-\eta_t^2) \frac{1}{\sigma_t^2(\btheta_0)} \frac{\partial \sigma_t^2(\btheta_0)}{\partial \btheta}.
\end{align*}
Thus, it yields that
\begin{align*}
	E \Big[ \varepsilon_t  w_{t-j}(\mathbf{x}) \mathbf{h}(Y_{t}, \mathbf{I}_{t-1}, \boldsymbol{\theta}_{0}) \Big]^{\prime} 
	&= E \Big[ \sigma_t^2(\btheta_0) (\eta_t^2 - 1)  w_{t-j}(\mathbf{x}) J^{-1} (\eta_t^2-1) \frac{1}{\sigma_t^2(\btheta_0)} \frac{\partial \sigma_t^2(\btheta_0)}{\partial \btheta} \Big]^{\prime} \\
    &= E \Big[ (\eta_t^2 - 1)^2  w_{t-j}(\mathbf{x}) \frac{\partial \sigma_t^2(\btheta_0)}{\partial \btheta} \Big]^{\prime} J^{-1}  \\
	&= E\Big[ E \Big( (\eta_t^2 - 1)^2  w_{t-j}(\mathbf{x}) \frac{\partial \sigma_t^2(\btheta_0)}{\partial \btheta} \Big| \mathcal{F}_{t-1} \Big) \Big]^{\prime}  J^{-1} \\
	&= E\Big[ w_{t-j}(\mathbf{x}) \frac{\partial \sigma_t^2(\btheta_0)}{\partial \btheta} \Big]' E (\eta_t^2 - 1)^2 J^{-1} \\
	&=  E\Big[ w_{t-j}(\mathbf{x}) \frac{\partial \sigma_t^2(\btheta_0)}{\partial \btheta} \Big]' (\kappa_\eta-1)J^{-1}\\
    &= E\big[w_{t-j}(\mathbf{x}) \mathbf{g}_{t}(\boldsymbol{\theta}_0)\big]^{\prime} \mathbf{L}(\boldsymbol{\theta}_{0}) , \quad \forall j \geq 1,
\end{align*}
which verifies the condition \eqref{key condition}.
    
\end{proof}

\section{Additional Technical Details}

\subsection{Proof of Theorem \ref{thm1}}
Define $\tilde S_{n,w}(\boldsymbol{\eta},\hbtheta_{f_n})$ as the same process as $S_{n, w} (\boldsymbol{\eta}, \hbtheta_{f_n} )$ but with $\mathbf{I}_{t-1}$ replacing $\widehat{\mathbf{I}}_{t-1}$. Then, using the same argument as Lemma A.1 in \cite{escanciano2006goodness}, under Assumptions \ref{ass4}--\ref{ass5}, we   have that 
\begin{align*}
&E\bigl\| S_{n,w}(\boldsymbol{\eta},\hbtheta_{f_n}) - \tilde S_{n,w}(\boldsymbol{\eta},\hbtheta_{f_n}) \bigr\|^2
\\=&\sum_{j=1}^{l_n} \frac{1}{(j\pi)^2} n_j^{-1}
E\left(
\sum_{t=n-l_n+j}^n \bigl( f(\mathbf{I}_{t-1},\hbtheta_{f_n}) - f(\hat{\mathbf{I}}_{t-1},\hbtheta_{f_n}) \bigr)
w_{t-j}(\bf x)
\right)^2 \\
\leq& C \sum_{j=1}^{l_n} \frac{1}{(j\pi)^2} n_j^{-1}
\left(
\sum_{t=n-l_n+j}^n
\Bigl( E \sup_{\mathbf{\theta} \in \Theta}
\bigl( f(\mathbf{I}_{t-1},\mathbf{\theta}) - f(\hat{\mathbf{I}}_{t-1},\mathbf{\theta}) \bigr)^2 \Bigr)^{1/2}
\right)^2\to 0.
\end{align*}
This implies that 
\begin{align*}
    \big\|S_{n, w} (\boldsymbol{\eta}, \hbtheta_{f_n})-\widetilde{S}_{n, w} (\boldsymbol{\eta}, \hbtheta_{f_n})\big\|^{2} \xrightarrow{p} 0.
\end{align*}
Hence, with a bit abuse of notation,  we work with $S_{n, w} (\boldsymbol{\eta}, \hbtheta_{f_n} )$ in what follows.

By the Lagrange's mean value theorem, we have
\begin{equation} \label{eq-1}
S_{n, w}(\boldsymbol{\eta}, \hbtheta_{f_n})=S_{n, w}(\boldsymbol{\eta}, \boldsymbol{\theta}_{0})+\frac{\partial S_{n, w}(\boldsymbol{\eta}, \widetilde{\boldsymbol{\theta}}_{f_n})}{\partial \boldsymbol{\theta}^{\prime}}(\hbtheta_{f_n}-\boldsymbol{\theta}_{0}),
\end{equation}
where $\widetilde{\boldsymbol{\theta}}_{f_n}$ lies on the line segment between $\hbtheta_{f_n}$ and $\boldsymbol{\theta}_{0}$.

The following proof consists of four steps:
\begin{itemize}
    \item[(a)] We show that \begin{align} \label{eq-2}
    \bigg\|\frac{1}{\sqrt{l_n}} \frac{\partial S_{n, w}(\boldsymbol{\eta}, \widetilde{\boldsymbol{\theta}}_{f_n})}{\partial \boldsymbol{\theta}}+ \mathbf{G}_{w} (\boldsymbol{\eta}, \boldsymbol{\theta}_{0})  \bigg\| \xrightarrow{p} 0,
\end{align}
\item[(b)] Combined with (a),  Assumption \ref{ass3} and \eqref{eq-1}, \begin{equation}\label{eq-3}
    S_{n,w}(\boldsymbol{\eta},\hbtheta_{f_n})
=
S_{n, w}(\boldsymbol{\eta}, \boldsymbol{\theta}_{0}) - 
\mathbf{G}_{w}^{\prime} (\boldsymbol{\eta}, \boldsymbol{\theta}_{0})
\sqrt{\frac{l_n}{f_n}} \frac{1}{\sqrt{f_n}} \sum_{t=1}^{f_n} \mathbf{h}(Y_{t}, \mathbf{I}_{t-1}, \boldsymbol{\theta}_{0}) +
r_n(\boldsymbol{\eta}),
\end{equation} 
where $\|r_n\|=o_p(1)$.
\item[(c)] We show the weak convergence of \begin{align}   \label{eq-4}
    \Big( S_{n, w}(\boldsymbol{\eta}, \boldsymbol{\theta}_{0}), 
    \sqrt{\frac{l_n}{f_n}} \frac{1}{\sqrt{f_n}} \sum_{k=1}^{f_n} \mathbf{h}(Y_{k}, \mathbf{I}_{k-1}, \boldsymbol{\theta}_{0}) \Big)\Longrightarrow(S_w^0(\boldsymbol{\eta}), \sqrt{\kappa_{ra}}\boldsymbol{V}),
\end{align}
where  $\boldsymbol{V}$ is a Gaussian vector with mean $\mathbf{0}$ and variance-covariance matrix $\mathbf{L}(\boldsymbol{\theta}_{0})$, with 
$
\operatorname{cov}\bigl(S_w^0(\boldsymbol{\eta}), \mathbf{V}\bigr)
= \kappa_{ov}/\sqrt{\kappa_{ra}}\sum_{j=1}^{\infty}
E \bigl[
\varepsilon_t \, w_{t-j}(\bx)\, \mathbf{h}(Y_t, \mathbf{I}_{t-1}, \mathbf{\theta}_0)
\bigr]\Psi_j(\lambda).
$
\item[(d)] We show that if  condition \eqref{key condition} holds and $\kappa_{ra}=2\kappa_{ov}$, then 
$$
S_w^0(\boldsymbol{\eta}) - 
\mathbf{G}_{w}^{\prime} (\boldsymbol{\eta}, \boldsymbol{\theta}_{0})
\sqrt{\kappa_{ra}}\boldsymbol{V}\stackrel{d}{=}S_w^0(\boldsymbol{\eta}).
$$
\end{itemize}

\noindent\textit{Proof of (a)} Note that the process $S_{n, w}(\boldsymbol{\eta}, \hbtheta_{f_n})$ can be rewritten as
\begin{align*}
    S_{n, w}(\boldsymbol{\eta}, \hbtheta_{f_n}) 
    &=\sum_{j=1}^{l_n} n_{j}^{1 / 2} \bigg\{\frac{1}{n_{j}} \sum_{t=n-l_n+j}^{n} e_{t}(\hbtheta_{f_n}) w (\mathbf{Z}_{t-j}, \mathbf{x} ) \bigg\} \frac{\sqrt{2} \sin j \pi \lambda}{j \pi} \\
    & =\frac{1}{\sqrt{l_n}} \sum_{t=n-l_n+1}^{n} e_{t}(\hbtheta_{f_n}) \bigg\{ \sum_{j=1}^{t-(n-l_n)} \sqrt{\frac{l_n}{n_j}} w_{t-j}(\mathbf{x}) \Psi_j(\lambda) \bigg\}.
\end{align*}
It follows that
\begin{align*}
    \frac{1}{\sqrt{l_n}} \frac{\partial S_{n, w}(\boldsymbol{\eta}, \widetilde{\boldsymbol{\theta}}_{f_n})}{\partial \boldsymbol{\theta}} 
    & =\frac{1}{l_n} \sum_{t=n-l_n+1}^{n} \frac{\partial e_{t}(\widetilde{\boldsymbol{\theta}}_{f_n})}{\partial \boldsymbol{\theta}} \bigg\{ \sum_{j=1}^{t-(n-l_n)} \sqrt{\frac{l_n}{n_j}} w_{t-j}(\mathbf{x}) \Psi_j(\lambda) \bigg\} \\
    & = - \frac{1}{l_n} \sum_{t=n-l_n+1}^{n} \mathbf{g}_t(\widetilde{\boldsymbol{\theta}}_{f_n}) \bigg\{ \sum_{j=1}^{t-(n-l_n)} \sqrt{\frac{l_n}{n_j}} w_{t-j}(\mathbf{x}) \Psi_j(\lambda) \bigg\} \\
& =-\sum_{j=1}^{l_n} \frac{1}{l_n} \sum_{t=n-l_n+ j}^{n} l_n^{1 / 2} n_{j}^{-1 / 2} \mathbf{g}_{t}(\widetilde{\boldsymbol{\theta}}_{f_n}) w_{t-j}(\mathbf{x}) \Psi_{j}(\lambda) \\
& :=-\sum_{j=1}^{l_n} \mathbf{b}_{j, n}(\mathbf{x}, \widetilde{\boldsymbol{\theta}}_{f_n}) \Psi_{j}(\lambda),
\end{align*}
where $\mathbf{b}_{j, n}(\mathbf{x}, \widetilde{\boldsymbol{\theta}}_{f_n})=l_n^{-1} \sum_{t=n-l_n+j}^{n} l_n^{1 / 2} n_{j}^{-1 / 2} \mathbf{g}_{t}(\widetilde{\boldsymbol{\theta}}_{f_n}) w_{t-j}(\mathbf{x})$. 

Define $\mathbf{b}_{j}(\mathbf{x}, \boldsymbol{\theta}) = E[w_{t-j}(\mathbf{x}) \mathbf{g}_{t}(\btheta)]$, which is continuous in $\btheta$  under Assumption \ref{ass2}.   Furthermore,  define 
\begin{flalign*}
   R_n(\boldsymbol{\eta},\btheta)=&\frac{1}{\sqrt{l_n}} \frac{\partial S_{n, w}(\boldsymbol{\eta}, \btheta)}{\partial \boldsymbol{\theta}}+\sum_{j=1}^{l_n} \mathbf{b}_{j}(\mathbf{x}, \boldsymbol{\theta}_{0}) \Psi_{j}(\lambda) 
   \\=& -\sum_{j=1}^{l_n} [\mathbf{b}_{j, n}(\mathbf{x}, \btheta)-\mathbf{b}_{j}(\mathbf{x}, \boldsymbol{\theta})] \Psi_{j}(\lambda)-\sum_{j=1}^{l_n} [\mathbf{b}_{j}(\mathbf{x}, \boldsymbol{\theta})-\mathbf{b}_{j}(\mathbf{x}, \boldsymbol{\theta}_0)] \Psi_{j}(\lambda)
   \\=& R_{n,1}(\boldsymbol{\eta},\btheta)+R_{n,2}(\boldsymbol{\eta},\btheta).
\end{flalign*}
Therefore, to show \eqref{eq-2}, it suffices to show (1) $\|R_{n,1}(\boldsymbol{\eta},\widetilde{\btheta}_{f_n})\|=o_p(1)$,  (2) $\|R_{n,2}(\boldsymbol{\eta},\widetilde{\btheta}_{f_n})\|=o_p(1)$, and (3) $\|\sum_{j=l_n+1}^{\infty} \mathbf{b}_{j}(\mathbf{x}, \boldsymbol{\theta}_{0}) \Psi_{j}(\lambda) \|=o(1)$. Note that (3) is trivial by letting $n\to\infty$.

(1) For any fixed $K>0$, we have that,
\begin{flalign*}
   R_{n,1}(\boldsymbol{\eta},\btheta)
   &= -\sum_{j=1}^{K} [\mathbf{b}_{j, n}(\mathbf{x}, \btheta)-\mathbf{b}_{j}(\mathbf{x}, \boldsymbol{\theta})] \Psi_{j}(\lambda)-\sum_{j=K+1}^{l_n} [\mathbf{b}_{j, n}(\mathbf{x}, \btheta)-\mathbf{b}_{j}(\mathbf{x}, \boldsymbol{\theta})] \Psi_{j}(\lambda)\\
   &:= R_{n,11}(\boldsymbol{\eta},\btheta) +R_{n,12}(\boldsymbol{\eta},\btheta). 
\end{flalign*}
It  suffices to show that
$\sup_{\btheta\in\Theta}\|R_{n,11}(\boldsymbol{\eta},\btheta)\|=o_p(1)$ and 
$\sup_{\btheta\in\Theta}\|R_{n,12}(\boldsymbol{\eta},\btheta)\|=o_p(1)$.
Note that 
\begin{flalign*}
    \sup_{\btheta\in\Theta}\|R_{n,11}(\boldsymbol{\eta},\btheta)\|^2\leq &  \sum_{j=1}^K \frac{C}{j^2} \sup_{\btheta\in\Theta}\int_{\Upsilon}[\mathbf{b}_{j, n}(\mathbf{x}, \btheta)-\mathbf{b}_{j}(\mathbf{x}, \boldsymbol{\theta})]^2 W(d{\bf x})
    \\=& \sum_{j=1}^K \frac{C}{j^2} \sup_{\btheta\in\Theta}\int_{\Upsilon_c}[\mathbf{b}_{j, n}(\mathbf{x}, \btheta)-\mathbf{b}_{j}(\mathbf{x}, \boldsymbol{\theta})]^2 W(d{\bf x})\\&+\sum_{j=1}^K \frac{C}{j^2} \sup_{\btheta\in\Theta}\int_{\Upsilon\backslash\Upsilon_c}[\mathbf{b}_{j, n}(\mathbf{x}, \btheta)-\mathbf{b}_{j}(\mathbf{x}, \boldsymbol{\theta})]^2 W(d\bf{x})
    \\:=&\mathbf{A}_{1}+\mathbf{A}_2.
\end{flalign*}
Note by the boundedness of $w(\cdot)$, we have $|w_t({\bf x})|\leq C_w$ for some constant $C_w>0$. Furthermore, by Assumption \ref{ass2}, $|g_t(\btheta)|\leq \mathbf{M}_t$. Therefore, by Assumption \ref{ass4} and the uniform ergodic theorem (see e.g. Theorem 3.1 in \cite{ling2003asymptotic}), we have that 
\[
\sup_{\btheta\in\Theta}\sup_{\mathbf{x}\in\Upsilon_c}
|\mathbf{b}_{j,n}(\mathbf{x},\btheta)-\mathbf{b}_j(\mathbf{x},\btheta)|
\xrightarrow{p} 0.
\]
This implies that $\mathbf{A}_{1}=o_p(1)$. 

Moreover, it is easy to see that $
\sup_{\theta\in\Theta}\sup_{\bx\in\Upsilon}|\bb_{j,n}(\bx,\theta)|
\le
\frac{C_w}{n_j}\sum_{t=n-l_n+j}^n \mathbf{M}_t$.
Hence  under Assumption \ref{ass2}, we can show that  \begin{align*}
E\sup_{\theta\in\Theta}\int_{\Upsilon\setminus\Upsilon_c}
[\bb_{j,n}(\bx,\theta)-\bb_j(\bx,\theta)]^2 W(d\bx)
&\le
CW(\Upsilon\setminus\Upsilon_c)E\mathbf{M}_t^2.
\end{align*}
Since $W$ is a probability measure, for every $\varepsilon>0$ one can choose a
compact set $\Upsilon_c\subset\Upsilon$ such that $W(\Upsilon\setminus\Upsilon_c)<\varepsilon.$ Thus, by Chebyshev's inequality, we obtain that $\mathbf{A}_2=o_p(1)$.

Furthermore, using the orthogonality of $\Psi_j(\cdot)$ and $\Psi_{j'}(\cdot)$ for $j\neq j'$ and Cauchy-Schwarz inequality, we have
\begin{flalign*}
  \|R_{n,12}(\boldsymbol{\eta},\btheta)\|^2\leq&  2\sum_{j=K+1}^{l_n}\|\mathbf{b}_{j, n}( \mathbf{x}, \btheta)\Psi_{j}(\lambda)\|^2+2\sum_{j=K+1}^{l_n}\|\mathbf{b}_{j}( \mathbf{x}, \btheta)\Psi_{j}(\lambda)\|^2.
\end{flalign*}
Since $|g_t(\btheta)|\leq \mathbf{M}_t$ under Assumption \ref{ass2} and that $|w_{t-j}(\bx)|\leq C_w$, it is easy to see that 
$\|\mathbf{b}_{j, n}( \mathbf{x}, \btheta)\Psi_j(\lambda)\|\leq C_wj^{-1}l_n^{-1} \sum_{t=n-l_n+j}^n \mathbf{M}_t$, which does not depend on $\btheta$ and $\bx$. Hence, it follows
$$
E\sup_{\theta\in\Theta} \|R_{n,12}(\boldsymbol{\eta},\btheta)\|^2\leq E\mathbf{M}_t^2\sum_{j=K+1}^{l_n} j^{-2}.
$$ Using the Chebyshev's inequality, we have that $\sup_{\theta\in\Theta}\|R_{n,12}(\boldsymbol{\eta},\btheta)\| \stackrel{p}{\longrightarrow} 0$ by first letting $n\to\infty$ and then $K\to\infty$. The uniform argument then ensures that $\|R_{n,1}(\boldsymbol{\eta},\widetilde{\btheta}_{f_n})\|=o_p(1)$.

(2) It is easy to see that, as $n\to\infty$,
\begin{flalign*}
\|R_{n,2}(\boldsymbol{\eta},\widetilde{\btheta}_{f_n})\|^2\leq &2 \|\mathbf{G}_{w} (\boldsymbol{\eta}, \boldsymbol{\theta}_{0})-\mathbf{G}_{w} (\boldsymbol{\eta},\widetilde{\btheta}_{f_n})\|^2\\&+2 \sum_{j=l_n+1}^{\infty}\| [\mathbf{b}_{j}(\mathbf{x}, \widetilde{\btheta}_{f_n})-\mathbf{b}_{j}(\mathbf{x}, \boldsymbol{\theta}_0)] \Psi_{j}(\lambda)\|^2,
\end{flalign*}
where the second term on the RHS vanishes by letting $n\to\infty$ using  $|g_t(\btheta)|\leq \mathbf{M}_t$ under Assumption \ref{ass2} and that $|w_{t-j}(\bx)|\leq 1$.
Note by Assumption \ref{ass3}, $\widetilde{\btheta}_{f_n}$ is consistent, the continuous mapping theorem then implies that 
$\|R_{n,2}(\boldsymbol{\eta},\widetilde{\btheta}_{f_n})\|=o_p(1)$.

Therefore, summarizing (1) and (2) proves part (a).

\noindent\textit{Proof of (b)} is trivial by noting (a) and Assumption \ref{ass3}.

\noindent\textit{Proof of (c)} 
By Theorem 1 of \cite{escanciano2006generalized}, 
we have $$S_{n, w}(\cdot, \boldsymbol{\theta}_{0}) \Longrightarrow S_w^0(\cdot),$$
where $S_{w}^{0}(\cdot)$ is a Gaussian process in $L_{2}(\Pi, v)$ with mean $\mathbf{0}$ and covariance operator $C_{S_{w}^{0}}$ satisfying $\sigma_{h}^{2}=\langle C_{S_{w}^{0}}(h), h \rangle, \forall h \in L_{2}(\Pi, v)$, and $\sigma_{h}^{2}$ is as defined in \eqref{sigma_h}. This implies that $\{S_{n,w}(\cdot,\btheta_0)\}$ is tight in $L^2(\Pi,\nu)$. Furthermore, by Assumption \ref{ass3}, we have
$$(\sqrt{l_n}/f_n) \sum_{t=1}^{f_n} \mathbf{h}(Y_{t}, \mathbf{I}_{t-1}, \boldsymbol{\theta}_{0}) \xrightarrow{d} \sqrt{\kappa_{ra}} \boldsymbol{V}.$$ This implies that the second component is tight in $\mathbb{R}^p$.
Since $L^2(\Pi,\nu)$ is a separable Hilbert space and $\mathbb{R}^p$ is finite-dimensional, it follows that $L^2(\Pi,\nu)\times\mathbb{R}^p$ is a separable metric space. Therefore, the tightness of the two marginal sequences implies that the pair in \eqref{eq-4} is tight in $L^2(\Pi,\nu)\times\mathbb{R}^p$.

It remains to verify the convergence of finite-dimensional projections. Let
$g\in L_2(\Pi,\nu)$ and $\mathbf c\in\mathbb R^p$ be arbitrary. Consider
\begin{align} \label{thm1-proof-eq1}
\big\langle S_{n,w}(\cdot,\boldsymbol{\theta}_0),g\big\rangle
+
\mathbf c'
\sqrt{\frac{l_n}{f_n}}\frac{1}{\sqrt{f_n}}
\sum_{k=1}^{f_n}\mathbf h(Y_k,\mathbf I_{k-1},\boldsymbol{\theta}_0).
\end{align}
Using the representation
\begin{align*}
    \big\langle S_{n,w}(\cdot,\boldsymbol{\theta}_0),g\big\rangle
=
\frac{1}{\sqrt{l_n}}
\sum_{t=n-l_n+1}^{n}\varepsilon_t a_{n,t}(g),
\quad
a_{n,t}(g):=
\int_\Pi Q_{t,w}(\boldsymbol{\eta})g(\boldsymbol{\eta})\nu(d\boldsymbol{\eta}),
\end{align*}
where
\begin{align*}
Q_{t,w}(\boldsymbol{\eta})
:=
\sum_{j=1}^{t-(n-l_n)} \Big(\frac{l_n}{n_j}\Big)^{1/2}
w_{t-j}(\mathbf{x})\Psi_j(\lambda),
\quad
\boldsymbol{\eta}=(\lambda,\mathbf{x}')'\in \Pi,
\end{align*}
we can write the above quantity \eqref{thm1-proof-eq1} as $\sum_{t=1}^n \xi_{n,t}(g,\mathbf c)$, where
\begin{align*}
    \xi_{n,t}(g,\mathbf c)
=
\frac{\mathbbm 1(t\ge n-l_n+1)}{\sqrt{l_n}}\,
\varepsilon_t a_{n,t}(g)
+
\frac{\mathbbm 1(t\le f_n)}{\sqrt{f_n}}\sqrt{\frac{l_n}{f_n}}\,
\mathbf c'\mathbf h(Y_t,\mathbf I_{t-1},\boldsymbol{\theta}_0).
\end{align*}
Then $\{\xi_{n,t}(g,\mathbf c),\mathcal F_t\}_{t=1}^n$ is a martingale difference array,
as $\{\xi_{n,t}\}_{1 \leq t \leq n}$ is adapted to $\{\mathcal{F}_{t}\}_{1 \leq t \leq n}$, 
and $E[\varepsilon_t \{ \sum_{j=1}^{t-n+l_n} w_{t-j}(\mathbf{x}) \Psi_j(\lambda) \} | \mathcal{F}_{t-1}]=0$ 
and $E[\mathbf{h}(Y_{t}, \mathbf{I}_{t-1}, \btheta_{0}) | \mathcal{F}_{t-1}]=0$ by Assumption \ref{ass3}. 
Moreover, since $|w_{t-j}(\mathbf{x}) |\leq C_w$, we have $E[\varepsilon_t^2 \{ \sum_{j=1}^{t-n+l_n} w_{t-j}(\mathbf{x}) \Psi_j(\lambda) \}^2]< \infty$ and Assumption \ref{ass3}(b) ensures that  $E[ \mathbf{h}(Y_{t}, \mathbf{I}_{t-1}, \btheta_{0}) \mathbf{h}'(Y_{t}, \mathbf{I}_{t-1}, \btheta_{0})]$ exists and is positive definite. 
By Assumptions \ref{ass1}--\ref{ass4}, the same argument as in the proof of Theorem 1 of \cite{escanciano2006generalized} yields the conditional variance convergence and the Lindeberg condition. 
Hence, by the martingale central limit theorem,
\begin{align*}
    \sum_{t=1}^n \xi_{n,t}(g,\mathbf c)
\Longrightarrow
N \Big(0, \operatorname{Var} \big(\langle S_w^0,g\rangle+\mathbf c'\sqrt{\kappa_{ra}}\boldsymbol{V}\big)\Big).
\end{align*}
Note that
\begin{align*}
\operatorname{Var} \big(\langle S_w^0,g\rangle+\mathbf c'\sqrt{\kappa_{ra}}\boldsymbol{V}\big)
=
\operatorname{Var} (\langle S_w^0,g\rangle)
+\kappa_{ra}\mathbf c' \mathbf L(\btheta_0)\mathbf c
+2\mathbf c'\sqrt{\kappa_{ra}}\,
\operatorname{cov}(\langle S_w^0,g\rangle,\boldsymbol{V}).
\end{align*}

It remains to calculate the asymptotic covariance between the two components in \eqref{thm1-proof-eq1}.
For any fixed
$\boldsymbol{\eta}=(\lambda,\mathbf x')'\in\Pi$,
\begin{align}
    &\operatorname{cov}\Big(S_{n, w}(\boldsymbol{\eta}, \boldsymbol{\theta}_{0}),  
    \frac{\sqrt{l_n}}{f_n}  \sum_{k=1}^{f_n} \mathbf{h}(Y_{k}, \mathbf{I}_{k-1}, \boldsymbol{\theta}_{0}) \Big)  \nonumber \\
    =& \operatorname{cov} \bigg( \frac{1}{\sqrt{l_n}} \sum_{t=n-l_n+1}^{n} \varepsilon_t \bigg\{ \sum_{j=1}^{t-(n-l_n)} \sqrt{\frac{l_n}{n_j}} w_{t-j}(\mathbf{x}) \Psi_j(\lambda) \bigg\}, \frac{\sqrt{l_n}}{f_n}  \sum_{k=1}^{f_n} \mathbf{h}(Y_{k}, \mathbf{I}_{k-1}, \boldsymbol{\theta}_{0}) \bigg) \nonumber \\
    =& \frac{1}{f_n} \sum_{t=n-l_n+1}^{n}  \sum_{k=1}^{f_n} \operatorname{cov} \bigg( \varepsilon_t \bigg\{ \sum_{j=1}^{t-(n-l_n)} \sqrt{\frac{l_n}{n_j}} w_{t-j}(\mathbf{x}) \Psi_j(\lambda) \bigg\}, \mathbf{h}(Y_{k}, \mathbf{I}_{k-1}, \boldsymbol{\theta}_{0}) \bigg) \nonumber  \\
    =& \frac{1}{f_n} \sum_{t=n-l_n+1}^{n}  \sum_{k=1}^{f_n} E \bigg[ \varepsilon_t \bigg\{ \sum_{j=1}^{t-(n-l_n)} \sqrt{\frac{l_n}{n_j}} w_{t-j}(\mathbf{x}) \Psi_j(\lambda) \bigg\} \mathbf{h}(Y_{k}, \mathbf{I}_{k-1}, \boldsymbol{\theta}_{0}) \bigg], \label{eq-5}
\end{align}
where the last equation comes from the fact that $E[\mathbf{h}(Y_{t}, \mathbf{I}_{t-1}, \boldsymbol{\theta}_{0})]=\mathbf{0}$.
For $t>k$, the summand in \eqref{eq-5} equals $\mathbf{0}$ as $\varepsilon_t$ is a martingale difference sequence with respect to $\mathcal{F}_{t-1}$. 
For $t<k$,  under Assumption \ref{ass3} we also have
\begin{align*}
    &E \bigg[ \varepsilon_t \bigg\{ \sum_{j=1}^{t-(n-l_n)} \sqrt{\frac{l_n}{n_j}} w_{t-j}(\mathbf{x}) \Psi_j(\lambda) \bigg\} \mathbf{h}(Y_{k}, \mathbf{I}_{k-1}, \boldsymbol{\theta}_{0}) \bigg] \\
    =& E \bigg[  \varepsilon_t \bigg\{ \sum_{j=1}^{t-(n-l_n)} \sqrt{\frac{l_n}{n_j}} w_{t-j}(\mathbf{x}) \Psi_j(\lambda) \bigg\} E \bigg[ \mathbf{h}(Y_{k}, \mathbf{I}_{k-1}, \boldsymbol{\theta}_{0}) \bigg| \mathcal{F}_{k-1} \bigg] \bigg] =0.
\end{align*}
Thus, by the Stolz-Ces\`{a}ro theorem, it follows that
\begin{align*}
    &\operatorname{cov}\Big(S_{n, w}(\boldsymbol{\eta}, \boldsymbol{\theta}_{0}),  
    \frac{\sqrt{l_n}}{f_n}  \sum_{k=1}^{f_n} \mathbf{h}(Y_{k}, \mathbf{I}_{k-1}, \boldsymbol{\theta}_{0}) \Big)  \\
    =& \frac{1}{f_n} \sum_{t=n-l_n+1}^{n}  \sum_{k=1}^{f_n} \mathbbm{1}(t=k) E \bigg[ \varepsilon_t \bigg\{ \sum_{j=1}^{t-(n-l_n)} \sqrt{\frac{l_n}{n_j}} w_{t-j}(\mathbf{x}) \Psi_j(\lambda) \bigg\} \mathbf{h}(Y_{k}, \mathbf{I}_{k-1}, \boldsymbol{\theta}_{0}) \bigg] \\
    =& \frac{1}{f_n} \sum_{t=n-l_n+1}^{f_n}  E \bigg[ \varepsilon_t \bigg\{ \sum_{j=1}^{t-(n-l_n)} \sqrt{\frac{l_n}{n_j}} w_{t-j}(\mathbf{x}) \Psi_j(\lambda) \bigg\} \mathbf{h}(Y_{t}, \mathbf{I}_{t-1}, \boldsymbol{\theta}_{0}) \bigg] \\
    =& \frac{1}{f_n} \sum_{t=n-l_n+1}^{f_n} \sum_{j=1}^{t-(n-l_n)} \sqrt{(l_n/n_j)} E \Big[ \varepsilon_t  w_{t-j}(\mathbf{x}) \Psi_j(\lambda)  \mathbf{h}(Y_{t}, \mathbf{I}_{t-1}, \boldsymbol{\theta}_{0}) \Big]  \\
    \rightarrow& \kappa_{ov} \lim _{n \rightarrow \infty} \sum_{j=1}^{f_n-(n-l_n)} \sqrt{(l_n/n_j)} E \Big[ \varepsilon_t  w_{t-j}(\mathbf{x}) \mathbf{h}(Y_{t}, \mathbf{I}_{t-1}, \boldsymbol{\theta}_{0}) \Big] \Psi_j(\lambda)\\
    =&\kappa_{ov} \sum_{j=1}^{\infty} E \Big[ \varepsilon_t  w_{t-j}(\mathbf{x}) \mathbf{h}(Y_{t}, \mathbf{I}_{t-1}, \boldsymbol{\theta}_{0}) \Big] \Psi_j(\lambda).
\end{align*}
Hence,
$$
\operatorname{cov}\bigl(S_w^0(\boldsymbol{\eta}), \mathbf{V}\bigr)
= \frac{\kappa_{ov}}{\sqrt{\kappa_{ra}}}
\sum_{j=1}^{\infty}
E \bigl[
\varepsilon_t \, w_{t-j}(\bx)\, \mathbf{h}(Y_t, \mathbf{I}_{t-1}, \mathbf{\theta}_0)
\bigr]\Psi_j(\lambda).
$$
Specifically, under condition \eqref{key condition} that
$
E[ \varepsilon_t w_{t-j}(\mathbf x)\mathbf h(Y_t,\mathbf I_{t-1},\btheta_0)]
=
E [w_{t-j}(\mathbf x)\mathbf g_t(\btheta_0)]'\mathbf L(\btheta_0)$,
it follows that
\begin{align*}
\operatorname{cov} (S_w^0(\boldsymbol{\eta}),\boldsymbol{V})
=
\frac{\kappa_{ov}}{\sqrt{\kappa_{ra}}}
\sum_{j=1}^{\infty}
E\!\left[w_{t-j}(\mathbf x)\mathbf g_t(\btheta_0)\right]'
\mathbf L(\btheta_0)\Psi_j(\lambda)
=
\frac{\kappa_{ov}}{\sqrt{\kappa_{ra}}} \mathbf G_w'(\boldsymbol{\eta},\btheta_0)\mathbf L(\btheta_0).
\end{align*}

Therefore, every continuous linear functional of \eqref{eq-4} converges to the
corresponding linear functional of $(S_w^0(\cdot),\sqrt{\kappa_{ra}}\boldsymbol{V})$.
Together with joint tightness, we conclude (c).

\noindent\textit{Proof of (d)} Combining (a)-(c) and Slutsky's theorem, we obtain
\begin{align*}
S_{n,w}(\cdot,\hbtheta_{f_n})
\Longrightarrow
S_w^0(\cdot)-\sqrt{\kappa_{ra}} \mathbf{G}_{w}^{\prime} (\cdot, \boldsymbol{\theta}_{0}) \boldsymbol{V},
\qquad \text{in }L_2(\Pi,\nu),
\end{align*}
which is a Gaussian process with zero mean. 
Since $S_w^0(\cdot)$ is also a Gaussian process with zero mean,
it suffices to  compare the covariance bilinear forms induced by all continuous
linear functionals. 
 Let $g_1,g_2\in L_2(\Pi,\nu)$ and define
\[
\mathbf a_i
=\int_\Pi
\mathbf G_w(\boldsymbol{\eta},\boldsymbol{\theta}_0)
g_i^c(\boldsymbol{\eta})\,d\nu(\boldsymbol{\eta}),
\qquad i=1,2 .
\]
Then
$
\big\langle \mathbf G_w'(\cdot,\boldsymbol{\theta}_0)\boldsymbol{V},g_i\big\rangle
=\mathbf a_i'\boldsymbol{V} .
$
By (c), we have that \[
\operatorname{Cov}\{\langle S_w^0,g_i\rangle,\boldsymbol{V}\}
=\frac{\kappa_{ov}} {\sqrt{\kappa_{ra}}}\,
\mathbf a_i'\mathbf L(\boldsymbol{\theta}_0),
\qquad i=1,2.
\]
Hence, using $\operatorname{Var}(\boldsymbol{V})=\mathbf L(\boldsymbol{\theta}_0)$, it can be shown that 
\begin{align*}
&\operatorname{Cov}\!\left(
\big\langle S_w^0-\sqrt{\kappa_{ra}}\mathbf G_w'\boldsymbol{V},g_1\big\rangle,
\big\langle S_w^0-\sqrt{\kappa_{ra}}\mathbf G_w'\boldsymbol{V},g_2\big\rangle
\right)\\
&=\operatorname{Cov}\{\langle S_w^0,g_1\rangle,\langle S_w^0,g_2\rangle\}
+(\kappa_{ra}-2\kappa_{ov})
\mathbf a_1'\mathbf L(\boldsymbol{\theta}_0)\mathbf a_2 .
\end{align*}
When $\kappa_{ra}=2\kappa_{ov}$, the second term vanishes for every
$g_1,g_2\in L_2(\Pi,\nu)$. Thus the two centered Gaussian random elements
have the same covariance operator, and therefore
\[
S_w^0(\cdot)
-\sqrt{\kappa_{ra}}\mathbf G_w'(\cdot,\boldsymbol{\theta}_0)\boldsymbol{V}
\stackrel{d}{=}S_w^0(\cdot),
\]
and hence
\[
S_{n,w}(\cdot,\hbtheta_{f_n})
\Longrightarrow S_w^0(\cdot)
\quad\text{in }L_2(\Pi,\nu).
\]
The convergence of $D_{n,w}^2(\hbtheta_{f_n})$ follows from the continuous
mapping theorem. This completes the proof.

\hfill $\square$

\subsection{Proof of Theorem \ref{thm-global H1}}

Similar to the proof of Theorem \ref{thm1}, by Lemma A.1 in \cite{escanciano2006goodness}, we replace the unobserved information set $I_{t-1}$ by the observed one $\widehat I_{t-1}$, and work with the whole conditioning set $I_{t-1}$ throughout the proof for simplicity.

By Lemma 1 of \cite{escanciano2006generalized}, and note that
\begin{align*}
    S_{n, w} (\boldsymbol{\eta}, \hbtheta_{f_n} ) 
    =\sum_{j=1}^{l_n} n_{j}^{1 / 2} \widehat{\gamma}_{j, w} (\mathbf{x}, \hbtheta_{f_n} ) \frac{\sqrt{2} \sin j \pi \lambda}{j \pi},
\end{align*}
it is sufficient to prove that $\widehat{\gamma}_{j, w} (\mathbf{x}, \hbtheta_{f_n})$ satisfies (i) and (ii) of Lemma 1.
Recall that under $H_{a}$, $Y_{t}=f (\mathbf{I}_{t-1}, \boldsymbol{\theta}_{*})+a_{t}+\varepsilon_{t}$,
where $\{a_{t}\}$ is strictly stationary and ergodic, with $Ea_{1}^2<\infty$, and for each $t \in \mathbb{Z}$, $a_{t}$ is $\mathcal{F}_{t-1}$-measurable. 
It follows that
\begin{align} \label{thmH1-proof-eq1}
  \begin{split}
    \widehat{\gamma}_{j, w} (\mathbf{x}, \hbtheta_{f_n} )
    &=\frac{1}{n_{j}} \sum_{t=n-l_n+j}^{n} e_{t}(\hbtheta_{f_n}) w_{t-j} (\mathbf{x} ) \\
    &=\frac{1}{n_{j}} \sum_{t=n-l_n+j}^{n} \Big\{ a_{t}+\varepsilon_{t} + f (\mathbf{I}_{t-1}, \boldsymbol{\theta}_{*}) - f (\mathbf{I}_{t-1},\hbtheta_{f_n}) \Big\} w_{t-j} (\mathbf{x} )\\
    &=\frac{1}{n_j}\sum_{t=n-l_n+j}^{n} \big\{a_t + \varepsilon_t \big\} w_{t-j}(\mathbf{x}) \\
    & \quad + \frac{1}{n_j}\sum_{t=n-l_n+j}^{n}
    \Big\{ f(\mathbf I_{t-1},\boldsymbol{\theta}_*)
    -f(\mathbf I_{t-1},\hbtheta_{f_n})
    \Big\}w_{t-j}(\mathbf{x}) \\
    &:= A_{1n,j}(\mathbf{x}) + A_{2n,j}(\mathbf{x}),  \quad j \geq 1.
  \end{split}
\end{align}

For the first term, since $\{a_t\}$ is strictly stationary and ergodic with
$E|a_1|^2<\infty$, Assumption \ref{ass4} yields that
\begin{align*}
\sup_{\mathbf{x}\in\Upsilon_c}
\big|
A_{1n,j}(\mathbf{x})-\varsigma_j(\mathbf{x})
\big|
&=
\sup_{\mathbf{x}\in\Upsilon_c}
\Bigg|
\frac{1}{n_j}\sum_{t=n-l_n+j}^{n} (a_t + \varepsilon_t) w_{t-j}(\mathbf{x})
-
E[a_t w_{t-j}(\mathbf{x})]
\Bigg|
=o_p(1),
\end{align*}
where
\[
\varsigma_j(\mathbf{x}):=E[a_t w_{t-j}(\mathbf{x})].
\]

Next, we deal with $A_{2n,j}(\mathbf{x})$. By the mean value theorem, for each
$t=n-l_n+j,\ldots,n$, there exists
$\widetilde{\boldsymbol{\theta}}_{n,t}$ such that
\begin{align*}
f(\mathbf I_{t-1},\boldsymbol{\theta}_*)
-f(\mathbf I_{t-1},\hbtheta_{f_n})
=
-\mathbf g_t(\widetilde{\boldsymbol{\theta}}_{f_n,t})'
(\hbtheta_{f_n}-\boldsymbol{\theta}_*),
\end{align*}
where
$|\widetilde{\boldsymbol{\theta}}_{f_n,t}-\boldsymbol{\theta}_*|
\le
|\hbtheta_{f_n}-\boldsymbol{\theta}_*|
=o_p(1)$.
Hence,
\begin{align} \label{thmH1-proof-eq2}
\begin{split}
    A_{2n,j}(\mathbf{x})
&=
-(\hbtheta_{f_n}-\boldsymbol{\theta}_*)'
\frac{1}{n_j}\sum_{t=n-l_n+j}^{n}
\mathbf g_t(\widetilde{\boldsymbol{\theta}}_{f_n,t})w_{t-j}(\mathbf{x}) \\
&=
-(\hbtheta_{f_n}-\boldsymbol{\theta}_*)'
\frac{1}{n_j}\sum_{t=n-l_n+j}^{n}
\mathbf g_t(\boldsymbol{\theta}_*)w_{t-j}(\mathbf{x}) \\
&\quad
-(\hbtheta_{f_n}-\boldsymbol{\theta}_*)'
\frac{1}{n_j}\sum_{t=n-l_n+j}^{n}
\Big\{
\mathbf g_t(\widetilde{\boldsymbol{\theta}}_{f_n,t})
-
\mathbf g_t(\boldsymbol{\theta}_*)
\Big\}
w_{t-j}(\mathbf{x}) \\
&:= A_{21n,j}(\mathbf{x})+A_{22n,j}(\mathbf{x}).
\end{split}
\end{align}

For $A_{21n,j}(\mathbf{x})$, we have
\begin{align*}
\sup_{\mathbf{x}\in\Upsilon_c} \big|A_{21n,j}(\mathbf{x})\big|
&\le
|\hbtheta_{f_n}-\boldsymbol{\theta}_*|
\sup_{\mathbf{x}\in\Upsilon_c}
\Bigg|
\frac{1}{n_j}\sum_{t=n-l_n+j}^{n}
\mathbf g_t(\boldsymbol{\theta}_*)w_{t-j}(\mathbf{x})
\Bigg| \\
&\le
|\hbtheta_{f_n}-\boldsymbol{\theta}_*|
\Bigg[
\sup_{\mathbf{x}\in\Upsilon_c}
\Bigg|
\frac{1}{n_j}\sum_{t=n-l_n+j}^{n}
\mathbf g_t(\boldsymbol{\theta}_*)w_{t-j}(\mathbf{x})
-
E\big\{\mathbf g_t(\boldsymbol{\theta}_*)w_{t-j}(\mathbf{x})\big\}
\Bigg| \\
&\quad
+
\sup_{\mathbf{x}\in\Upsilon_c}
\left|
E\big\{\mathbf g_t(\boldsymbol{\theta}_*)w_{t-j}(\mathbf{x})\big\}
\right|
\Bigg].
\end{align*}
By Assumptions \ref{ass3}--\ref{ass4}, the fact that $\{\mathbf g_t(\boldsymbol{\theta}_*)\}$ is strictly stationary and ergodic, the first term inside the brackets is
$o_p(1)$, while the second term is finite since $|w(\cdot)| \leq 1$ is uniformly bounded and $E\|\mathbf g_t(\boldsymbol{\theta}_*)\|<\infty$. Therefore,
\begin{align}
    \label{thmH1-proof-eq3}
\sup_{\mathbf{x}\in\Upsilon_c} \big|A_{21n,j}(\mathbf{x})\big|
=o_p(1).
\end{align}

Next consider $A_{22n,j}(\mathbf{x})$. Since
$|\widetilde{\boldsymbol{\theta}}_{f_n,t}-\boldsymbol{\theta}_*|
\le |\hbtheta_{f_n}-\boldsymbol{\theta}_*|=o_p(1)$ and
$\mathbf g_t(\boldsymbol{\theta})$ is continuous in $\boldsymbol{\theta}$, we have
\begin{align*}
    \mathbf g_t(\widetilde{\boldsymbol{\theta}}_{f_n,t})
-
\mathbf g_t(\boldsymbol{\theta}_*)
\xrightarrow{p} 0,
\qquad \text{for each } t.
\end{align*}
By Assumptions \ref{ass2}--\ref{ass4} and the boundedness of
$w(\cdot)$, the dominated convergence theorem yields that
\begin{align*}
\sup_{\mathbf{x}\in\Upsilon_c}
\bigg|
\frac{1}{n_j}\sum_{t=n-l_n+j}^{n}
\Big\{
\mathbf g_t(\widetilde{\boldsymbol{\theta}}_{f_n,t})
-
\mathbf g_t(\boldsymbol{\theta}_*)
\Big\}
w_{t-j}(\mathbf{x})
\bigg|
=o_p(1).
\end{align*}
It follows that
\begin{align}
     \label{thmH1-proof-eq4}
    \sup_{\mathbf{x}\in\Upsilon_c}|A_{22n,j}(\mathbf{x})|
=o_p(1).
\end{align}
Combining \eqref{thmH1-proof-eq2}--\eqref{thmH1-proof-eq4} yields
\[
\sup_{\mathbf{x}\in\Upsilon_c}|A_{2n,j}(\mathbf{x})|
=o_p(1).
\]

Therefore, back in \eqref{thmH1-proof-eq1}, for each fixed $j\ge 1$,
\begin{align*}
\sup_{\mathbf{x}\in\Upsilon_c}
\left|
\widehat{\gamma}_{j,w}(\mathbf{x},\hbtheta_{f_n})
-\varsigma_j(\mathbf{x})
\right|
=o_p(1).
\end{align*}
This proves condition (ii) of Lemma 1 in \cite{escanciano2006generalized}.
Condition (i), which controls the contribution of large lags in
$L_2(\Pi,\nu)$, follows from the finite second moment conditions in
Assumptions \ref{ass1}--\ref{ass2}, $E|a_1|^2<\infty$, the boundedness of
$w(\cdot)$, and the square summability of $\|\Psi_j\|_{L_2[0,1]}^2
=O(j^{-2})$.  
Thus, Lemma 1 yields that
\begin{align*}
    n_j^{-1 / 2} S_{n, w}(\cdot, \hbtheta_{f_n}) 
    \stackrel{p}{\longrightarrow} 
    L_{w}(\cdot),
    \qquad \text{in } L_2(\Pi,\nu),
\end{align*}
where $L_{w}(\boldsymbol{\eta})=\sum_{j=1}^{\infty} \varsigma_{j}(\mathbf{x}) \Psi_{j}(\lambda)$, $\varsigma_{j}(\mathbf{x})=E[a_{t} w_{t-j}(\mathbf{x})]$, which completes the proof.
\hfill $\square$

\subsection{Proof of Corollary \ref{corollary1}}

For $\{a_t\}\in\Xi$, there exists some $j\ge 1$ such that
$\varsigma_j(\mathbf{x})\neq 0$ on a subset of $\Upsilon$ with positive W-measure.
By Lemma 1 of \cite{escanciano2006goodness}, it is equivalent to the fact that under $H_a$, there exists some $j\ge 1$ such that
$E(a_t\mid \mathbf Z_{t-j})\neq 0$ on a set with positive probability.

Since $W(\cdot)$ is absolutely continuous with respect to the Lebesgue measure, we have
\[
\int_{\Upsilon_c}\varsigma_j^2(\mathbf{x})\,W(d\mathbf{x})>0
\]
for some $j\ge 1$. 
Consequently,
\begin{align*}
l_n^{-1}D_{n,w}^2(\hbtheta_{f_n})
\stackrel{p}{\longrightarrow}
\sum_{j=1}^{\infty} \frac{1}{(j \pi)^2}  \int_{\Upsilon_c}\varsigma_j^2(\mathbf{x})\,W(d\mathbf{x}),
\end{align*}
where the limit is strictly positive under $H_a$. Therefore,
\begin{align*}
    D_{n,w}^2(\hbtheta_{f_n}) \stackrel{p}{\longrightarrow}\infty,
\end{align*}
which means the test based on $D_{n,w}^2$ is consistent against any fixed alternative in
$\Xi$. This completes the proof.
\hfill $\square$

\subsection{Proof of Theorem \ref{thm-local H1}}

Similar to the proof of Theorem \ref{thm1}, we replace the unobserved information set $I_{t-1}$ by the observed one $\widehat I_{t-1}$.
Recall that for the local alternative sequence $H_{a,n}$, $Y_{t, n}=f(\mathbf{I}_{t-1}, \boldsymbol{\theta}_{0})+ a_{t}/\sqrt{l_n} +\varepsilon_{t}$,
where $\{a_{t}\}$ is strictly stationary and ergodic, with $Ea_{1}^2<\infty$, and for each $t \in \mathbb{Z}$, $a_{t}$ is $\mathcal{F}_{t-1}$-measurable. 
Then 
$$e_t(\btheta_0)=Y_{t,n}-f(\mathbf{I}_{t-1}, \boldsymbol{\theta}_{0}) =\varepsilon_t+\frac{a_t}{\sqrt{l_n}}.$$
Let
\begin{align*}
Q_{t,w}(\boldsymbol{\eta})
:=
\sum_{j=1}^{t-(n-l_n)} \Big(\frac{l_n}{n_j}\Big)^{1/2}
w_{t-j}(\mathbf{x})\Psi_j(\lambda),
\quad
\boldsymbol{\eta}=(\lambda,\mathbf{x}')'\in \Pi.
\end{align*}
Then
\begin{align*}
S_{n,w}(\boldsymbol{\eta},\hbtheta_{f_n})
=
\frac{1}{\sqrt{l_n}}
\sum_{t=n-l_n+1}^n e_t(\hbtheta_{f_n}) Q_{t,w}(\boldsymbol{\eta}),
\end{align*}

As in the proof of Theorem \ref{thm1}, by the Lagrange's mean value theorem, we have
\begin{equation} \label{thm3:eq-1}
S_{n, w}(\boldsymbol{\eta}, \hbtheta_{f_n}) = S_{n, w}(\boldsymbol{\eta}, \boldsymbol{\theta}_{0})+ \frac{\partial S_{n, w}(\boldsymbol{\eta}, \widetilde{\boldsymbol{\theta}}_{f_n})}{\partial \boldsymbol{\theta}^{\prime}}(\hbtheta_{f_n}-\boldsymbol{\theta}_{0})
\end{equation}
where $\widetilde{\boldsymbol{\theta}}_{f_n}$ is a mean value satisfying $|\widetilde{\boldsymbol{\theta}}_{f_n}-\boldsymbol{\theta}_{0}| \leq|\hbtheta_{f_n}-\boldsymbol{\theta}_{0}|=o_p(1)$ a.s. 
Now define, exactly as in the proof of Theorem \ref{thm1} that
\begin{align*}
    R_n(\boldsymbol{\theta})
:=
\frac{1}{\sqrt{l_n}} \frac{\partial S_{n, w}(\boldsymbol{\eta}, \widetilde{\boldsymbol{\theta}}_{f_n})}{\partial \boldsymbol{\theta}}
+
\sum_{j=1}^{l_n} \mathbf{b}_{j}(\mathbf{x}, \boldsymbol{\theta}_{0}) \Psi_{j}(\lambda) ,
\quad
\mathbf{b}_{j}(\mathbf{x}, \boldsymbol{\theta}) = E[w_{t-j}(\mathbf{x}) \mathbf{g}_{t}(\btheta)].
\end{align*}
By the same argument used in proving \eqref{eq-2} in Theorem \ref{thm1}, it can be shown that
\begin{align*}
    \|R_n(\widetilde{\boldsymbol{\theta}}_{f_n})\|=o_p(1).
\end{align*}
Therefore,
\begin{align}  \label{thm3:eq-2}
  \begin{split}
      S_{n, w}(\boldsymbol{\eta}, \hbtheta_{f_n})
    &=S_{n, w}(\boldsymbol{\eta}, \boldsymbol{\theta}_{0}) + 
    \Big\{ -\sum_{j=1}^{l_n} \mathbf{b}_{j}(\mathbf{x}, \boldsymbol{\theta}_{0}) \Psi_{j}(\lambda) \Big\}^{\prime}
    \sqrt{l_n} (\hbtheta_{f_n}-\boldsymbol{\theta}_{0}) + o_p(1) \\
    &=S_{n, w}(\boldsymbol{\eta}, \boldsymbol{\theta}_{0}) - 
    \mathbf{G}_{w}^{\prime} (\boldsymbol{\eta}, \boldsymbol{\theta}_{0})
    \sqrt{l_n} (\hbtheta_{f_n}-\boldsymbol{\theta}_{0}) + o_p(1).
  \end{split}
\end{align}

Next, we decompose $S_{n, w}(\boldsymbol{\eta}, \boldsymbol{\theta}_{0})$ under $H_{a,n}$ \eqref{local H1} as
\begin{align}
S_{n, w}(\boldsymbol{\eta}, \boldsymbol{\theta}_{0})
&=
\frac{1}{\sqrt{l_n}}
\sum_{t=n-l_n+1}^n
\Big(\varepsilon_t+\frac{a_t}{\sqrt{l_n}}\Big)Q_{t,w}(\boldsymbol{\eta})
\notag\\
&=
\frac{1}{\sqrt{l_n}}\sum_{t=n-l_n+1}^n \varepsilon_t Q_{t,w}(\boldsymbol{\eta})
+
\frac{1}{l_n}\sum_{t=n-l_n+1}^n a_t Q_{t,w}(\boldsymbol{\eta})
\notag\\
&:= S_{n,w}^{(0)}(\boldsymbol{\eta})+L_{n,w}(\boldsymbol{\eta}).
\label{thm3:eq-3}
\end{align}

For $S_{n,w}^{(0)}(\boldsymbol{\eta})$,
by Theorem 1 of \cite{escanciano2006generalized}, 
we have $S_{n,w}^{(0)}(\cdot) \Rightarrow S_w^0(\cdot)$,
where $S_{w}^{0}(\cdot)$ is defined in Theorem \ref{thm1}.
For $L_{n,w}(\boldsymbol{\eta})$, note that
\begin{align*}
    L_{n,w}(\boldsymbol{\eta}) =
\sum_{j=1}^{l_n}
\Big(
\frac{1}{\sqrt{n_j l_n}}
\sum_{t=n-l_n+j}^n a_t w_{t-j}(\mathbf{x})
\Big)\Psi_j(\lambda).
\end{align*}
Since $\{a_t\}$ is strictly stationary and ergodic with
$E|a_1|^2<\infty$, Assumption \ref{ass4} yields that for $j \geq 1$,
\begin{align*}
\sup_{\mathbf{x}\in\Upsilon_c}
\bigg|
\frac{1}{\sqrt{n_j l_n}}\sum_{t=n-l_n+j}^{n} a_t w_{t-j}(\mathbf{x})
-
E[a_t w_{t-j}(\mathbf{x})]
\bigg|
=o_p(1).
\end{align*}
Indeed, the left-hand term is
\[
\sqrt{\frac{n_j}{l_n}}
\left\{ \frac{1}{n_j}\sum_{t=n-l_n+j}^{n}a_t w_{t-j}(\mathbf{x}) \right\},
\]
and for each fixed $j$, $n_j/l_n\to 1$.
Hence, by similar arguments as in the proof of Theorem \ref{thm-global H1} and Lemma 1 in \cite{escanciano2006generalized}, we obtain
\begin{align}
L_{n,w}(\cdot)\stackrel{p}{\longrightarrow}L_w(\cdot)
\qquad\text{in }L_2(\Pi,\nu),
\label{thm3:eq-4}
\end{align}
where $L_w(\boldsymbol{\eta})=\sum_{j=1}^\infty \varsigma_j(\mathbf{x})\Psi_j(\lambda)$,
$\varsigma_j(\mathbf{x})=E[a_t w_{t-j}(\mathbf{x})]$.

It remains to handle $S_{n,w}^{(0)}(\cdot)$ in \eqref{thm3:eq-3} jointly with the estimation effect in \eqref{thm3:eq-2}. 
By Assumption \ref{H1-ass1}, we have
\begin{align}
\sqrt{l_n}(\hbtheta_{f_n}-\btheta_0)
=
\sqrt{\frac{l_n}{f_n}} \boldsymbol{\xi}_a
+
\sqrt{\frac{l_n}{f_n}}
\frac{1}{\sqrt{f_n}}
\sum_{t=1}^{f_n} \mathbf{h} (Y_{t}, \mathbf{I}_{t-1}, \boldsymbol{\theta}_{0})
+
o_p(1).
\label{thm3:eq-5}
\end{align}
Combining \eqref{thm3:eq-2} -- \eqref{thm3:eq-5} and Theorem \ref{thm1}, we get
\begin{align}
\begin{split}
    S_{n,w}(\boldsymbol{\eta},\hbtheta_{f_n})
=&
\bigg\{ S_{n,w}^{(0)}(\boldsymbol{\eta}) -
\mathbf{G}_{w}^{\prime} (\boldsymbol{\eta}, \boldsymbol{\theta}_{0})
\sqrt{\frac{l_n}{f_n}} \frac{1}{\sqrt{f_n}}
\sum_{t=1}^{f_n} \mathbf{h} (Y_{t}, \mathbf{I}_{t-1}, \boldsymbol{\theta}_{0})
\bigg\}\\
&
+ L_{n,w}(\boldsymbol{\eta}) -
\mathbf{G}_{w}^{\prime} (\boldsymbol{\eta}, \boldsymbol{\theta}_{0}) \sqrt{\frac{l_n}{f_n}} \xi_a
+ r_n(\boldsymbol{\eta}),
\end{split}
\label{thm3:eq-6}
\end{align}
where $\|r_n\|=o_p(1)$, in ${L_2(\Pi,\nu)}$.

Now, the bracketed term in \eqref{thm3:eq-6} is exactly the same centered Gaussian part as in the proof of Theorem \ref{thm1}. 
Thus, using the same joint convergence argument as in
Theorem \ref{thm1}, together with condition \eqref{key condition} and $\kappa_{ra}=2\kappa_{ov}$,
we obtain
\begin{align}
S_{n,w}^{(0)}(\boldsymbol{\eta}) -
\mathbf{G}_{w}^{\prime} (\boldsymbol{\eta}, \boldsymbol{\theta}_{0})
\sqrt{\frac{l_n}{f_n}} \frac{1}{\sqrt{f_n}}
\sum_{t=1}^{f_n} \mathbf{h} (Y_{t}, \mathbf{I}_{t-1}, \boldsymbol{\theta}_{0})
\Longrightarrow
S_w^0(\boldsymbol{\eta}),
\quad\text{in} ~L_2(\Pi,\nu).
\label{thm3:eq-7}
\end{align}
Combining \eqref{thm3:eq-4}, \eqref{thm3:eq-6}, and \eqref{thm3:eq-7}, together with similar arguments of weak convergence in Theorem \ref{thm1} and Slutsky's theorem,
we can show that
\begin{align*}
    S_{n,w}(\cdot,\hbtheta_{f_n})
\Longrightarrow
S_w^0(\cdot)+L_w(\cdot)- \sqrt{\kappa_{ra}} \mathbf{G}_w'(\cdot,\btheta_0) \boldsymbol{\xi}_a,
\quad \text{in }L_2(\Pi,\nu),
\end{align*}
where $\kappa_{ra} \equiv \lim_{n \rightarrow \infty} l_n/f_n$.
This proves the first assertion.
The asymptotic distribution of $D_{n, w}^{2}(\hbtheta_{f_n})$ follows directly using the continuous mapping theorem and Assumption \ref{ass5}.
This completes the proof.
\hfill $\square$

\subsection{Proof of Theorem \ref{thm2}}

As in the proof of Theorem \ref{thm1}, we replace the unobserved information set $I_{t-1}$ by the observed one $\widehat I_{t-1}$, and work with the whole conditioning set $I_{t-1}$ throughout the proof.

Let
$Q_{t,w}(\boldsymbol{\eta})
:=
\sum_{j=1}^{t-(n-l_n)} \big(\frac{l_n}{n_j}\big)^{1/2}
w_{t-j}(\mathbf{x})\Psi_j(\lambda)$,
$\boldsymbol{\eta}=(\lambda,\mathbf{x}')'\in \Pi$.
Then
\begin{align*}
S_{n,w}^*(\boldsymbol{\eta},\hbtheta_{f_n})
=
\frac{1}{\sqrt{l_n}}
\sum_{t=n-l_n+1}^n e_t(\hbtheta_{f_n})V_tQ_{t,w}(\boldsymbol{\eta}),
\end{align*}
where $\{V_t\}$ is a sequence of i.i.d. random variables with zero mean, unit variance,
bounded support, and independent of the sample. Also define
\begin{align*}
S_{n,w,L}^*(\boldsymbol{\eta})
:=
\frac{1}{\sqrt{l_n}}
\sum_{t=n-l_n+1}^n e_t(\boldsymbol{\theta}_*)V_tQ_{t,w}(\boldsymbol{\eta}).
\end{align*}

By the Lagrange mean value theorem,
\begin{align*}
S_{n,w}^*(\boldsymbol{\eta},\hbtheta_{f_n})
=
S_{n,w,L}^*(\boldsymbol{\eta})
+
\frac{\partial S_{n,w}^*(\boldsymbol{\eta},\widetilde{\boldsymbol{\theta}}_{f_n})}
{\partial \boldsymbol{\theta}'}
(\hbtheta_{f_n}-\boldsymbol{\theta}_*),
\end{align*}
where $\widetilde{\boldsymbol{\theta}}_{f_n}$ is a mean value satisfying
$|\widetilde{\boldsymbol{\theta}}_{f_n}-\boldsymbol{\theta}^*|
\le
|\hbtheta_{f_n}-\boldsymbol{\theta}^*|
=o_p(1)$.
Note that
\begin{align*}
\frac{1}{\sqrt{l_n}}
\frac{\partial S_{n,w}^*(\boldsymbol{\eta},\widetilde{\boldsymbol{\theta}}_{f_n})}
{\partial \boldsymbol{\theta}}
&=
\frac{1}{l_n}
\sum_{t=n-l_n+1}^n
\frac{\partial e_t(\widetilde{\boldsymbol{\theta}}_{f_n})}
{\partial \boldsymbol{\theta}}
V_tQ_{t,w}(\boldsymbol{\eta}) \\
&=
-
\frac{1}{l_n}
\sum_{t=n-l_n+1}^n
\mathbf g_t(\widetilde{\boldsymbol{\theta}}_{f_n})
V_tQ_{t,w}(\boldsymbol{\eta}) \\
&=
-
\sum_{j=1}^{l_n}
\mathbf b_{j,n}^*(\mathbf x,\widetilde{\boldsymbol{\theta}}_{f_n})\Psi_j(\lambda),
\end{align*}
where $\mathbf b_{j,n}^*(\mathbf x,\boldsymbol{\theta})
:=
l_n^{-1}
\sum_{t=n-l_n+j}^n
l_n^{1/2}n_j^{-1/2}\mathbf g_t(\boldsymbol{\theta})V_tw_{t-j}(\mathbf x)$.
It follows that
\begin{align*}
S_{n,w}^*(\boldsymbol{\eta},\hbtheta_{f_n})
&=
S_{n,w,L}^*(\boldsymbol{\eta})
+
\bigg\{ -\sum_{j=1}^{l_n}
\mathbf b_{j,n}^*(\mathbf x,\widetilde{\boldsymbol{\theta}}_{f_n})\Psi_j(\lambda) \bigg\}
(\hbtheta_{f_n}-\boldsymbol{\theta}_*) \\
&:= S_{n,w,L}^*(\boldsymbol{\eta})
+ \sqrt{l_n} R_n^*(\widetilde{\boldsymbol{\theta}}_{f_n})' 
(\hbtheta_{f_n}-\boldsymbol{\theta}_*).
\end{align*}

We aim to show that
\begin{align}
\bigl\|R_n^*(\widetilde{\boldsymbol{\theta}}_{f_n})\bigr\|
=o_{p}(1). \label{eq:boot-derivative-small}
\end{align}
To this end, write
\begin{align*}
R_n^*(\boldsymbol{\theta})
&=
-\sum_{j=1}^{l_n}
\Bigl[
\mathbf b_{j,n}^*(\mathbf x,\boldsymbol{\theta})
-
E\{\mathbf b_{j,n}^*(\mathbf x,\boldsymbol{\theta})\}
\Bigr]\Psi_j(\lambda)
-
\sum_{j=1}^{l_n}
E\{\mathbf b_{j,n}^*(\mathbf x,\boldsymbol{\theta})\}\Psi_j(\lambda) \\
&:= R_{n,1}^*(\boldsymbol{\theta})+R_{n,2}^*(\boldsymbol{\theta}).
\end{align*}
Since $E(V_t)=0$, we have $E\{\mathbf b_{j,n}^*(\mathbf x,\boldsymbol{\theta})\}=0$,  $\forall j$,
and thus $R_{n,2}^*(\boldsymbol{\theta})\equiv 0$.
It remains to study $R_{n,1}^*(\boldsymbol{\theta})$. For any fixed $K>0$, write
\begin{align*}
R_{n,1}^*(\boldsymbol{\theta})
&=
-\sum_{j=1}^{K}
\mathbf b_{j,n}^*(\mathbf x,\boldsymbol{\theta})\Psi_j(\lambda)
-\sum_{j=K+1}^{l_n}
\mathbf b_{j,n}^*(\mathbf x,\boldsymbol{\theta})\Psi_j(\lambda) \\
&:= R_{n,11}^*(\boldsymbol{\theta})+R_{n,12}^*(\boldsymbol{\theta}).
\end{align*}
By Assumption \ref{ass2} and \ref{ass4}, the uniform ergodic theorem, the bounded support of $V_t$, and the same truncation argument
as in the proof of Theorem \ref{thm1}, we have, for each fixed $K$,
\begin{align*}
\sup_{\theta\in\Theta} \bigl\|R_{n,11}^*(\boldsymbol{\theta})\bigr\|
=o_{p}(1),
\end{align*}
while the tail term satisfies
\begin{align*}
E \sup_{\theta\in\Theta}  \bigl\|R_{n,12}^*(\boldsymbol{\theta})\bigr\|^2
\le
C\sum_{j=K+1}^{\infty}j^{-2}.
\end{align*}
Using the Chebyshev's inequality, we have that $\sup_{\theta\in\Theta}\|R_{n,12}^*(\btheta)\| \xrightarrow{p} 0$ by first letting $n\to\infty$ and then $K\to\infty$. The uniform argument then ensures that $\|R_{n,1}(\widetilde{\btheta}_{f_n})\|=o_p(1)$.
Hence \eqref{eq:boot-derivative-small} follows, and
consequently,
\begin{align*}
S_{n,w}^*(\boldsymbol{\eta},\hbtheta_{f_n})
=
S_{n,w,L}^*(\boldsymbol{\eta})
+
\sqrt{l_n}\,R_n^*(\widetilde{\boldsymbol{\theta}}_{f_n})'
(\hbtheta_{f_n}-\boldsymbol{\theta}^*)
=
S_{n,w,L}^*(\boldsymbol{\eta})+o_{p}(1)
\end{align*}
in $L_2(\Pi,\nu)$, combining \eqref{eq:boot-derivative-small} and Assumption \ref{ass3}.

It remains to study $S_{n,w,L}^*$. For any $h\in L_2(\Pi,\nu)$,
\begin{align*}
\langle S_{n,w,L}^*,h\rangle
=
\frac{1}{\sqrt{l_n}}
\sum_{t=n-l_n+1}^n a_{n,t}(h)V_t,
\end{align*}
where
\begin{align*}
a_{n,t}(h)
:=
e_t(\boldsymbol{\theta}_*)
\int_{\Pi}Q_{t,w}(\boldsymbol{\eta})h(\boldsymbol{\eta})\,\nu(d\boldsymbol{\eta}).
\end{align*}
Conditionally on the sample, $\{V_t\}$ are i.i.d. centered random variables with unit
variance and bounded support. Hence, by the conditional Lindeberg central limit theorem,
together with the same covariance calculation as in Theorem \ref{thm1}, we obtain
\begin{align*}
\langle S_{n,w,L}^*,h\rangle
\stackrel{d}{\longrightarrow}
N(0,\sigma_h^2(\boldsymbol{\theta}_*)),
\quad \text{a.s.}
\end{align*}
The convergence of the finite-dimensional distributions follows from the Cramér--Wold
device. Tightness in $L_2(\Pi,\nu)$ follows exactly as in the proof of Theorem 1 in \cite{escanciano2006generalized},
using the orthogonality of $\{\Psi_j\}$ and Theorem 2.5.2 of \cite{van1996weak}. Therefore,
\begin{align*}
S_{n,w,L} \underset{*}{\Longrightarrow} \widetilde{S}_{w}
\quad \text{in }L_2(\Pi,\nu) \quad \text{a.s.},
\end{align*}
where $\widetilde{S}_{w}$ is the same Gaussian process as in Theorem \ref{thm1}, with
$\boldsymbol{\theta}_*$ replacing $\boldsymbol{\theta}_0$.

Combining the above, we conclude that
\begin{align*}
S_{n,w}^*(\cdot,\hbtheta_{f_n})
\underset{*}{\Longrightarrow} \widetilde{S}_{w}
\quad \text{in }L_2(\Pi,\nu) \quad \text{a.s.}
\end{align*}
This completes the proof.
\hfill $\square$

\bibliographystyle{apalike}
\bibliography{bibliography.bib}

\end{document}